\pgfplotsset{compat=1.15}
\newcommand{\cD}{\mathcal{D}}
\newcommand{\cE}{\mathcal{E}}
\newcommand{\cF}{\mathcal{F}}
\newcommand{\cG}{\mathcal{G}}
\newcommand{\cH}{\mathcal{H}}
\newcommand{\cO}{\mathcal{O}}
\newcommand{\cR}{\mathcal{I}}
\newcommand{\cU}{\mathcal{U}}
\newcommand{\cV}{\mathcal{V}}
\newcommand{\cW}{\mathcal{W}}
\newcommand{\cX}{\mathcal{X}}
\newcommand{\cY}{\mathcal{Y}}
\newcommand{\cZ}{\mathcal{Z}}
\newcommand{\E}{\mathbb{E}}
\renewcommand{\exp}[1]{\mathbb{E}\left[ #1 \right]}
\newcommand{\I}{\mathbb{I}}
\newcommand{\N}{\mathbb{N}}
\renewcommand{\P}{\mathbb{P}}
\newcommand{\Q}{\mathbb{Q}}
\newcommand{\R}{\mathbb{R}}
\newcommand{\e}{\varepsilon}
\newcommand{\fhi}{\varphi}
\newcommand{\tht}{\vartheta}
\newcommand{\lrb}[1]{\left(#1\right)}
\newcommand{\brb}[1]{\bigl(#1\bigr)}
\newcommand{\Brb}[1]{\Bigl(#1\Bigr)}
\newcommand{\bbrb}[1]{\biggl(#1\biggr)}
\newcommand{\lsb}[1]{\left[#1\right]}
\newcommand{\bsb}[1]{\bigl[#1\bigr]}
\newcommand{\Bsb}[1]{\Bigl[#1\Bigr]}
\newcommand{\lcb}[1]{\left\{#1\right\}}
\newcommand{\bcb}[1]{\bigl\{#1\bigr\}}
\newcommand{\Bcb}[1]{\Bigl\{#1\Bigr\}}
\newcommand{\bce}[1]{\bigl\lceil#1\bigr\rceil}
\newcommand{\labs}[1]{\left\lvert#1\right\rvert}
\newcommand{\babs}[1]{\bigl\lvert#1\bigr\rvert}
    \colorlet{RED}{red}
    \newcommand{\tccheck}{}
\DeclareMathOperator*{\argmax}{argmax}
\newcommand{\dif}{\,\mathrm{d}}
\DeclareMathOperator*{\gft}{GFT}
\newcommand{\GFT}{\mathrm{GFT}}
\newcommand{\lip}{Lipschitz}
\newcommand{\leb}{Lebesgue}
\newcommand{\fracc}[2]{#1/#2}
\newcommand{\s}{\subset}
\newcommand{\m}{\setminus}
\newcommand{\iop}{\infty}
\newcommand{\xs}{x^\star}
\newcommand{\ps}{p^\star}
\newcommand{\qs}{q^\star}
\newcommand{\Rs}{R^\star}
\newcommand{\fA}{\sA}
\newcommand{\fG}{\sG}
\newcommand{\fP}{\sP}
\newcommand{\sA}{\mathscr{A}}
\newcommand{\sG}{\boldsymbol{\mathscr{G}}}
\newcommand{\sP}{\mathscr{P}}
\newcommand{\sS}{\mathscr{S}}
\newcommand{\slf}{\tilde{\mathscr{f}}}
\newcommand{\slg}{\mathscr{g}}
\newcommand{\slh}{\mathscr{h}}
\newcommand{\bmu}{\boldsymbol{\mu}}
\renewcommand{\hat}[1]{\widehat{#1}}
\newcommand{\sPiid}{\sP_{\mathrm{iid}}}
\newcommand{\sPiv}{\sP_{\mathrm{iv}}}
\newcommand{\sPbd}{\sP_{\mathrm{bd}}^M}
\newcommand{\sPivbd}{\sP_{\mathrm{iv+bd}}^M}
\newcommand{\sPadv}{\sP_{\mathrm{adv}}}
\newcommand{\ber}{\mathrm{Ber}}
\newcommand{\bx}{\boldsymbol{\bx}}
\newcommand{\ks}{k^\star}
\newcommand{\uno}{(\mathrm{I})}
\newcommand{\due}{(\mathrm{II})}
\newcommand{\tre}{(\mathrm{III})}
\newcommand{\quattro}{(\mathrm{IV})}
\newcommand{\calR}{\mathcal{R}}
\renewcommand{\tilde}{\widetilde}
\newcommand{\ind}{\mathbb{I}}
\newtheorem{theorem}{Theorem}
\newtheorem{lemma}{Lemma}
\newtheorem*{theorem*}{Theorem}
\theoremstyle{definition}
\newtheorem{definition}{Definition}
\title{Bilateral Trade: A Regret Minimization Perspective\footnote{A preliminary version of this paper appeared at the 22nd ACM Conference on Economics and Computation \citep{cesabianchi2021regret}.} }
\author[1]{Nicol\`o Cesa-Bianchi}
\author[2,3]{Tommaso Cesari}
\author[1,4]{Roberto Colomboni}
\author[5]{Federico Fusco}
\author[5]{Stefano Leonardi}
\affil[1]{Universit\`a degli Studi di Milano, Milano, Italy}
\affil[2]{Toulouse School of Economics (TSE), Toulouse, France}
\affil[3]{Artificial and Natural Intelligence Toulouse Institute (ANITI), Toulouse, France}
\affil[4]{Istituto Italiano di Tecnologia, Genova, Italy}
\affil[5]{Sapienza Universit\`a di Roma, Roma, Italy}
\begin{document}

\maketitle

\begin{abstract}
 Bilateral trade, a fundamental topic in economics, models the problem of intermediating between two strategic agents, a seller and a buyer, willing to trade a good for which they hold private valuations. 
In this paper, we cast the bilateral trade problem in a regret minimization framework over $T$ rounds of seller/buyer interactions, with no prior knowledge on their private valuations.
Our main contribution is a complete characterization of the regret regimes for fixed-price mechanisms with different feedback models and private valuations, using as a benchmark the best fixed-price in hindsight. More precisely, we prove the following tight bounds on the regret:
\begin{itemize}
    \item $\Theta\brb{ \sqrt{T} }$ for full-feedback (i.e., direct revelation mechanisms).
    \item $\Theta\brb{ T^{2/3} }$ for realistic feedback (i.e., posted-price mechanisms) and independent seller/buyer valuations with bounded densities.
    \item  $\Theta(T)$ for realistic feedback and seller/buyer valuations with bounded densities.
    \item  $\Theta(T)$ for realistic feedback and independent seller/buyer valuations.
    \item $\Theta(T)$ for the adversarial setting.
\end{itemize}  
\end{abstract}

\section{Introduction \tccheck}

In the bilateral trade problem, two strategic agents ---a seller and a buyer--- wish to trade some good. They both privately hold a personal valuation for it and strive to maximize their respective quasi-linear utility. 
The burden of designing a mechanism to reach an agreement is usually delegated to a third party. This scenario arises naturally in many internet applications, such as ridesharing systems like Uber or Lyft, where trades between sellers (drivers) and buyers (riders) are managed by a mechanism designed by the platform.

In general, an ideal mechanism for the bilateral trade problem would optimize the efficiency, i.e., the social welfare resulting by trading the item, while enforcing incentive compatibility (IC) and individual rationality (IR). 
The assumption that makes a two-sided mechanism design more complex than its one-sided counterpart is budget balance (BB): the mechanism cannot subsidize or make a profit from the market. 
Unfortunately, as Vickrey observed in his seminal work \citep{Vickrey61}, the optimal incentive compatible mechanism maximizing social welfare for bilateral trade may not be budget balanced.

A more general result due to Myerson and Satterthwaite \citep{MyersonS83} shows that there are some problem instances where a fully efficient mechanism for bilateral trade that satisfies IC, IR, and BB does not exist. 
This impossibility result holds even if prior information on the buyer and seller's valuations is available, the truthful notion is relaxed to Bayesian incentive compatibility (BIC), and the budget balance constraint is loosened to weak budget balance (WBB).
To circumvent this obstacle, a long line of research has focused on the design of approximating mechanisms that satisfy the above requirements while being nearly efficient. 

These approximation results build on a Bayesian assumption: seller and buyer valuations are drawn from two distributions, which are both known to the mechanism designer. 
Although in some sense necessary ---without any information on the priors there is no way to extract any meaningful approximation result  \citep{Duetting20}--- this assumption is unrealistic.  
Following a recent line of research \citep{Cesa-BianchiGM15,LykourisST16,DaskalakisS16}, in this work we study this basic mechanism design problem in a regret minimization setting.
Our goal is bounding the total loss in efficiency experienced by the mechanism in the long period by learning the salient features of the prior distributions. 

At each time $t$, a new seller/buyer pair arrives. 
The seller has a private valuation $S_t \in [0,1]$ representing the smallest price she is willing to accept in order to trade. 
Similarly, the buyer has a private value $B_t \in [0,1]$ representing the highest price that she will pay for the item. 
The mechanism sets a price $P_t \in [0,1]$ which results in a trade if and only if $S_t \le P_t \le B_t$. 

There are two common utility functions that reflect the performance of the mechanism at each time step: the social welfare, which sums the utilities of the two players after the trade (and remains equal to the seller's valuation if no trade occurs), and the gain from trade, consisting in the net gain in the utilities. 
In formulae, for each price $p\in [0,1]$,
\begin{itemize}
    \item \textbf{Social Welfare}: $ \mathrm{SW}_t(p) := \mathrm{SW} ( p, S_t, B_t ) := S_t + (B_t-S_t) \I \{ S_t \leq p \leq B_t \}$.
     \item \textbf{Gain from Trade}: $
        \gft\nolimits_t(p)
    := 
        \gft(p,S_t,B_t)
    :=
        (B_t-S_t)\I \{ S_t \leq p \le B_t \}$.
\end{itemize}
We begin by investigating the standard assumption in which
$(S_1,B_1),(S_2,B_2),\ldots$ are i.i.d.\ random vectors, supported in $[0,1]^2$, representing the valuations of seller and buyer respectively (stochastic i.i.d.\ setting). 
We also consider the case where $(S_1,B_1),(S_2,B_2),\ldots$ is an arbitrary deterministic process $(s_1,b_1),(s_2,b_2),\dots$ (adversarial setting).

In our online learning framework, we aim at minimizing the \emph{regret} of the mechanism over a time horizon $T$:
\[
    \max_{p \in [0,1]} \exp{\sum_{t=1}^T \gft\nolimits_t(p) - \sum_{t=1}^T \gft\nolimits_t(P_t)} \;.
\]
Note that since $\gft\nolimits_t(P_t) = \mathrm{SW}_t(P_t) -S_t$ and $S_t$ does not depend on the choice of $p$,  gain from trade and social welfare lead to the same notion of regret. 

Hence, the regret is the difference between the expected total performance of our algorithm, which can only \emph{sequentially learn} the distribution,
and our reference benchmark, corresponding to the the best fixed-price strategy assuming \emph{full knowledge} of the distribution. 
Our main goal is to design strategies with asymptotically vanishing time-averaged regret with respect to the best fixed-price strategy or, equivalently, regret sublinear in the time horizon $T$. 

The class of fixed-price mechanisms is of particular importance in bilateral trade as they are simple to implement, clearly truthful, individually rational, budget balanced, and enjoy the desirable property of asking the agents very little information. Moreover, it can be shown that fixed prices are the {\em only} direct revelation mechanisms which enjoy budget balance, dominant strategy incentive compatibility, and ex-post individual rationality \citep{Colini-Baldeschi16}.

To complete the description of the problem, we need to specify the feedback obtained by the mechanism after each sequential round. We propose two main feedback models: 
\begin{itemize}
    \item {\em Full feedback.} In the full-feedback model, the pair $(S_t,B_t)$ is revealed to the mechanism after the $t$-th trading round. The information collected by this feedback model corresponds to {\em direct revelation mechanisms}, where the agents publicly declare their valuations in each round, and the price proposed by the mechanism at time $t$ only depends on past bids.
    \item {\em Realistic feedback.}    In the harder realistic feedback model, only the relative order between $S_t$ and $P_t$ and between $B_t$ and $P_t$ are revealed after the $t$-th round. This model corresponds to {\em posted price mechanisms}, where seller and buyer separately accept or refuse the posted price. The price computed at time $t$ only depends on past bids, and the values $S_t$ and $B_t$ are never revealed to the mechanisms. 
\end{itemize}

\subsection{Overview of our Results \tccheck}

In \cref{sec:adversarial,s:real-feddb,s:full}, we investigate the stochastic setting (under various assumptions), the adversarial setting, and how regret bounds change depending on the quality of the received feedback.
In all cases, we provide matching upper and lower bounds.
In particular, our positive results are constructive: explicit algorithms are given in each case.
More precisely, we present (see \cref{tab:general} for a summary):
\begin{table}
\centering
\begin{tabular}{l|l|l|l|l|l|}
\cline{2-6}
\multicolumn{1}{c|}{}      & \multicolumn{4}{c|}{\textbf{Stochastic (iid)}}                                                                              & \multicolumn{1}{c|}{\textbf{Adversarial}} \\ \cline{2-6} 
\multicolumn{1}{c|}{}      & \multicolumn{1}{c|}{\textbf{iid}} & \multicolumn{1}{c|}{\textbf{+iv}} & \multicolumn{1}{c|}{\textbf{+bd}} & \multicolumn{1}{c|}{\textbf{+iv+bd}} & \multicolumn{1}{c|}{\textbf{adv}}         \\ \hline
\multicolumn{1}{|l|}{\textbf{Full}} & $T^{1/2}$ (Thm~\ref{thm:upper_full})                & $T^{1/2} \vphantom{e^{e^{e^{e}}}}$                & $T^{1/2}$                & $T^{1/2}$ (Thm~\ref{thm:lower-full})                   & $T$ (Thm~\ref{thm:adv-lower})                              \\ \hline
\multicolumn{1}{|l|}{\textbf{Real}} & $T$                      & $T$ (Thm~\ref{thm:lower-real-iv})                      & $T$ (Thm~\ref{thm:lower-real-bd})                      & $T^{2/3} \vphantom{e^{e^{e^{e}}}}$ (Thms~\ref{thm:upper_real}+\ref{thm:lower-real-iv+bd})                   & $T$                              \\ \hline
\end{tabular}
\caption{Our main results for fixed-price mechanisms. 
The rates are both upper and lower bounds.
The slots without references are immediate consequences of the others.}
\label{tab:general}
\end{table}
\begin{itemize}
    \item \cref{alg:followTheBestPrice} (Follow the Best Price) for the full-feedback model achieving a $\cO(T^{1/2})$ regret in the stochastic (iid) setting (\cref{thm:upper_full});
    this rate cannot be improved, not even under some additional natural assumptions (\cref{thm:lower-full}).
    \item \cref{alg:meta} (Scouting Bandits) for the harder realistic-feedback model achieving a $\cO(T^{2/3})$ regret in a stochastic (iid) setting in which the valuations of the seller and the buyer are independent of each other (iv) and have bounded densities (bd) (\cref{thm:upper_real});
    this rate cannot be improved (\cref{thm:lower-real-iv+bd}).
    \item Impossibility results:
    \begin{itemize}
        \item For the realistic-feedback model, if either the (iv) or the (bd) assumptions are dropped from the previous stochastic setting, no strategy can achieve sublinear worst-case regret (\cref{thm:lower-real-iv,thm:lower-real-bd}).
        \item In an adversarial setting, no strategy can achieve sublinear worst-case regret, not even in the simple full-feedback model (\cref{thm:adv-lower}).
    \end{itemize}
\end{itemize}
In \cref{s:wbb}, we depart from the budget balance setting in which the learner posts the same price to both the seller and the buyer.
We consider a weak budget balance (WBB) setting in which (possibly) distinct prices $p \le p'$ can be posted, $p$ to the seller, and $p'$ to the buyer.
We design \cref{alg:etc} (Scouting Blindits) which, leveraging the higher amount of information available in the WBB setting, can break the linear lower bound of \cref{thm:lower-real-bd} (\cref{t:wbb}).

Finally, in \cref{s:one-bit} we investigate the case in which the feedback is limited to one single bit (i.e., whether or not a trade occurred), showing a striking difference between the BB and WBB cases.

\subsection{Technical Challenges \tccheck}

In this section, we sum up the technical challenges for various instances of our problem.
    
\paragraph{Full feedback.} The full-feedback model fits nicely in the learning with expert advice framework \citep{Nicolo06}.
Each price $p \in [0,1]$ can be viewed as an expert, and the revelation of $S_t$ and $B_t$ allows the mechanism to compute $\gft\nolimits_t(p)$ for all $p$, including the mechanism's own reward $\gft\nolimits_t(P_t)$. 
A common approach to reduce the cardinality of a continuous expert space is to assume some regularity (e.g., \lip{}ness) of the reward function, so that a finite grid of representative prices can be used. This approach yields a $\widetilde{\cO}(\sqrt{T})$ bound under density boundedness assumptions on the joint distribution of the seller and the buyer. 
By exploiting the structure of the reward function $\E \bsb{\gft\nolimits_t(\cdot)}$, we obtain a better regret bound (by a log factor) without any assumptions on the distribution (other than iid).
In \cref{e:decomp-one}, we show how to decompose the expression of the expected gain from trade in pieces that can be quickly learned via sampling. The full feedback received in each new round is used to refine the estimate of the actual gain from trade as a function of the price, while the posted prices are chosen so to maximize it. A {\em Follow the Leader} strategy is shown to achieve a $\cO(\sqrt{T})$ bound in the stochastic (iid) setting (\cref{thm:upper_full}).
This holds for arbitrary joint distributions of the seller and the buyer.
In particular, even when the buyer and seller have a correlated behavior.
The main issue for the lower bound is that the (expected) gain from trade cannot be chosen arbitrarily: we can only control its shape indirectly as a function of the distribution of the seller/buyer pair.
By designing a suitable family of such distributions, we build a reduction showing that the full-feedback bilateral trade problem is harder than a corresponding $2$-action partial monitoring game with a known $\Omega\brb{ \sqrt{T} }$ lower bound (\cref{thm:lower-full}).

\paragraph{Realistic Feedback.} 
Here, at each time $t$, only $\I\{S_t \le P_t\}$ and $\I\{P_t \le B_t\}$ are revealed to the learner. 
In contrast to the full-feedback model, this is not enough to reconstruct the gain from trade $\GFT_t$ at time $t$: if the trade does not occur, it is unclear which prices would have resulted in a trade.
Moreover, in contrast to bandit problems \citep{Nicolo06}, this feedback is not even enough to determine $\GFT_t(P_t)$: if the trade occurs, there is no way to infer the difference $B_t-S_t$ directly.
Thus, we cannot directly rely on known bandits tools to tackle the two competing goals of estimating the underlying distributions (exploration) while optimizing the estimated gain from trade (exploitation).
Instead, we show how to decompose the expected gain from trade at price $p$ into a \emph{global} part that can be quickly estimated by uniform sampling on the $[0,1]$ interval, and a \emph{local} part that can be learned by posting $p$. 
\cref{thm:upper_real} shows that our \cref{alg:meta} (Scouting Bandits) can take advantage of this decomposition by relying on any bandit algorithm to learn the local part of the expected gain from trade.
We derive a sublinear regret of $ \cO(T^{2/3})$ in a stochastic (iid) setting in which the valuations of the seller and the buyer are independent of each other (iv) and have bounded densities (bd).
The lower bound presents challenges similar to those of the full-feedback model, with additional hurdles due to the specific nature of the realistic feedback that lead to a harder $\Omega(T^{2/3})$ rate (\cref{thm:lower-real-iv+bd}).
Dropping the (iv) assumption leads to a pathological \emph{lack of observability} phenomenon, in which it is impossible to distinguish between two scenarios with significantly different optimal prices (\cref{thm:lower-real-bd}).
Dropping the (bd) assumption leads to a {\em needle in a haystack}, a different pathological phenomenon in which all prices but one suffer a high regret, and it is essentially impossible to find this optimal price among a continuum of suboptimal prices (\cref{thm:lower-real-iv}).

\paragraph{Adversarial setting.}
Here, the valuations of the buyer and the seller form an arbitrary deterministic process generated by an oblivious adversary.
In this setting, learning is impossible.
Indeed, using a construction inspired by the Cantor ternary set, we show that even under a full-feedback model, no strategy can lead to a sublinear worst-case regret
(\cref{thm:adv-lower}).

\paragraph{Lower Bound Techniques.}
Due to their technical nature, the proofs of the lower bounds (\Cref{thm:lower-full,thm:lower-real-bd,thm:lower-real-iv,thm:lower-real-iv+bd,thm:adv-lower}) are only sketched in the main text. 
Detailed versions of all of them are provided in the Appendix where, inspired by partial monitoring, we develop a general setting for sequential games that subsumes, in particular, all instances of our bilateral trade problem (Appendix~\ref{s:model-appe}).
Within this setting, we then build reductions by mapping instances of our problem to other known partial monitoring games.
These reductions rely on two key lemmas, introduced in Appendix~\ref{s:keylemmas}:
our Embedding and Simulation lemmas
(\cref{l:embedding,l:simulation}) are useful tools to manipulate rewards and feedbacks, allowing to build chains of progressively easier games leading to games with known minimax regrets.

\subsection{Further Related Work \tccheck}
The study of the bilateral trade problem dates back to the already mentioned seminal works of Vickrey \citep{Vickrey61} and Myerson and Satterthwaite \citep{MyersonS83}.
A more recent line of research focuses on Bayesian mechanisms that achieve the IC, BB, and IR requirements while approximating the optimal social welfare or the gain form trade. Blumrosen and Dobzinski~\citep{BlumrosenD14} proposed the \emph{median mechanism} that sets a posted price equal to the median of the seller distribution and shows that this mechanism obtains an approximation factor of $2$ to the optimal social welfare. Subsequent work by the same authors \citep{BlumrosenD16} 
improved the approximation guarantee to $e/(e-1)$ through a randomized mechanism whose prices depend on the seller distribution in a more intricate way. 
In~\citep{Colini-Baldeschi16} it is demonstrated that all DSIC mechanisms that are BB and IR  must post a fixed price to the buyer and to the seller. 
The same result has been previously proven under stronger assumptions in \citep{hagerty1987robust}.

In a different research direction aimed to characterize the information theoretical requirements of two-sided markets mechanisms, \citep{Duetting20} show that setting the price equal to a single sample from the seller distribution gives a $2$-approximation to the optimal social welfare.  
In a parallel line of work, the harder objective of approximating the \emph{gain from trade} has been considered. 
An asymptotically tight fixed-price $O\big(\log\frac 1r\big)$ approximation  bound is also achieved in \citep{Colini-Baldeschi17}, with $r$ being the probability that a trade happens (i.e., the value of the buyer is higher than the value of the seller).  A BIC $2$-approximation of the second best with a simple mechanism is obtained in \citep{BrustleCWZ17}. 
    
In the following, we discuss the relationship between the approximation results mentioned above and the regret analysis we develop in this work that compares online learning mechanisms against the best ex-ante fixed-price mechanism. First of all, in the realistic feedback setting, the approximation mechanisms for bilateral trade cannot be easily implemented.  For example, the single sample $2$-approximation to the optimal social welfare \citep{Duetting20} requires multiple rounds of interaction in order to obtain, approximately, a random sample from the distribution.  The median mechanism of \citep{BlumrosenD14} requires an even larger number of rounds in order to estimate the median of the seller distribution.  
Furthermore, here we note that these two more demanding approaches may yield worst performances than the best ex-ante fixed price\footnote{Consider a seller with value $\varepsilon>0$ or $0$ with equal probability and a buyer with value $1$. The best fixed price has welfare of $1$. For small $\varepsilon$, the median and the sample mechanism, respectively, obtains a welfare close to $1/2$ and $3/4$.}. This implies that there are instances where our online learning approach converges to a mechanism that is strictly better than the median or sample mechanisms, even assuming they have full knowledge of the underlying distributions.

There is a vast body of literature on regret analysis in (one-sided) dynamic pricing and online posted price auctions ---see, e.g., the excellent survey published by \citep{den2015dynamic} and the tutorial slides by \citep{SZ15}. In their seminal paper, Kleinberg and Leighton prove a $O(T^{2/3})$ upper bound (ignoring logarithmic factors) on the regret in the adversarial setting \citep{kleinberg2003value}. Later works show simultaneous multiplicative and additive bounds on the regret when prices have range $[1,h]$ \citep{blum2004online,blum2005near}. These bounds have the form $\varepsilon\,G_T^{\star} + O\big((h\ln h)/\varepsilon^2\big)$ ignoring $\ln\ln h$ factors, where $G_T^{\star}$ is the total revenue of the optimal price $p^{\star}$. Recent improvements on these results prove that the additive term can be made $O(p^{\star}\brb{ \ln h)/\varepsilon^2 }$, where the linear scaling is now with respect to the optimal price rather than the maximum price $h$ \citep{bubeck2017online}. Other variants consider settings in which the number of copies of the item to sell is limited \citep{agrawal2014bandits,babaioff2015dynamic,badanidiyuru2013bandits}, buyers act strategically in order to maximize their utility in future rounds \citep{amin2013learning,devanur2014perfect,mohri2014optimal,drutsa2018weakly}, or there are features associated with the goods on sale \citep{DBLP:journals/mansci/CohenLL20}. In the stochastic setting, previous works typically assume parametric \citep{broder2012dynamic}, locally smooth \citep{kleinberg2003value}, or piecewise constant demand curves \citep{cesa2019dynamic,den2020discontinuous}.

\section{The Bilateral Trade learning protocol \tccheck} 
\label{s:bil-tr-model}
In this section, we present the learning protocol for the sequential problem of bilateral trade (see Learning Protocol~\ref{a:learning-model}).
We recall that the reward collected from a trade is the gain from trade, defined for all $p,s,b \in [0,1]$, by 
\[
    \gft(p,s,b) 
:= 
    (b-s) \I \{s\le p \le b\} \;.
\]
{
\renewcommand*{\algorithmcfname}{Learning Protocol}
\begin{algorithm}
\For
{%
    time $t=1,2,\ldots$
}
{
    a new seller/buyer pair arrives with (hidden) valuations $(S_t,B_t) \in [0,1]^2$\;
    the learner posts a price $P_t \in [0,1]$\;
    the learner receives a (hidden) reward $ \GFT_t(P_t) \in [0,1]$, where $\GFT_t(\cdot):= \gft (\cdot, S_t,B_t)$\;
    feedback $Z_t$ is revealed\;
}
 \caption{Bilateral Trade}
 \label{a:learning-model}
\end{algorithm}
}

At each time step $t$, a seller and a buyer arrive with privately held valuations: $S_t\in [0,1]$ for the seller and $B_t \in [0,1]$ for the buyer.
The learner then posts a price $P_t \in [0,1]$ and a trade occurs if and only if $S_t \le P_t \le B_t$.
When this happens, the learner gains a reward $\gft (P_t, S_t,B_t)$ but, instead of observing this reward, the learner only observes the feedback $Z_t$.
The nature of the sequence of valuations $(S_1,B_1),(S_2,B_2),\ldots$ and feedback $Z_1,Z_2,\ldots$ depends on the specific instance of the problem and is described below.

The goal of the learner is to determine a strategy $\alpha$ generating the prices $P_1,P_2, \ldots$ (as in Learning~Protocol~\ref{a:learning-model}) achieving sublinear \emph{regret}
\[
    R_T(\alpha)
:=
    \max_{p \in [0,1]} \E\lsb{\sum_{t=1}^T \gft ( p, S_t,B_t)  - \sum_{t=1}^T \gft ( P_t, S_t,B_t)}
    \;,
\]
where the expectation is taken with respect to the sequence of buyer and seller valuations, and (possibly) the internal randomization of $\alpha$.
To lighten the notation, we denote by $\ps$ (one of) the $p\in[0,1]$ maximizing the previous expectation. 
Such a $\ps$ always exists (for a proof of this fact, see Appendix~\ref{s:existenceMax}).

We now introduce several instances of bilateral trade, depending on the type of the received feedback and the nature of the environment.

\subsection{Feedback} 

\begin{description}
    \item[Full feedback:] the feedback $Z_t$ received at time $t$ is the entire seller/buyer pair $(S_t,B_t)$;
    in this setting, the seller and the buyer reveal their valuations at the end of a trade.
    \item[Realistic feedback:] the feedback $Z_t$ received at time $t$ is just the pair $\brb{ \I\{S_t\le P_t\}, \, \I\{P_t \le B_t \} }$;
    in this setting, the seller and the buyer only reveal whether or not they accept the trade at price $P_t$.
\end{description}

\subsection{Environment} 
\begin{description}
    \item[Stochastic (iid):] 
    $(S_1,B_1),(S_2,B_2),\ldots$ is an i.i.d.\ sequence of seller/buyer pairs,
    while $S_t$ and $B_t$ could be (arbitrarily) correlated.
We will also investigate the (iid) setting under the following further assumptions.
    \begin{description}
        \item[Independent valuations (iv):]
        $S_1$ and $B_1$ are independent of each other.
        \item[Bounded density (bd):] 
        $(S_1,B_1)$ admits a joint density bounded by some constant $M$.
    \end{description}
    \item[Adversarial (adv):] 
    $(S_t,B_t)_{t\in\N}$ is an arbitrary deterministic sequence $(s_t,b_t)_{t\in\N} \s [0,1]^2$.
\end{description}

\section{The Decomposition Lemma}
\label{s:decomp}

In this section, we present a key lemma whose purpose is to decompose the gain from trade into terms that depend only on the outcome of yes/no questions.
This result allows leveraging DKW inequalities in the proofs of our upper bounds. 
Moreover, it shows how to use the limited feedback available to reconstruct the expected gain from trade in the realistic feedback settings.
Furthermore, it leads to an easy proof of the existence of the maximum of the expected gain from trade, under no assumptions on the seller and buyer distributions.
We defer the proofs of the results of this section to Appendix~\ref{s:decomp-appe}.

\begin{restatable}[Decomposition lemma]{lemma}{decomplemma}
\label{l:decmp}
Fix any price $p\in [0,1]$. Then, for any $s,b\in [0,1]$,
\begin{equation}
    \label{e:decomp-zero}
    \GFT(p,s,b)
=
    \int_{[p,1]} \I[s \le p \le \lambda \le b] \dif\lambda
    +
    \int_{[0,p]} \I[s \le \lambda \le p \le b] \dif\lambda \;.
\end{equation}
Furthermore, let $S$ and $B$ be two $[0,1]$-valued random variables:
\begin{itemize}
    \item Then
    \begin{equation}
    \label{e:decomp-one}
        \E \bsb{ \GFT(p,S,B) }
    =
        \int_{[p,1]} \P[S\le p \le \lambda \le B] \dif\lambda
        +
        \int_{[0,p]} \P[S\le \lambda \le p \le B] \dif\lambda \;.
    \end{equation}
    \item If $U$ is uniform on $[0,1]$ and independent of $(S,B)$, then
    \begin{equation}
        \label{e:decomp-two}
        \E \bsb{ \GFT(p,S,B) }
    =
        \P[S \le p \le U \le B]
        +
        \P[S \le U \le p \le B] \;.
    \end{equation}
    \item If $U$ is uniform on $[0,1]$ and $S,B,U$ are independent, then
    \begin{equation}
        \label{e:decomp-three}
        \E \bsb{ \GFT(p,S,B) }
    =
        \P[S \le p]\P[p \le U \le B]
        +
        \P[p \le B]\P[S \le U \le p] \;.
    \end{equation}
    \item If $U$ is uniform on $[p,1]$, $V$ is uniform on $[0,p]$ and $(U,V)$ is independent of $(S,B)$, then
    \begin{equation}
        \label{e:decomp-four}
        \E\bsb{ \GFT(p,S,B) } 
    = 
        \E\bsb{ (1-p) \I \{S \le p \le U \le B \} } + \E\bsb{ p \I \{S \le V \le p \le B \} } \;.
    \end{equation}
\end{itemize}
\end{restatable}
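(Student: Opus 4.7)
The plan is to prove (\ref{e:decomp-zero}) by direct verification and then derive each expectation identity by a combination of Fubini--Tonelli, a uniform-variable encoding of the Lebesgue measure, and independence/factoring. Concretely, I would first fix $p,s,b \in [0,1]$ and do a case analysis on the signs of $p-s$ and $b-p$. If $s \le p \le b$, the first integrand equals $\I\{\lambda \in [p,b]\}$ and contributes $b-p$, while the second equals $\I\{\lambda \in [s,p]\}$ and contributes $p-s$, whose sum is $b-s = \GFT(p,s,b)$. If $s > p$ or $p > b$, both integrands vanish identically in $\lambda$, matching $\GFT(p,s,b) = 0$. Equation (\ref{e:decomp-one}) then follows from (\ref{e:decomp-zero}) by taking expectation and exchanging it with the $\lambda$-integrals, which is justified by Fubini--Tonelli since the integrands are nonnegative and bounded.

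For (\ref{e:decomp-two}), I would use that if $U \sim \mathrm{Unif}[0,1]$ is independent of $(S,B)$, then $\int_{[p,1]} g(\lambda) \dif\lambda = \E\bsb{g(U)\, \I\{U \ge p\}}$ for any bounded measurable $g$. Applying this with $g(\lambda) = \P[S \le p,\, \lambda \le B]$, and using independence of $U$ from $(S,B)$ to absorb $\I\{S \le p\}$ and $\I\{U \le B\}$ into a single joint probability, produces $\P[S \le p \le U \le B]$; the analogous manipulation applied to the second summand of (\ref{e:decomp-one}) yields $\P[S \le U \le p \le B]$. Equation (\ref{e:decomp-three}) then follows from (\ref{e:decomp-two}) by factoring each probability under the further independence of $S$ and $B$: since $S$ is independent of $(U,B)$, $\P[S \le p \le U \le B] = \P[S \le p]\,\P[p \le U \le B]$, and symmetrically the second term factors as $\P[S \le U \le p]\,\P[p \le B]$.

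Finally, (\ref{e:decomp-four}) is obtained from (\ref{e:decomp-one}) by rescaling: I would replace $\dif\lambda$ on $[p,1]$ by $(1-p)$ times the density of $U \sim \mathrm{Unif}[p,1]$, and analogously $\dif\lambda$ on $[0,p]$ by $p$ times the density of $V \sim \mathrm{Unif}[0,p]$; the normalizing constants cancel with the prefactors $(1-p)$ and $p$ in front of the expectations, and the constraints $\lambda \ge p$ (resp.\ $\lambda \le p$) become automatic once $\lambda$ is replaced by $U$ (resp.\ $V$). I expect the main obstacle to be purely notational: carefully shuffling the constraints $\{S \le p\}$, $\{\lambda \ge p\}$, and $\{\lambda \le B\}$ between indicators and probabilities while applying Fubini, and handling the degenerate cases $p \in \{0,1\}$ in which one of the sub-intervals in (\ref{e:decomp-four}) collapses and the corresponding summand is simply zero.
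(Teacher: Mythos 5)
Your proposal is correct and follows essentially the same route as the paper's proof: verify the pointwise identity \eqref{e:decomp-zero} (you do it by case analysis, the paper by writing $b-s=\int_s^b \dif\lambda$, a purely presentational difference), take expectations via Fubini for \eqref{e:decomp-one}, re-express the Lebesgue integrals through the uniform variables and independence for \eqref{e:decomp-two}--\eqref{e:decomp-three}, and rescale to $\mathrm{Unif}[p,1]$ and $\mathrm{Unif}[0,p]$ (treating $p\in\{0,1\}$ separately) for \eqref{e:decomp-four}. No gaps.
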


We now present a corollary of our Decomposition lemma relating the regularity of the distributions (specifically, the boundedness of the densities) to the regularity of the expected gain from trade (i.e., its \lip{}ness).

\begin{restatable}{corollary}{decompcorollarylip}
\label{l:decomp-lip}
If $S$ and $B$ are $[0,1]$-valued random variables such that $(S,B)$ admits joint density $f$ bounded above by some constant $M$, then $\E\bsb{\gft(\cdot,S,B)}$ is $4M$-Lipschitz.
\end{restatable}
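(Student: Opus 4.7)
The plan is to invoke identity~\eqref{e:decomp-one} of the Decomposition Lemma. Writing $G(p) := \E\bsb{\gft(p,S,B)}$, that identity gives $G = A + B$, where
\[
    A(p) := \int_p^1 \P[S \le p,\, B \ge \lambda]\dif\lambda,
    \qquad
    B(p) := \int_0^p \P[S \le \lambda,\, B \ge p]\dif\lambda.
\]
I will show that each of $A$ and $B$ is $2M$-Lipschitz, so that the triangle inequality yields the $4M$ bound on $G$.

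The only preliminary observation is that the bound $f \le M$ on the joint density forces the marginal densities $f_S,f_B$ to be bounded by $M$ as well (each is an integral of $f$ over an interval of length $1$), and $\int_{[0,1]^2} f = 1$ implies $M \ge 1$. In particular, $\P[p < S \le q] \le M(q-p)$ and $\P[p \le B \le q] \le M(q-p)$ for all $0 \le p \le q \le 1$.

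For $0 \le p \le q \le 1$, I would write
\[
    A(q) - A(p) = \int_q^1 \P[p < S \le q,\, B \ge \lambda] \dif\lambda \;-\; \int_p^q \P[S \le p,\, B \ge \lambda] \dif\lambda,
\]
separating the contribution from the change of integrand and that from the change of the lower limit. The first integrand is at most $\P[p < S \le q] \le M(q-p)$, so the first integral is at most $M(q-p)(1-q) \le M(q-p)$; the second integral is bounded in absolute value by $\int_p^q 1\,\dif\lambda = q-p \le M(q-p)$, using $M \ge 1$. Hence $|A(q) - A(p)| \le 2M(q-p)$. A symmetric splitting of $B(q) - B(p)$, with the change-of-integrand piece controlled via the marginal bound on $B$, yields $|B(q) - B(p)| \le 2M(q-p)$, completing the proof.

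There is no conceptual obstacle: the argument is entirely bookkeeping from \eqref{e:decomp-one}. The constant $4M$ is almost certainly loose — a direct case analysis of $\gft(q,s,b) - \gft(p,s,b)$ already gives a $2M$-Lipschitz constant, and differentiating $G$ under the integral sign on $[0,p]\times[p,1]$ gives the sharper $M/2$ — but $4M$ is the cleanest constant that drops out of \eqref{e:decomp-one}, and the $\cO(M)$ scaling is all that the subsequent applications require.
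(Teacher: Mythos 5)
Your proof is correct and follows essentially the same route as the paper's: both start from \eqref{e:decomp-one}, split each of the two integrals' increments into a change-of-integrand piece bounded by $M|p-q|$ via the density bound and a change-of-limits piece bounded by $|p-q|\le M|p-q|$, and sum to $4M$. The only (welcome) cosmetic difference is that you make the fact $M\ge 1$ explicit, which the paper uses implicitly in its final inequality.
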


\section{Full-Feedback Stochastic (iid) Setting \tccheck}
\label{s:full}

We begin by considering the full-feedback model (corresponding to direct revelation mechanisms) in a stochastic environment, where the seller/buyer pairs $(S_1,B_1), (S_2,B_2), \ldots$ are $[0,1]^2$-valued i.i.d.\ random vectors, without any further assumptions on their common distribution (in particular, $S_1$ and $B_1$ could be arbitrarily correlated). 
Here, at the end of each round, sellers and buyers declare their actual valuations to the learner. 
The incentive-compatibility is guaranteed by the fact that the posted prices do not depend on the declared valuations at each specific round, but only on past ones, so that there is no point in misreporting.

In \cref{sec:DKW}, we show that a {\em Follow the Leader} approach, which we call Follow the Best Price (FBP, \cref{alg:followTheBestPrice}), achieves a $O\brb{ \sqrt{T} }$ upper bound.
In \cref{sec:lower_bound_indep}, we provide a matching $\Omega\brb{ \sqrt T }$ lower bound rate.

\subsection{Follow the Best Price (FBP) \tccheck}
\label{sec:DKW}

We begin by presenting our Follow the Best Price (FBP) algorithm. It consists in posting the best price with respect to the samples that have been observed so far. Notably, it does not need preliminary knowledge of the time horizon $T$.

\begin{algorithm}
    \SetKwInput{kwInit}{initialization}
    \kwInit{let $P_1 \gets 1/2$;}
    \For{$t=1,2, \ldots$}
    {
        post price $P_t$\;
        receive feedback $(S_t,B_t)$\;
        pick $P_{t+1} \in \argmax_{p \in [0,1]} \frac 1 t \sum_{i=1}^t \GFT(p, S_i, B_i);$
    }
    \caption{Follow the Best Price (FBP)}
    \label{alg:followTheBestPrice}
\end{algorithm}

For each time $t$, given $(S_1,B_1), \dots, (S_t,B_t)$, one can reconstruct the gain from trade function $\gft(\cdot, S_i, B_i)$ at each time step $i \le t$ and compute (one of) the best price(s)
$
    P_{t+1} \in \argmax_{p \in [0,1]} \frac 1 t \sum_{i=1}^{t} \GFT(p,S_i,B_i) 
$.
Note that $\frac 1 t \sum_{i=1}^{t} \GFT(\cdot, S_i,B_i)$ is a step-wise constant function that attains its maximum at one of the observed sellers' valuations $S_1\ldots,S_t$.\footnote{By the symmetry of the problem, the maximum is also attained at one of the buyers' valuations.}
Hence, even a naive enumeration approach is computationally efficient.
On a technical note, prices $P_{t+1}$ should be defined in a measurable way in order for the regret to be well-defined.
For example, this can be done by picking $P_{t+1}$ as the $S_i$ with the smallest index among all the $S_j \in \argmax_{p\in [0,1]} \frac 1 t \sum_{t=1}^t \GFT(p,S_i,B_i)$. (For other ideas on how to break ties in a measurable way, see \cite[Section~2.4]{cesari2021nearest}.)

The main idea of the analysis of \cref{alg:followTheBestPrice} is to show that the approximation of the expected gain from trade with its empirical means is uniform over all possible seller/buyer distributions and prices. 
A possible way to achieve this result could be through a pseudo-dimension argument (e.g., see \cite[Introduction and Theorem~5]{li2001improved}). 
However, this approach requires subtle measurability considerations. 
In contrast, we will show that one could get around these measurability issues altogether by leveraging our Decomposition lemma (\cref{l:decmp}) and a bivariate DKW inequality (\cref{t:vc}). 
Our approach also yields constants that ---while still fairly high (see discussion after the proof)--- are significantly better than those guaranteed by pseudo-dimension results.
Finally, this presentation will be helpful to get the reader acquainted with the techniques that appear in the following sections for the realistic setting.

\begin{theorem}
\label{thm:upper_full}
    In the full-feedback stochastic (iid) setting, the regret of Follow the Best Price 
    satisfies, for all horizons $T$,
    \[
        R_T(\emph{FBP}) 
    \le 
        \frac 1 2 + c \sqrt{T-1}
        \;.
    \]
    where $c \in (0, 1144240)$ is a universal constant.
\end{theorem}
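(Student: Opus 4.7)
The plan is a Follow-the-Leader stability analysis: I will reduce the regret bound to a uniform-deviation estimate for the empirical gain from trade, and then control that deviation via the Decomposition lemma and a bivariate DKW inequality.

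Set $G(p) := \E\bsb{\gft(p, S_1, B_1)}$ and $\hat G_t(p) := \frac{1}{t} \sum_{i=1}^t \gft(p, S_i, B_i)$. Since $P_t$ is $\sigma(S_1, B_1, \ldots, S_{t-1}, B_{t-1})$-measurable and $(S_t, B_t)$ is independent of the past, conditioning on $\mathcal{F}_{t-1}$ gives $R_T(\mathrm{FBP}) = \sum_{t=1}^T \lrb{G(\ps) - \E\bsb{G(P_t)}}$. The $t=1$ round contributes at most a constant (bounded by $\tfrac{1}{2}$ via a short elementary estimate using $P_1 = 1/2$ and $G \in [0,1]$). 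For $t \ge 2$, I would use the standard FTL trick: by definition of $P_t$ one has $\hat G_{t-1}(\ps) \le \hat G_{t-1}(P_t)$, so
\[
G(\ps) - G(P_t) \le \bsb{G(\ps) - \hat G_{t-1}(\ps)} + \bsb{\hat G_{t-1}(P_t) - G(P_t)} \le 2 \sup_{p \in [0,1]} \babs{\hat G_{t-1}(p) - G(p)}.
\]
This reduces the whole bound to controlling the expected uniform-convergence rate of $\hat G_{t-1}$ to $G$.

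Next, I would apply the Decomposition lemma (equation~\eqref{e:decomp-one} in expectation, together with \eqref{e:decomp-zero} inside the empirical average) to write
\[
G(p) - \hat G_t(p) = \int_{[p,1]} \bsb{H(p, \lambda) - \hat H_t(p, \lambda)} \dif \lambda + \int_{[0,p]} \bsb{H(\lambda, p) - \hat H_t(\lambda, p)} \dif \lambda,
\]
where $H(s, b) := \P(S \le s, \, B \ge b)$ and $\hat H_t$ denotes its empirical counterpart built from the first $t$ samples. Because each domain of integration has length at most $1$, bounding each integrand by its supremum yields
\[
\sup_{p \in [0,1]} \babs{\hat G_t(p) - G(p)} \le 2 \sup_{(s,b) \in [0,1]^2} \babs{\hat H_t(s,b) - H(s, b)}.
\]
This is the crucial step that sidesteps the measurability/pseudo-dimension issues alluded to in the text: a supremum over the continuum of prices $p$ is turned into a supremum over a fixed VC class of ``half-infinite rectangles''.

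Finally, since the class $\{(-\infty, s] \times [b, \infty) : (s,b) \in [0,1]^2\}$ has bounded VC dimension, the bivariate DKW/VC inequality referenced as \cref{t:vc} gives $\E\bsb{\sup_{s,b} \babs{\hat H_t(s,b) - H(s,b)}} \le c'/\sqrt{t}$ for a universal constant $c'$. Summing using $\sum_{t=1}^{T-1} 1/\sqrt{t} \le 2\sqrt{T-1}$ completes the argument. The main technical obstacle is not the qualitative $\sqrt{T}$ rate --- which drops out immediately from the above chain --- but the bookkeeping required to extract the explicit numerical constant $c < 1144240$: this amounts to tracking the factor-of-$2$ losses in the FTL reduction and in the Decomposition step, and plugging in the explicit constant from the precise version of \cref{t:vc} invoked.
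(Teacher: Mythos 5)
Your proposal is essentially the paper's own proof: the FTL inequality reduces the per-round regret to the expected uniform deviation of the empirical gain from trade, the Decomposition lemma converts the supremum over prices into a supremum over the bivariate empirical CDF of $(S,-B)$ (the same ``half-infinite rectangle'' class), and the bivariate DKW bound together with $\sum_{t=1}^{T-1} t^{-1/2} \le 2\sqrt{T-1}$ finishes, exactly as in the paper. The only quantitative difference is that you bound both brackets $G(\ps)-\hat G_{t-1}(\ps)$ and $\hat G_{t-1}(P_t)-G(P_t)$ by the two-sided supremum, picking up an extra factor of $2$; since the first bracket has zero expectation ($\ps$ is deterministic and the samples are i.i.d.), dropping it as the paper implicitly does is what keeps the explicit constant within the stated range, whereas your symmetric version would come out roughly twice as large.
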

\begin{proof}
Without loss of generality, assume that $T\ge 2$. 
Fix any $t\in [T-1]$. 
For any $p\in[0,1]$ define the random variable
\[
    H_t (p)
:=
    \frac 1 t \sum_{i=1}^t \GFT_i(p) - \E \bsb{ \GFT_1(p) } \;,
\]
where we recall that $\GFT_i(p) := \GFT(p, S_i,B_i)$.
By definition of $P_{t+1}$ and the independence of $P_{t+1}$ and $(S_{t+1},B_{t+1})$, we have that
\begin{multline*}
    \E \bsb{ \GFT_{t+1}(\ps) } - \E \bsb{ \GFT_{t+1}(P_{t+1}) }
\le
    \E \lsb{ \frac 1 t \sum_{i=1}^t \GFT_i(P_{t+1}) } - \E \bsb{ \GFT_{t+1}(P_{t+1}) }
\\
=
    \E \lsb{ \frac 1 t \sum_{i=1}^t \GFT_i(P_{t+1})  - \E \bsb{ \GFT_{t+1}(P_{t+1}) \mid P_{t+1} } }
=
    \E \bsb{ H_t (P_{t+1}) }
=:
    (*)\;.
\end{multline*}
Then, by the Decomposition lemma \eqref{e:decomp-zero}-\eqref{e:decomp-one}, we get
\begin{align}
    H_t (P_{t+1})
& \le
    \sup_{p \in [0,1]} \lrb{ \frac 1 t \sum_{i=1}^t \GFT_i(p) - \E \bsb{ \GFT_1(p) } }
\\
& =
    \sup_{p \in [0,1]} \left( 
        \frac 1 t \sum_{i=1}^t  \lrb{ \int_{[p,1]} \I [ S_i \le p \le \lambda \le B_i ] \dif \lambda 
        + 
        \int_{[0,p]} \I [ S_i \le \lambda \le p \le B_i ] \dif \lambda 
        }
        \right.
        \nonumber
\\
&
    \qquad \qquad \qquad
    - \left. \lrb{
        \int_{[p,1]} \P [ S_1 \le p \le \lambda \le B_1 ] \dif\lambda 
        +
        \int_{[0,p]} \P [ S_1 \le \lambda \le p \le B_1 ] \dif \lambda 
        }
        \right)
        \nonumber
\\
& =
    \sup_{p \in [0,1]} \left( 
        \int_{[0,p]} \lrb{ \frac 1 t \sum_{i=1}^t \I\{S_i \le \lambda, -B_i \le -p\} - \P [ S_1 \le \lambda, -B_1 \le -p ] } \dif \lambda 
        \right.
        \nonumber
\\
&
    \qquad \qquad \qquad
    + \left.
        \int_{[p,1]} \lrb{ \frac 1 t \sum_{i=1}^t \I\{S_i \le p, -B_i \le -\lambda\} - \P[S_1 \le p, -B_1 \le - \lambda] } \dif \lambda 
        \right)
        \nonumber
\\
& \le
    2 \sup_{x,y \in \R} \labs{
    \frac 1 t \sum_{i=1}^t \I\{S_i \le x, -B_i \le y \} - \P[S_1 \le x, -B_1 \le y]
    }
    \;.
    \label{e:proof-full-info}
\end{align}
Letting $m_0, c_1, c_2$ as in \cref{t:vc}, $\e_t := \sqrt{m_0/t}$, taking expectations to the left and right hand side of \cref{e:proof-full-info}, and applying the bivariate DKW inequality (\cref{t:vc}), we get
\begin{align}
    (*) 
&
    \le
    \E \lsb{
    2 \sup_{x,y \in \R} \labs{
    \frac 1 t \sum_{i=1}^t \I\{S_i \le x, -B_i \le y \} - \P[S_1 \le x, -B_1 \le y]
    }
    }
    \nonumber
\\
& \le
    2 \e_t + 2 \int_{[\e_t, 1]} \P\lsb{ \sup_{x,y \in \R} \labs{
    \frac 1 t \sum_{i=1}^t \I\{S_i \le x, -B_i \le y \} - \P[S_1 \le x, -B_1 \le y]
    } > \e } \dif \e
    \label{e:proof-improve-constants}
\\
& \le
    2 \e_t + 2 \int_{\e_t}^1 c_1 \operatorname{exp}\brb{ -c_2 t \e^2 } \dif \e
  \le
    2 \e_t + \frac{c_1}{\sqrt{c_2 t}} \int_{0}^\infty e^{-u} u^{-1/2} \dif u 
= 
    \lrb{ 2\sqrt{m_0} + c_1 \sqrt{\frac{\pi}{c_2}} } \frac{1}{\sqrt{t}} \;. 
    \nonumber
\end{align}
Being $t$ arbitrary, using the fact that $\sum_{t=1}^{T-1} t^{-1/2} \le 2\sqrt{T-1}$, and letting $c:= 2 \lrb{ 2\sqrt{m_0} + c_1 \sqrt{\frac{\pi}{c_2}} } < 1144265$, we have that
\[
    R_T(\text{FBP}) 
\le 
    \frac 1 2 + \sum_{t=1}^{T-1} \Brb{ \E \bsb{ \GFT_{t+1}(\ps) } - \E \bsb{ \GFT_{t+1}(P_{t+1}) } } 
\le 
    \frac 1 2 + \frac{c}{2} \sum_{t=1}^{T-1} \frac{1}{\sqrt{t}} 
= 
    \frac 1 2 + c \sqrt{T-1} \;,
\]
which concludes the proof.
\end{proof}

The loose bound on the constant $c$ appearing in the statement is due to the (likely suboptimal) large constants appearing in \cref{t:vc}:
any improvement on the bivariate DKW inequality would result in an improvement of this constant.
For example, it is conjectured \cite[Section~5]{naaman2021tight} that the tightest bound for the bivariate DKW inequality is (with the same notation as \cref{t:vc}), for all $m\in \N$ and $\e>0$,
$
    \P \lsb{ \sup_{x,y \in \R} \labs{ \frac 1 m \sum_{k=1}^m \I\{ X_k \le x, Y_k \le y \} - \P[X_1 \le x, Y_1 \le y] } > \e }
    \le
    4 \operatorname{exp}\brb{-2m\e^2}
$.
If this was the case, we could replace \cref{e:proof-improve-constants} with 
\[
    (*)
\le
    2 \int_{[0, 1]} \P\lsb{ \sup_{x,y \in \R} \labs{
    \frac 1 t \sum_{i=1}^t \I\{S_i \le x, -B_i \le y \} - \P[S_1 \le x, -B_1 \le y]
    } > \e } \dif \e
\le
    2\sqrt{2\pi} \frac{1}{\sqrt{t}} \;.
\]
leading to a significantly smaller constant $c := 2\cdot 2 \sqrt{2\pi} < 11$.

\subsection{\texorpdfstring{$\sqrt{T}$}{sqrt(T)} Lower Bound (iv+bd) \tccheck}
\label{sec:lower_bound_indep}

In this section, we show that the upper bound on the minimax regret we proved in \cref{sec:DKW} is tight.
No strategy can beat the $\cO\brb{\sqrt{T}}$ rate when the seller/buyer pair $(S_t,B_t)$ is drawn i.i.d. from an unknown fixed distribution, even under the further assumptions that the valuations of the seller and buyer are independent of each other and have bounded densities.
For a full proof of the following theorem, see Appendix~\ref{s:lower-full}.

\begin{theorem}
\label{thm:lower-full}
In the full-feedback 
model, for all horizons $T$, the minimax regret $\Rs_T$ satisfies 
\[
    \Rs_T 
:=
    \inf_{\alpha} \sup_{(S_1,B_1) \sim \cD} R_T(\alpha)
=
    \Omega \brb{ \sqrt{T} } \;,
\]
where $c \ge 1/\brb{ 8\sqrt{2\pi} }$, the infimum is over all of the learner's strategies $\alpha$, and the supremum is over all distributions $\cD$ of the seller/buyer pair such that:
\begin{itemize}
    \item[\emph{(iid)}] $(S_1,B_1),(S_2,B_2),\ldots \sim \cD$ is an i.i.d.\ sequence.
    \item[\emph{(iv)}] $S_1$ and $B_1$ are independent of each other.
    \item[\emph{(bd)}] $(S_1,B_1)$ admits a joint density bounded by $M\ge4$.
\end{itemize}
\end{theorem}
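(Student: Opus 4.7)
The plan is a two-point Le Cam argument: I will construct two hard instances $\cD_0, \cD_1 \in \sPivbd$ satisfying (iid), (iv) and (bd), exploit full feedback to bound their $T$-fold KL divergence, and conclude via Pinsker's inequality that any strategy incurs $\Omega(\sqrt{T})$ regret on at least one of them. Equivalently, this amounts to embedding a canonical $2$-action stochastic partial monitoring game, whose $\Omega(\sqrt{T})$ minimax rate is standard, into bilateral trade.

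For the construction, I fix two distinguished prices $p_0, p_1 \in (0,1)$ and, under both $\cD_\theta$, let the seller's valuation $S$ be uniform on $[0,1/2]$ (density $2$), while the buyer's valuation $B$ is independent of $S$ with density
\[
    f_B^\theta(b) = 2 \cdot \I\{b\in[1/2,1]\} + \e\,\psi_\theta(b),
\]
where $\psi_\theta$ is a bounded, mean-zero perturbation supported on two small intervals centered at $p_0$ and $p_1$. The signs of $\psi_\theta$ are flipped between $\theta=0$ and $\theta=1$ so that $p_\theta$ becomes the unique maximizer of $p \mapsto \E_{\cD_\theta}\bsb{\gft(p,S,B)}$, with
\[
    \E_{\cD_\theta}\bsb{\gft(p_\theta, S, B)} - \E_{\cD_\theta}\bsb{\gft(p_{1-\theta}, S, B)} = \Theta(\e),
\]
and a constant suboptimality margin for every other price. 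For $\e$ small enough, the joint density stays below $M=4$ and the construction lies in $\sPivbd$.

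For the information-theoretic step, since $f_B^0$ and $f_B^1$ differ by $O(\e)$ on a set of constant Lebesgue measure and stay uniformly bounded below, $\operatorname{KL}(\cD_0 \,\|\, \cD_1) = O(\e^2)$; tensorization and Pinsker give $\mathrm{TV}(\P_0^T,\P_1^T) \le C\e\sqrt{T}$ for the joint laws of the full interaction trace $(P_t, S_t, B_t)_{t \le T}$. Le Cam's two-point method then closes the argument: letting $N^\theta$ count the rounds in which $P_t$ is closer to $p_{1-\theta}$ than to $p_\theta$, one has $R_T^\theta(\alpha) \ge \Omega(\e)\,\E_\theta[N^\theta]$, and a standard testing argument yields $\E_0[N^0] + \E_1[N^1] \ge T\brb{1 - C\e\sqrt{T}}$. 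Choosing $\e = c/\sqrt{T}$ for a small constant $c$ makes the summed regret $\Omega(\sqrt{T})$, so at least one of the two instances attains the bound.

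The hard part will be the construction in the second paragraph: $\psi_\theta$ must simultaneously respect (iv)+(bd), effect a $\Theta(\e)$ symmetric swap between the expected gain from trade at $p_0$ and $p_1$, and guarantee a constant suboptimality margin for every other price so that the learner cannot sidestep the binary test by posting a third price. The appendix handles this abstractly through the Embedding and Simulation lemmas (\cref{l:embedding,l:simulation}), which transport the canonical $2$-action partial monitoring lower bound directly into the bilateral trade framework without tracking KL divergences by hand.
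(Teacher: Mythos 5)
Your high-level route is sound: a two-point Le Cam/testing argument with per-round KL of order $\e^2$, Pinsker, and $\e \asymp T^{-1/2}$ is exactly the information-theoretic core that the paper instead black-boxes by chaining its Embedding and Simulation lemmas down to a two-action \emph{full-information} (expert) game with the known $\frac{1}{8\sqrt{2\pi}}\sqrt{T}$ bound, so the two approaches are morally equivalent. The genuine gap is in your construction. First, ``a constant suboptimality margin for every other price'' is impossible under (bd): a joint density bounded by $M$ makes $p\mapsto\E\bsb{\gft(p,S,B)}$ $4M$-Lipschitz (\cref{l:decomp-lip}), so prices near the maximizer are nearly optimal. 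What your counting argument with $N^\theta$ actually needs is the weaker property that every price on the ``wrong side'' is $\Omega(\e)$-suboptimal, and that is what has to be engineered. Second, and more seriously, your base measure cannot support the swap: with $S$ uniform on $[0,\frac12]$ and $B$ essentially uniform on $[\frac12,1]$, the unperturbed expected gain from trade equals $\frac32 p - p^2$ on $[0,\frac12]$ and $\frac12+\frac{p}{2}-p^2$ on $[\frac12,1]$, hence has a unique peak at $p=\frac12$ whose value exceeds that at any two well-separated prices $p_0,p_1$ by a \emph{constant}. A bounded, mean-zero perturbation of $f_B$ of size $\e$ shifts $\E\bsb{\gft(p,S,B)}$ by $O(\e)$ uniformly, so it can never make such $p_0$ or $p_1$ the maximizer; and if you instead place $p_0,p_1$ within $O(\e)$ of $\frac12$, then the nonnegativity of the perturbed density, the ``uniformly bounded below'' claim behind your KL bound, and the wrong-side $\Omega(\e)$ margin all have to be re-established and are no longer obvious.

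The fix, which is what the paper does, is to start from a base whose expected gain from trade already has two well-separated, exactly tied candidate regions, and to perturb the \emph{seller} rather than the buyer: $f_{S,\pm\e}=2(1\pm\e)\I_{[0,1/4]}+2(1\mp\e)\I_{[1/2,3/4]}$ and $f_B=2\,\I_{[1/4,1/2]\cup[3/4,1]}$. Then the optimal price is $\frac14$ or $\frac34$ according to the sign of $\pm\e$, every price in the wrong half of $[0,1]$ is $\Omega(\e)$-suboptimal, and the densities stay bounded by $4$ with the full-feedback likelihood ratio depending on the data only through which seller block occurred (a Bernoulli of bias $\frac{1\pm\e}{2}$). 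With that family in place, either your explicit KL/Pinsker computation or the paper's reduction to the two-expert problem closes the argument; as written, though, your second paragraph asserts properties that your construction cannot deliver.
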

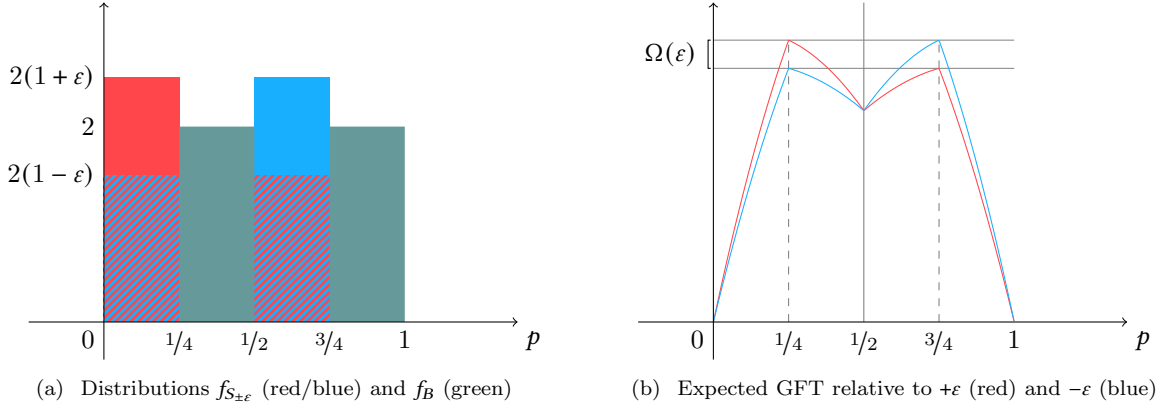
\begin{figure}
    \centering
    \subfigure[\label{f:sqrt-lower-a} Distributions $f_{S_{\pm \e}}$ (red/blue) and $f_B$ (green)]
    {
    \begin{tikzpicture}
    \def\k{0.5}
    \def\stretchY{0.65}
    \definecolor{myblue}{RGB}{25,175,255}
    \definecolor{myred}{RGB}{255,70,75}
    \definecolor{mygreen}{RGB}{64,128,128}
    % plots
    \fill[myred] (0, 0) rectangle ({2*\k}, {10*\k*\stretchY});
        \begin{scope}
        \clip (0,0) rectangle ({2*\k}, {6*\k*\stretchY});
        \foreach \x in {-20,...,20}
        {
            \draw[myblue,line width=1pt] ({-0.1-0.1*\x}, {0}) -- ({3.9-0.1*\x}, {4});
        }
        \end{scope}
    \fill[myblue] ({4*\k},0) rectangle ({6*\k}, {10*\k*\stretchY});
        \begin{scope}
        \clip ({4*\k}, 0) rectangle ({6*\k}, {6*\k*\stretchY});
        \foreach \x in {-40,...,20}
        {
            \draw[myred,line width=1pt] ({-0.1-0.1*\x}, {0}) -- ({5.9-0.1*\x}, {6});
        }
        \end{scope}
    \fill[mygreen!80] 
        ({2*\k}, 0) rectangle ({4*\k}, {8*\k*\stretchY})
        ({6*\k}, 0) rectangle ({8*\k}, {8*\k*\stretchY})
    ;
    % axes
    \draw[->] ({-2*\k}, {0*\k}) -- ({11*\k}, {0*\k}) node[below right] {$p$};
    \draw[->] ({0*\k}, -{1*\k}) -- ({0*\k}, {8.5*\k}) ;
    % labels
    \draw (0,0) node[below left] {$0$}
        ({2*\k},0) node[below] {$\nicefrac{1}{4}$}
        ({4*\k},0) node[below] {$\nicefrac{1}{2}$}
        ({6*\k},0) node[below] {$\nicefrac{3}{4}$}
        ({8*\k},0) node[below] {$1$}
        (0, {10*\k*\stretchY}) node[left] {$2(1+\e)$}
        (0, {8*\k*\stretchY}) node[left] {$2$}
        (0, {6*\k*\stretchY}) node[left] {$2(1-\e)$}
        ;
\end{tikzpicture}
}
\qquad
\subfigure[\label{f:sqrt-lower-b} Expected $\gft$ relative to $+\e$ (red) and $-\e$ (blue)]
{
    \begin{tikzpicture}[
    declare function={
        func(\x)
    = 
        (\x < 0) * (0)
        + and(\x >= 0, \x < 1/4) * ( (5/16) * \x * ( 5 - 4*\x ) )
        + and(\x >= 1/4, \x < 1/2) * ( (5/32) * ( -4*\x*\x + \x + 2 ) )
        + and(\x >= 1/2, \x < 3/4) * ( \x * ( 21/32 - (3/8)*\x ) )
        + and(\x >= 3/4, \x < 1) * ( (1/8) * ( 1 - \x ) * ( 8*\x + 3 ) )
        + (\x >= 1) * (0)
       ;
      }
    ]
    \def\k{0.5}
    \def\stretchY{3}
    \definecolor{myblue}{RGB}{25,175,255}
    \definecolor{myred}{RGB}{255,70,75}
    \def\colorOne{myblue}
    \def\colorTwo{myred}
    % dashed lines
    \draw[gray, dashed] ({\k*2}, 0) -- ({\k*2}, {(5/16)*\stretchY*\k*8});
    \draw[gray, dashed] ({\k*6}, 0) -- ({\k*6}, {(5/16)*\stretchY*\k*8});
    % vertical line
    \draw[gray, thin] ({\k*4}, 0) -- ({\k*4}, {\k*8.5});
    % horizontal lines
    \draw[gray, very thin] ({\k*0}, {(5/16)*\stretchY*\k*8}) -- ({\k*8}, {(5/16)*\stretchY*\k*8});
    \draw[gray, very thin] ({\k*0}, {(9/32)*\stretchY*\k*8}) -- ({\k*8}, {(9/32)*\stretchY*\k*8});
    % plots
    \draw[domain = 0:{\k*8}, myred, samples = 300] plot (\x, {\stretchY*\k*8*func( \x/(\k*8) )});
    \draw[domain = 0:{\k*8}, myblue, samples = 300] plot (\x, {\stretchY*\k*8*func( 1-\x/(\k*8) )});
    % axes
    \draw[->] ({-2*\k}, {0*\k}) -- ({11*\k}, {0*\k}) node[below right] {$p$};
    \draw[->] ({0*\k}, -{1*\k}) -- ({0*\k}, {8.5*\k}) ;
    % labels
    \draw (-1pt, {(5/16)*\stretchY*\k*8}) 
        -- (-2pt, {(5/16)*\stretchY*\k*8}) 
        -- (-2pt, {(9/32)*\stretchY*\k*8})
        -- (-1pt, {(9/32)*\stretchY*\k*8})
    ;
        \draw (-2pt, {((9/32 + 5/16)/2)*\stretchY*\k*8}) node[left, xshift=-1pt] {$\Omega( \e )$};
    \draw (0,0) node[below left] {$0$}
        ({2*\k},0) node[below] {$\nicefrac{1}{4}$}
        ({4*\k},0) node[below] {$\nicefrac{1}{2}$}
        ({6*\k},0) node[below] {$\nicefrac{3}{4}$}
        ({8*\k},0) node[below] {$1$}
        ;
    % arciapli
    \draw ({-2*\k},0) node {};
    \draw ({11.5*\k},0) node {};
\end{tikzpicture}
}
    \caption{The best posted price is $\nicefrac{1}{4}$ (resp., $\nicefrac{3}{4}$) in the $+\e$ (resp., $-\e$) case.
    By posting $\nicefrac{1}{4}$, the player suffers a $\Omega( \e )$ regret in the $-\e$ case, and the same is true posting $\nicefrac{3}{4}$ if in $+\e$ case.
    }
    \label{f:root-t-full}
\end{figure}
\begin{proof}[Proof sketch]
We build a family of distributions $\cD_{\pm \e}$ for the seller/buyer pair parameterized by $\e\in[0,1]$.
For the seller, for any $\e \in [0,1]$, we define the density
\[
    f_{S,\pm\e} 
:= 
    2(1 \pm\e)\ind_{\lsb{0, \frac{1}{4}}} + 2(1\mp\e)\ind_{\lsb{\frac{1}{2},\frac{3}{4}}}\;. 
    \tag{\text{\cref{f:sqrt-lower-a}, in red/blue}}
\]
For the buyer, we define a single density (independently of $\e$)
\[
    f_B
:=
    2 \I_{\lsb{ \frac{1}{4}, \frac{1}{2} } \cup \lsb{ \frac{3}{4}, 1 }} \;.
    \tag{\text{\cref{f:sqrt-lower-a}, in green}}
\]
In the $+\e$ (resp., $-\e$) case, the optimal price belongs to the region $[0,\nicefrac{1}{2}]$ (resp., $(\nicefrac{1}{2}, 1]$, see \cref{f:sqrt-lower-b}).
By posting prices in the wrong region $(\nicefrac{1}{2}, 1]$ (resp., $[0,\nicefrac{1}{2}]$) in the $+\e$ (resp., $-\e$) case, the learner incurs a $\Omega(\e)$ regret.
Thus, if $\e$ is bounded-away from zero, the only way to avoid suffering linear regret is to identify the sign of $\pm\e$ and play accordingly.

This closely resembles the construction for the lower bound of online learning with expert advice.
In fact, a technical proof (see Appendix~\ref{s:lower-full}), shows that our setting is harder (i.e., it has a higher minimax regret) than an instance of an expert problem (with two experts), which has a known lower bound on its minimax regret of $\frac{1}{8 \sqrt{2 \pi}}\sqrt{T}$ \citep{cover65}.
\end{proof}

\section{Realistic-Feedback Stochastic (iid) Setting \tccheck}
\label{s:real-feddb}

In this section, we tackle the problem in the more challenging realistic-feedback model, again under the assumption that the seller/buyer pairs $(S_1,B_1),(S_2,B_2),\ldots$ are $[0,1]^2$-valued i.i.d.\ random variables, all with the same law as some $(S,B)$.
We will first study the case in which $S$ and $B$ are independent (iv) and have bounded densities (bd), then discuss what happens if either one of the two assumptions is lifted.

We recall that in the realistic-feedback model, the only information collected by the learner at the end of each round $t$ consists of $\ind\{S_t \le P_t\}$ and $\ind\{P_t \le B_t\}$.

\subsection{Scouting Bandits: from Realistic Feedback to Multi-Armed Bandits \tccheck}
\label{sec:lipschitz}
The main challenge in designing low-regret algorithms with realistic feedback lies in the fact that posting a price \emph{does not} reveal the corresponding gain from trade.
We can observe this phenomenon by looking at the Decomposition lemma \eqref{e:decomp-three}.
While the local terms $\P[S \le p]$ and $\P[p \le B]$ can be reconstructed by simply posting the same price $p$ multiple times, the integral terms are inherently global: they depend on all values in $(p,1]$ and $[0,p)$, and thus estimating them requires posting prices that are \emph{far} from $p$.
This prevents direct applications of well-established algorithms, such as action elimination or UCB \citep{Slivkins19}, and suggests that this problem is harder than multiarmed bandits (as in fact it is: see \cref{sec:candidate}).

A naive approach to tackle this issue could be estimating the CDFs of $S$ and $B$ on a suitable grid of prices and using this information to reconstruct both the global and the local terms of $\GFT(\cdot,S,B)$. 
This would lead to an $\widetilde \cO \brb{ T^{3/4} }$ regret. 
Instead, our \cref{alg:meta} (Scouting Bandits) exploits better the decomposition in \cref{e:decomp-three} by learning \emph{separately} the global and local parts of the gain from trade.
First, a global exploration phase is run (scouting phase), in which prices uniformly sampled in $[0,1]$ are posted and used to simultaneously estimate the integral terms on a suitable grid of $K$ points. 
Once this is done, by replacing the integrals in \cref{e:decomp-three} with their approximations $\hat F _k$ and $\hat G_k$ for each price $q_k$ in the grid, we obtain the estimate
\[
    \E\bsb{ \GFT(q_k,S,B) }
\, \approx \,
    \P[S\le q_k] \, \hat F_k + \P[q_k\le B] \, \hat G_k 
= 
    \E \bsb{ \I\{S\le q_k\} \, \hat F_k + \I\{q_k\le B\} \, \hat G_k \mid H }
\, =: \,
    \E \bsb{ Z(k) \mid H}\;,
\]
where $H$ consists of the estimates $\hat F_j, \hat G_j$ (for all $j$) at the the end of the global exploration phase.
We are now only left to solve a bandit problem on $K$ arms with reward function $Z$: the only quantities to learn are the two local terms $\P[S\le q_k]$ and $\P[q_k\le B]$, which can be estimated with the available feedback by posting the price $q_k$.
    
\begin{algorithm}
    \textbf{input:} exploration time $T_0$, grid size $K$, and $K$-armed bandit algorithm $\alpha$\;
    \textbf{initialization:} $q_k \gets k/(K+1)$, $\hat{F}_k \gets 0$, $\hat{G}_k \gets 0$, for all $k\in [K]$\;
     \For(\tcp*[f]{scouting phase}){$t=1,2,\dots,T_0$ }{
     draw $U_t$ from $[0,1]$ uniformly at random\;
     post price $U_t$ and observe feedback $\brb{ \ind\{S_t \le U_t\}, \, \ind\{U_t \le B_t\} }$\;
     let $\hat F_k \gets \hat F_k + \frac{1}{T_0}\ind\{q_k \le U_t\le B_t\}$, and 
    $\hat G_k \gets \hat G_k + \frac{1}{T_0}\ind\{S_t \le U_t\le q_k\}$, for all $k\in [K]$\;
     }
     \For(\tcp*[f]{bandit phase}){$t=T_0+1,T_0+2,\dots$}{
     generate the next arm $I_t$ with $\alpha$\;
     post price $q_{I_t}$ and observe $\brb{ \ind\{S_t \le q_{I_t}\}, \, \ind\{q_{I_t} \le B_t\}}$\;
     feed $\alpha$ the reward $Z_t(I_t) \gets \ind\{S_t \le q_{I_t}\} \hat{F}_{I_t} + \ind\{q_{I_t} \le B_t\} \hat{G}_{I_t}$\;
     }
      \caption{Scouting Bandits}
      \label{alg:meta}
\end{algorithm}

The independence of $S$ and $B$ (iv) is required for applying \cref{e:decomp-three}, while the bounded density assumption (bd) implies the Lipschitzness of the expected gain from trade (\cref{l:decomp-lip}), which in turns allows to discretize the problem.
Later, we show how dropping either of these assumptions leads to linear regret (\Cref{thm:lower-real-iv,thm:lower-real-bd}).

We are now ready to state and prove the main result of this section.

\begin{theorem}
\label{thm:upper_real}
In the realistic-feedback stochastic (iid) setting where the distributions of the seller and buyer are independent (iv) and have densities bounded by some constant $M$, the regret of Scouting Bandits (SB) run with parameters $T_0$, $K$, and $\alpha$ satisfies, for any time horizon $T \ge T_0$,
\[
    R_T(\emph{SB})
\le
    T_0 + \lrb{\frac{4M}{K+1} + \sqrt \frac{2\pi}{T_0} }(T-T_0) + \calR_{T-T_0}(\alpha) \;,
\]
where $\mathcal{R}_\tau(\alpha)$ is a distribution-free upper bound on the regret after $\tau$ rounds of $\alpha$ in the stochastic i.i.d.\ setting with $[0,1]$-valued rewards.

In particular,
if for each $K$ we have a bandit algorithm $\alpha^K$ over $K$ arms such that
$\calR_\tau(\alpha^K)=\cO\brb{ \sqrt{K \tau} }$ (e.g., if $\alpha^K$ is the MOSS algorithm over $K$ arms \citep{audibert2009minimax}), then
tuning the parameters $T_0 := \bce{ T^{2/3} }$ and $K := \bce{ T^{1/3}}$ gives the regret bound $R_T(\emph{SB}) = \cO\brb{ M T^{2/3} }$.
\end{theorem}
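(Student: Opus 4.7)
The plan is to split the regret into the scouting phase, which trivially contributes at most $T_0$ since $\GFT\in[0,1]$, and the bandit phase, which I analyze by interpolating between $\mu(\ps) := \E\bsb{\GFT(\ps,S,B)}$ and $\mu(q_{I_t})$ through the best grid arm $\ks\in\argmax_k\mu(q_k)$ and the surrogate mean $\tilde\mu_k := \P[S\le q_k]\,\hat F_k + \P[q_k\le B]\,\hat G_k$. Under the independence assumption (iv), identity~\eqref{e:decomp-three} applied to $q_k$ shows that $\tilde\mu_k = \E[Z_t(k)\mid H]$, where $H$ denotes the scouting history $(\hat F_j,\hat G_j)_j$. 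With $\tilde k\in\argmax_k\tilde\mu_k$, for each round $t>T_0$ I would write
\[
    \mu(\ps) - \mu(q_{I_t})
\;\le\;
    \underbrace{\brb{\mu(\ps)-\mu(q_{\ks})}}_{\text{(A)}}
    + \underbrace{\brb{\mu(q_{\ks}) - \tilde\mu_{\ks}}}_{\text{(B)}}
    + \underbrace{\brb{\tilde\mu_{\tilde k} - \tilde\mu_{I_t}}}_{\text{(C)}}
    + \underbrace{\brb{\tilde\mu_{I_t}-\mu(q_{I_t})}}_{\text{(D)}},
\]
then sum over $t=T_0+1,\ldots,T$ and take expectations.

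Term (A) is controlled via \cref{l:decomp-lip}: (bd) makes $\mu$ $4M$-Lipschitz, and the grid spacing is at most $1/(K+1)$, so $(A) \le 4M/(K+1)$. Term (B) vanishes in expectation because $\ks$ is deterministic and $\hat F_{\ks},\hat G_{\ks}$ are unbiased estimators of $F_{\ks},G_{\ks}$, so $\E[\tilde\mu_{\ks}]=\mu(q_{\ks})$. Term (C) is absorbed into $\calR_{T-T_0}(\alpha)$: conditional on $H$, the pairs $(S_t,B_t)_{t>T_0}$ remain i.i.d.\ and independent of $H$, hence for every arm $k$ the rewards $Z_t(k)$ are i.i.d.\ in $t$ with mean $\tilde\mu_k$; moreover $Z_t(k)\in[0,1]$ because the events $\{q_k\le U_t\le B_t\}$ and $\{S_t\le U_t\le q_k\}$ are disjoint, so $\hat F_k+\hat G_k\le 1$. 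The distribution-free guarantee of $\alpha$ applied conditionally on $H$, together with the tower property, yields $\E\bsb{\sum_{t>T_0}(\tilde\mu_{\tilde k}-\tilde\mu_{I_t})}\le\calR_{T-T_0}(\alpha)$.

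The main obstacle is term (D), since $I_t$ depends on $H$ and pointwise concentration of $\hat F_{I_t}$ at a random index is not enough. I would use the worst-case bound $|\tilde\mu_{I_t}-\mu(q_{I_t})|\le \sup_k|\hat F_k-F_k|+\sup_k|\hat G_k-G_k|$ and control each supremum uniformly. The key observation is that, for a fixed sample $(U_t,B_t)$, the map $q\mapsto\ind\{q\le U_t\le B_t\}$ is the survivor indicator of the real random variable $V_t := U_t\cdot\ind\{U_t\le B_t\} + (-1)\cdot\ind\{U_t>B_t\}$, so $\sup_q|\hat F(q)-F(q)|$ equals the one-sided Kolmogorov distance between the empirical and true CDF of the i.i.d.\ sample $V_1,\ldots,V_{T_0}$. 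The classical one-dimensional DKW inequality with Massart's constant gives $\P[\sup_q|\hat F(q)-F(q)|>\e]\le 2e^{-2T_0\e^2}$, and tail integration yields $\E[\sup_q|\hat F(q)-F(q)|]\le\sqrt{\pi/(2T_0)}$. The identical argument for $\hat G$ then sums to $\sqrt{2\pi/T_0}$.

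Combining (A)--(D) across the $T-T_0$ bandit rounds and adding the $T_0$ scouting contribution delivers the claimed inequality. The MOSS corollary is routine: plugging $\calR_\tau(\alpha^K)=\cO(\sqrt{K\tau})$ with $T_0=\bce{T^{2/3}}$ and $K=\bce{T^{1/3}}$ makes each of the four summands $\cO(MT^{2/3})$.
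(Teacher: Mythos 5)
Your proposal is correct and follows essentially the same route as the paper's proof: the same four-way decomposition (Lipschitz discretization via \cref{l:decomp-lip}, unbiasedness of the surrogate means via \eqref{e:decomp-three}, the conditional-on-$H$ bandit regret argument, and a uniform univariate-DKW tail-integration bound on the scouting estimates), yielding the same terms $T_0$, $4M/(K+1)$, $\sqrt{2\pi/T_0}$, and $\calR_{T-T_0}(\alpha)$. The only differences are cosmetic (you take $\ks$ as the grid argmax rather than the grid point nearest $\ps$, compare to the best surrogate arm rather than $\ks$ inside the bandit term, and encode the scouting samples via a $-1$-valued auxiliary variable instead of the paper's $-U_i\I\{U_i\le B_i\}$), so there is no gap.
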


\begin{proof}
Let $H := (\hat F_k, \hat G_k)_{k\in [K]}$ and denote its range space $[0,1]^{2K}$ by $\cH$.
For each $h = (f_k, g_k)_{k\in [K]} \in \cH$, let $(I_{h,t})_{t\ge T_0+1}$ be the sequence of arms pulled by $\alpha$ (possibly using some internal randomization) on the sequence of rewards $(Z_{h,t})_{t\ge T_0+1}$ defined for any time $t\ge T_0 + 1$ and all arms $k\in [K]$ by
\[
    Z_{h,t}(k) := \I\{S_t \le q_k\} f_k + \I\{q_k \le B_t\} g_k \;.
\]
Let $\ps \in \argmax_{p \in [0,1]} \E\bsb{\GFT(p,S_1,B_1)}$ and $\ks$ be the index of a point in the grid $\{q_1,\dots,q_K\}$ closest to $\ps$.
Let $P_t$ be the price posted by SB at each time $t$.
Similarly to previous sections, denote for all times $t$ and prices $p$, $\GFT_t(p) := \gft(p, S_t,B_t)$.
Then
\begin{align}
    R_T(\text{SB})
&
\le
    T_0
    +
    \sum_{t=T_0+1}^T \E\bsb{ \GFT_t(\ps) - \GFT_t(P_t) } \nonumber
\\
&
=
    T_0
    +
    \sum_{t=T_0+1}^T \Brb{ \E\bsb{ \GFT_t(\ps) } - \E \bsb{ \GFT_t(q_{\ks}) } }
    + 
    \sum_{t=T_0+1}^T \Brb{ \E \bsb{ \GFT_t(q_{\ks}) } - \E \bsb{ Z_{H,t}(\ks) } } \nonumber
\\
&
    \qquad
    + \E \lsb{ \sum_{t=T_0+1}^T Z_{H,t}(\ks) - \sum_{t=T_0+1}^T Z_{H,t}(I_{H,t}) } 
    + \sum_{t=T_0+1}^T \Brb{ \E \bsb{ Z_{H,t}(I_{H,t}) } - \E\bsb{ \GFT_t(P_t) } } \nonumber
\\
&
=:
    T_0 + \uno + \due + \tre + \quattro \;. \label{e:regret-decomp}
\end{align}
We bound the four terms separately.

For the term $\uno$, by the $4M$-Lipschitzness of the gain from trade (\Cref{l:decomp-lip}) and the fact that the step size of the grid is $1/(K+1)$, we get 
\[
    \uno
=
    \sum_{t=T_0+1}^T \Brb{ \E\bsb{ \GFT_t(\ps) } - \E \bsb{ \GFT_t(q_{\ks}) } }
\le 
    4M |p^{\star}-q_{k^\star}| (T-T_0)
\le 
    \frac{4M}{K+1}(T-T_0) \;.
\]
For the term $\due$, for any $t\ge T_0 +1$, by the independence of $H$ and $(S_t,B_t)$, we have
\begin{align*}
    \E \bsb{ Z_{H,t}(\ks) }
&
=
    \E \bsb{ \I\{S_t \le q_{\ks}\} \hat F_{\ks} + \I\{q_{\ks} \le B_t\} \hat G_{\ks} }
\\
&
=
    \P [S_t \le q_{\ks}] \P[q_{\ks} \le U_t \le B_t] + \P [q_{\ks} \le B_t] \P[S_t \le U_t \le q_{\ks} ]
=
    \E \bsb{ \GFT_t(q_{\ks}) } \;,
\end{align*}
where the last identity follows from \cref{e:decomp-three}, and in turn implies that $\due = 0$.

For the term $\tre$, using the fact that for $\P_H$-almost every $h\in \cH$, the sequence $(Z_{h,t})_{t\ge T_0 + 1}$ is included in $[0,1]$, we obtain
\begin{align*}
    \tre
& =
    \E \lsb{ \E \lsb{ \sum_{t=T_0+1}^T Z_{H,t}(\ks) - \sum_{t=T_0+1}^T Z_{H,t}(I_{H,t}) } \mid H }
\\
& \overset{(*)}{\le}
    \int_{\cH} \E \lsb{ \sum_{t=T_0+1}^T Z_{h,t}(\ks) - \sum_{t=T_0+1}^T Z_{h,t}(I_{h,t}) } \dif\P_H(h)
\le
    \calR_{T-T_0}(\alpha)
\end{align*}
where $(*)$ follows from the independence of $(I_{h,t},S_t,B_t)$ and $H$ (for any $h\in \cH$ and all $t \ge T_0 +1$) and in the last inequality we upper bounded (for $\P_H$-almost every $h \in \cH$) the regret of $\alpha$ when run on the sequence of rewards $(Z_{h,t})_{t\ge T_0 + 1}$ with $\calR_{T-T_0}(\alpha)$.

Finally, we upper bound the last term $\quattro$.
If the $K$-armed bandit algorithm $\alpha$ is randomized, let $V_t$ be its internal randomization of at each time step $t \ge T_0 +1$; otherwise, omit all references to $(V_t)_{t\ge T_0+1}$.
Define, for each time step $t \ge T_0+1$, 
$
    L_t 
:=
    (H,V_{T_0+1},S_{T_0+1},B_{T_0+1},\ldots,V_{t-1},S_{t-1},B_{t-1},V_t)
$,
$\P_t := \P [ \cdot \mid L_t ]$, 
and take a uniform random variable $U_t$ on $[0,1]$ independent of $(L_t,B_t,S_t)$.
Now, for all $t\ge T_0 +1$, leveraging the measurability of $q_{I_{H,t}}, \hat F_{I_{H,t}}, \hat G_{I_{H,t}}$ with respect to $\sigma(L_t)$, the independence of $L_t$ and $(S_t,B_t)$, and the Decomposition lemma \eqref{e:decomp-three}, we get
\begin{align*}
&
    \E \bsb{ Z_{H,t}(I_{H,t}) } - \E\bsb{ \GFT_t(P_t) }
 =
    \E \Bsb{
    \E \bsb{
        \brb{
        \I\{S_t \le q_{I_{H,t}}\} \hat F _{I_{H,t}} 
        +
        \I\{q_{I_{H,t}} \le B_t\} \hat G _{I_{H,t}}
        }
        -
        \GFT(q_{I_{H,t}}, S_t, B_t)
    \mid L_t
    }
    }
\\
& \qquad =
    \E \bsb{
        \P_t[ S_t \le q_{I_{H,t}} ] 
        \brb{ \hat F_{I_{H,t}} - \P_t [ q_{I_{H,t}} \le U_t \le B_t] }
        +
        \P_t [ q_{I_{H,t}} \le B_t] 
        \brb{ \hat G_{I_{H,t}} - \P_t [ S_t \le U_t \le q_{I_{H,t}} ] }
    }
\\
& \qquad \le
    \E \lsb{ \max_{k\in [K]} \babs{ \hat F_k - \P[q_k \le U_1 \le B_1 ] } }
    +
    \E \lsb{ \max_{k\in [K]} \babs{ \hat G_k - \P[S_1 \le U_1 \le q_k ] } }
=:
    (\mathrm{V}) + (\mathrm{VI})
     \;.  
\end{align*}
For the first addend, applying the univariate DKW inequality (\cref{t:dkw-massart}), we have
\begin{align*}
    (\mathrm{V})
& =
    \int_{[0,1]} \P \lsb{ \max_{k\in [K]} \babs{ \hat F_k - \P[q_k \le U_1 \le B_1 ] } > \e } \dif \e 
\\
& =
    \int_{[0,1]} \P \lsb{ \max_{k\in[K]} \labs{ \frac{1}{T_0} \sum_{i=1}^{T_0} \I\bcb{-U_i\I\{U_i\le B_i\} \le -q_k } - \P\bsb{ -U_1 \I\{U_1 \le B_1\} \le -q_k } } > \e } \dif \e
\\
& \le
    \int_{[0,1]} \P \lsb{ \sup_{x\in\R} \labs{ \frac{1}{T_0} \sum_{i=1}^{T_0} \I\bcb{-U_i\I\{U_i\le B_i\} \le x } - \P\bsb{ -U_1 \I\{U_1 \le B_1\} \le x } } > \e } \dif \e
\\
& \le
    \int_0^1 2 \operatorname{exp}\brb{ -2 T_0 \e^2 } \dif \e
\le
    \frac{1}{\sqrt{2 T_0}} \int_{0}^\infty e^{-u} u^{-1/2} \dif u
=  
    \sqrt{\frac \pi 2}\frac{1}{\sqrt{T_0}} \;.  
\end{align*}
Similarly, one can show that $(\mathrm{VI})\le \sqrt{\frac \pi 2}\frac{1}{\sqrt{T_0}}$ which in turn yields $\quattro \le \sqrt \frac{2\pi}{T_0} (T-T_0)$.

Putting the bounds on $\uno$-$\quattro$ together in \eqref{e:regret-decomp} gives the first part of the result. Substituting the stated choice of the parameters yields the second.
\end{proof}

Note that to achieve a regret of order $\cO (M T^{2/3})$ we tuned the parameters $T_0$ and $K$ of Scouting Bandits as a function of $T$.
If the time horizon is unknown, we can obtain the same order of regret with a standard doubling trick \citep{Nicolo06}.
Also, note that if we allow tuning the parameters as a function of the \lip{} constant $M$ (which is however unknown in general), the regret rate would improve to order $\cO(M^{1/3} T^{2/3})$. 
This can be achieved by taking $T_0 := \bce{ T^{2/3} }$ and $K := \bce{ M^{2/3} T^{1/3} }$.

\subsection{\texorpdfstring{$T^{2/3}$}{T\^{}(2/3)} Lower Bound Under Realistic Feedback (iv+bd) \tccheck}
\label{sec:candidate}

In this section, we show that the upper bound on the minimax regret we proved in \cref{sec:lipschitz} is tight.
No strategy can beat the $\cO( T^{2/3})$ rate when the seller/buyer pair $(S_t,B_t)$ is drawn i.i.d. from an unknown fixed distribution, even under the further assumptions that the valuations of the seller and buyer are independent of each other and have bounded densities.
For a full proof of the following theorem, see Appendix~\ref{s:proof-t-two-thrid-lower-bound-appe}.
\begin{theorem}
\label{thm:lower-real-iv+bd}
In the realistic-feedback model, for all horizons $T$, the minimax regret $\Rs_T$ satisfies 
\[
    \Rs_T 
:=
    \inf_{\alpha} \sup_{(S_1,B_1) \sim \cD} R_T(\alpha)
\ge
    c T^{2/3} \;,
\]
where $c \ge 11/672$, the infimum is over all learner's strategies $\alpha$, and the supremum is over all distributions $\cD$ of the seller/buyer pair such that:
\begin{itemize}
    \item[\emph{(iid)}] $(S_1,B_1),(S_2,B_2),\ldots \sim \cD$ is an i.i.d.\ sequence.
    \item[\emph{(iv)}] $S_1$ and $B_1$ are independent of each other.
    \item[\emph{(bd)}] $(S_1,B_1)$ admits a joint density bounded by $M\ge24$.
\end{itemize}
\end{theorem}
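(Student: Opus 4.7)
The plan is to construct a family of hard instances indexed by a hidden ``needle'' location and apply a standard change-of-measure argument, following the classical $T^{2/3}$ template for partial monitoring and dynamic pricing. First, I would fix $K \sim T^{1/3}$ equally spaced grid points $q_1, \ldots, q_K$ in a plateau-like subinterval of $[0,1]$, with associated pairwise disjoint windows $W_1, \ldots, W_K$ of width $\delta \sim 1/K$, and a scale parameter $\e \sim T^{-1/3}$. I would then choose a smooth base distribution $\cD_0 \in \sPivbd$ (for example, $S$ and $B$ independent with piecewise-constant densities on $[0,1]$ chosen so that $\E_{\cD_0}\bsb{\gft(\cdot,S,B)}$ is, say, the parabola $p(1-p)/2$, or a carefully rigged profile that is nearly flat on the plateau).

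For each $k\in\{1,\ldots,K\}$, I would define a perturbed distribution $\cD_k \in \sPivbd$ whose marginal densities $f_{S,k}$ and $f_{B,k}$ differ from those of $\cD_0$ only inside $W_k$, via a compensated bump-and-dip: increase the density on one sub-half of $W_k$ by a small amount and decrease it by the same amount on the other sub-half. Two properties must be engineered into this construction. First, a \textbf{gap property}: under $\cD_k$, the expected gain from trade is uniquely maximized near $q_k$ and strictly exceeds its value at every $q_j$, $j \ne k$, and at every price outside $W_k$, by $\Omega(\e)$. Second, an \textbf{invisibility property}: the CDFs $F_{S,k}$ and $1-F_{B,k}$ agree pointwise with those of $\cD_0$ outside $W_k$, so that any realistic-feedback observation obtained at a price $p \notin W_k$ is \emph{identically distributed} under $\cD_0$ and under $\cD_k$. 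Keeping all bump heights under control ensures that every $\cD_k$ remains in the (iv) and (bd) class with density bound $M\geq 24$.

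Given this construction, the information-theoretic argument proceeds as follows. Let $N_k$ be the (random) number of rounds in which the learner posts a price in $W_k$. Since Bernoulli observations with parameters that differ by $O(\e)$ have KL divergence $O(\e^2)$, the chain rule and Pinsker's inequality bound the total variation distance between the $T$-round observation laws under $\cD_0$ and under $\cD_k$ by $O\lrb{\sqrt{\E_{\cD_0}[N_k]\,\e^2}}$. Since $\sum_{k=1}^K N_k \le T$, averaging over $k$ yields an index $k^\star$ with $\E_{\cD_0}[N_{k^\star}] \le T/K$. With the balance $K \sim T^{1/3}$ and $\e \sim T^{-1/3}$, this total variation is bounded away from $1$, so no learner can reliably distinguish $\cD_0$ from $\cD_{k^\star}$; under $\cD_{k^\star}$ any price outside $W_{k^\star}$ is $\Omega(\e)$-suboptimal, yielding an expected regret of $\Omega(\e\, T) = \Omega(T^{2/3})$, matching the constant $c \ge 11/672$ up to a careful accounting.

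The main obstacle is the concrete design of the perturbations satisfying simultaneously independence, density-boundedness, exact CDF-invisibility outside $W_k$, and an actual $\Omega(\e)$ gap in $\E[\gft(\cdot,S,B)]$. Independence forces one to perturb the two marginals separately, and local compensation on each marginal is needed to kill the CDF tail-leakage; at the same time, the resulting joint perturbation must not accidentally cancel in the Decomposition Lemma's expression $\E[\gft(p,S,B)] = \int_{[p,1]}\P[S\le p \le \lambda \le B]\dif\lambda + \int_{[0,p]}\P[S\le \lambda \le p \le B]\dif\lambda$. The cleanest route, and the one the authors most likely follow in the appendix, is to phrase the construction through their Embedding and Simulation lemmas (\cref{l:embedding,l:simulation}), reducing the realistic-feedback bilateral trade problem on these instances to a known $T^{2/3}$ partial monitoring game and importing its lower bound verbatim.
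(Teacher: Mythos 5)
Your route is genuinely different from the paper's. The paper's proof (Appendix~\ref{s:proof-t-two-thrid-lower-bound-appe}) uses a single one-parameter family of seller densities $f_{S,\pm\e}$ in which the sign of $\pm\e$ can only be detected by posting prices in a region that is $\Omega(1)$-suboptimal, and it formalizes this through the Embedding and Simulation lemmas as a chain of progressively easier games ending at the two-hypothesis \emph{revealing action} partial monitoring game, whose known $\frac{11}{96}\cdot\frac{1}{7}\,T^{2/3}$ lower bound is imported verbatim --- that is where $c=11/672$ comes from. You instead propose $K\sim T^{1/3}$ local ``needles'' on a flat plateau together with a change-of-measure argument; the quantitative skeleton (gap $\e\sim T^{-1/3}$, per-round KL of order $\e^2$ only at prices inside the perturbed window thanks to CDF-invisibility, divergence decomposition over the at most $T/K$ such rounds for some window, Pinsker, then $\Omega(\e T)$ regret under the corresponding alternative) is sound and would indeed yield $\Omega(T^{2/3})$, though with an unspecified constant rather than the specific $11/672$ you claim to recover.

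The genuine gap is that the construction carrying your whole argument is asserted rather than carried out, and it is not routine. Inside $\sPivbd$ you need simultaneously: (a) a base pair of independent bounded densities whose expected gain from trade is constant up to $o(\e)=o(T^{-1/3})$ on a plateau of constant length (otherwise the learner can discard most windows for free); since $\frac{d}{dp}\E\bsb{\gft(p,S,B)} = f_S(p)\int_p^1\P[B\ge\lambda]\dif\lambda - f_B(p)\int_0^p\P[S\le\lambda]\dif\lambda$, this forces an exact functional relation coupling the two marginals on the plateau; (b) strictly positive seller density on every window, because the compensated bump-and-dip needed for invisibility has height $\Theta(\e/\delta)=\Theta(1)$ and must stay between $0$ and $M$; and (c) enough buyer mass above the plateau so that a CDF bump of size $\e$ at $q_k$ produces an $\Omega(\e)$ increase of the expected gain from trade at $q_k$ while leaking only $O(\e\delta)$ into prices outside $W_k$. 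These constraints can in fact be met together (for instance by solving $f_S(p)\int_p^1\P[B\ge\lambda]\dif\lambda = f_B(p)\int_0^p\P[S\le\lambda]\dif\lambda$ on the plateau with cosh-type profiles and bump heights scaled by a small constant), but none of this appears in your proposal: you name it as ``the main obstacle'' and then defer to what ``the authors most likely follow,'' which is not a proof --- and your guess about their construction is also off, since the paper uses two hypotheses distinguishable only through a costly revealing region, not many local needles. As written, the proposal is a plausible plan for an alternative argument whose central construction is missing.
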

\begin{proof}[Proof sketch]
\begin{figure}
    \centering
    \subfigure[\label{f:t-two-third-lower-bound-a} Distributions $f_{S,\pm \e}$ (red/blue) and $f_B$ (green)]
    {
    \begin{tikzpicture}
    \def\k{0.5}
    \def\stretchY{0.5}
    \definecolor{myblue}{RGB}{25,175,255}
    \definecolor{myred}{RGB}{255,70,75}
    \definecolor{mygreen}{RGB}{64,128,128}
    % dashed lines
    \draw[gray, dashed] (0, {8*\k*\stretchY*(1+0.7)}) -- ({8*\k/6}, {8*\k*\stretchY*(1+0.7)});
    \draw[gray, dashed] (0, {8*\k*\stretchY*(1-0.7)}) -- ({8*\k/6}, {8*\k*\stretchY*(1-0.7)});
    \draw[gray, dashed] (0, {8*\k*\stretchY)}) -- ({8*\k}, {8*\k*\stretchY});
    % plots
    \fill[myred] (0, 0) rectangle ({8*\k/48}, {8*\k*\stretchY*(1+0.7)});
        \begin{scope}
        \clip (0,0) rectangle ({8*\k/48}, {8*\k*\stretchY*(1-0.7)});
        \foreach \x in {-2,...,10}
        {
            \draw[myblue,line width=1pt] ({-0.1-0.1*\x}, {0}) -- ({1.9-0.1*\x}, {2});
        }
        \end{scope}
        \draw[myblue, line width=1pt] 
        (0, {8*\k*\stretchY*(1-0.7)}) 
            -- ({8*\k/48}, {8*\k*\stretchY*(1-0.7)})
        ;
    \fill[myblue] ({8*\k/6}, 0) rectangle ({8*\k*3/16}, {8*\k*\stretchY*(1+0.7)});
        \begin{scope}
        \clip ({8*\k/6}, 0)  rectangle ({8*\k*3/16}, {8*\k*\stretchY*(1-0.7)});
        \foreach \x in {-2,...,10}
        {
            \draw[myred,line width=1pt] ({8*\k/6-0.1-0.1*\x}, {0}) -- ({8*\k/6+1.9-0.1*\x}, {2});
        }
        \end{scope}
        \draw[myred, line width=1pt] 
        ({8*\k/6}, {8*\k*\stretchY*(1-0.7)}) 
            -- ({8*\k*3/16}, {8*\k*\stretchY*(1-0.7)})
        ;
    \fill[myred] ({8*\k/4}, 0) rectangle ({8*\k*13/48}, {8*\k*\stretchY});
        \begin{scope}
        \clip ({8*\k/4}, 0) rectangle ({8*\k*13/48}, {8*\k*\stretchY});
        \foreach \x in {-2,...,20}
        {
            \draw[myblue,line width=1pt] ({8*\k/4-0.1-0.1*\x}, {0}) -- ({8*\k/4+1.9-0.1*\x}, {2});
        }
        \end{scope}
    \fill[myred] ({8*\k*2/3}, 0) rectangle ({8*\k*11/16}, {8*\k*\stretchY});
        \begin{scope}
        \clip ({8*\k*2/3}, 0) rectangle ({8*\k*11/16}, {8*\k*\stretchY});
        \foreach \x in {-2,...,20}
        {
            \draw[myblue,line width=1pt] ({8*\k*2/3-0.1-0.1*\x}, {0}) -- ({8*\k*2/3+1.9-0.1*\x}, {2});
        }
        \end{scope}
    \fill[mygreen!80] 
        ({8*\k*5/16}, 0) rectangle ({8*\k*1/3}, {8*\k*\stretchY})
        ({8*\k*35/48}, 0) rectangle ({8*\k*3/4}, {8*\k*\stretchY})
        ({8*\k*13/16}, 0) rectangle ({8*\k*5/6}, {8*\k*\stretchY})
        ({8*\k*47/48}, 0) rectangle ({8*\k}, {8*\k*\stretchY})
    ;
    % axes
    \draw[->] ({-0.5*\k}, {0*\k}) -- ({10.5*\k}, {0*\k}) node[below right] {$p$};
    \draw[->] ({0*\k}, -{0.5*\k}) -- ({0*\k}, {8.5*\k}) ;
    % labels
    \draw [decorate,decoration={brace, amplitude=3pt, mirror}, yshift=-1pt]
        (0,0) -- ({\k*8*3/16},0) node [black, midway, below, yshift=-2pt] {$a_1$};
    \draw [decorate,decoration={brace, amplitude=2.4pt, mirror}, yshift=-1pt]
        ({\k*8*13/48},0) -- ({\k*8*5/16},0) node [black, midway, below, yshift=-2pt] {$a_2$};
    \draw [decorate,decoration={brace, amplitude=2.4pt, mirror}, yshift=-1pt]
        ({\k*8*11/16},0) -- ({\k*8*35/48},0) node [black, midway, below, yshift=-2pt] {$a_3$};
    \draw (0,0) node[below left] {$0$}
        ({8*\k},0) node[below] {$1$}
        (0, {8*\k*\stretchY*(1+0.7)}) node[left] {$\frac{1+\e}{4 \tht}$}
        (0, {8*\k*\stretchY}) node[left] {$\frac{1}{4 \tht}$}
        (0, {8*\k*\stretchY*(1-0.7)}) node[left] {$\frac{1-\e}{4\tht}$}
        ;
\end{tikzpicture}
}
\qquad
\subfigure[\label{f:t-two-third-lower-bound-b} Expected $\gft$ relative to $+\e$ (red) and $-\e$ (blue)]
{\begin{tikzpicture}[
    declare function={
        func(\x,\eps)
    = 
        (\x < 0) * (0)
        + and(\x >= 0, \x < 1/48) * ( ( 3 * (46 - 32 * \x) * \x * ( \eps + 1 ) )/16 )
        + and(\x >= 1/48, \x < 1/6) * ( (17/96)*(1+\eps) )
        + and(\x >= 1/6, \x < 3/16) * ( -(69/8)*\x*(-1+\eps) + 6*\x*\x*(-1+\eps) + (1/96)*(-105+139*\eps) )
        + and(\x >= 3/16, \x < 1/4) * ( \eps/24 + 5/16 )
        + and(\x >= 1/4, \x < 13/48) * ( \eps/24 - 47/32 + 69*\x/8 - 6*\x*\x )
        + and(\x >= 13/48, \x < 5/16) * ( \eps/24 + 41/96 )
        + and(\x >= 5/16, \x < 1/3) * ( ( -3456*\x^2 + (1032 - 384*\eps)*\x + 152*\eps + 343 )/768 )
        + and(\x >= 1/3, \x < 2/3) * ( \eps/32 + 101/256 )
        + and(\x > 2/3, \x <= 11/16) * ( \eps/32 + (1/768) * (-2081 + 5880* \x - 3456*\x^2) )
        + and(\x > 11/16, \x < 35/48) * ( \eps / 32 + 41/96)
        + and(\x >= 35/48, \x < 3/4) * ( 37/32 + 27/8 * \x - 6*\x*\x + 19/48 * \eps - \x*\eps/2)
        + and(\x >= 3/4, \x < 13/16) * ( (\eps + 15) / 48 )
        + and(\x >= 13/16, \x < 5/6) * ( 49/32 + 27*\x/8 - 6*\x*\x + 41/96 * \eps - \x * \eps / 2)
        + and(\x >= 5/6, \x < 47/48) * ( ( \eps + 17) / 96 )
        + and(\x >= 47/48, \x < 1) * ( -1/8 * (-1 + \x) * (21 + 48 * \x + 4 * \eps) )
        + (\x >= 1) * (0)
       ;
      }
    ]
    \def\k{0.5}
    \def\stretchY{2}
    \definecolor{myblue}{RGB}{25,175,255}
    \definecolor{myred}{RGB}{255,70,75}
    \def\colorOne{myblue}
    \def\colorTwo{myred}
    % color regions of actions a_1, a_2, a_3
    \draw[gray!25, fill = gray!25] (0,0) rectangle ({\k*8*3/16}, {8*\k});
    \draw[myred!25, fill = myred!25] ({\k*8*13/48},0) rectangle ({\k*8*5/16}, {8*\k});
    \draw[myblue!25, fill = myblue!25] ({\k*8*11/16},0) rectangle ({\k*8*35/48}, {8*\k});
    % horizontal lines
    \draw[gray, very thin] ({\k*0}, {\stretchY*\k*8*(0.7/24 + 41/96)}) -- ({\k*8}, {\stretchY*\k*8*(0.7/24 + 41/96)});
    \draw[gray, very thin] ({\k*0}, {\stretchY*\k*8*(0.7 / 32 + 41/96)}) -- ({\k*8}, {\stretchY*\k*8*(0.7 / 32 + 41/96)});
    \draw[gray, very thin] ({\k*0}, {\stretchY*\k*8*(-0.7/32 + 41/96)}) -- ({\k*8}, {\stretchY*\k*8*(-0.7/32 + 41/96)});
    \draw[gray, very thin] ({\k*0}, {\stretchY*\k*8*(-0.7/24 + 41/96)}) -- ({\k*8}, {\stretchY*\k*8*(-0.7/24 + 41/96)});
    \draw[gray, very thin] ({\k*0}, {8*\k*\stretchY*(0.7/24 + 5/16)}) -- ({\k*8}, {8*\k*\stretchY*(0.7/24 + 5/16)});
    % plots
    \draw[myblue] (0,0)
        -- ({8*\k/48}, {8*\k*\stretchY*(17/96)*(1-0.7)})
        -- ({8*\k/6}, {8*\k*\stretchY*(17/96)*(1-0.7)})
        -- ({8*\k*3/16}, {8*\k*\stretchY*(-0.7/24 + 5/16)})
        -- ({8*\k/4}, {8*\k*\stretchY*(-0.7/24 + 5/16)})
        -- ({8*\k*13/48}, {8*\k*\stretchY*( -0.7/24 + 41/96 )})
        -- ({8*\k*5/16}, {8*\k*\stretchY*( -0.7/24 + 41/96 )})
        -- ({8*\k*1/3}, {8*\k*\stretchY*( -0.7/32 + 101/256 )})
        -- ({8*\k*2/3}, {8*\k*\stretchY*( -0.7/32 + 101/256 )})
        -- ({8*\k*11/16}, {8*\k*\stretchY*( -0.7 / 32 + 41/96)})
        -- ({8*\k*35/48}, {8*\k*\stretchY*( -0.7 / 32 + 41/96)})
        -- ({8*\k*3/4}, {8*\k*\stretchY*( (-0.7 + 15) / 48 )})
        -- ({8*\k*13/16}, {8*\k*\stretchY*( (-0.7 + 15) / 48 )})
        -- ({8*\k*5/6}, {8*\k*\stretchY*( ( -0.7 + 17) / 96 )})
        -- ({8*\k*47/48}, {8*\k*\stretchY*( ( -0.7 + 17) / 96 )})
        -- ({8*\k}, {8*\k*\stretchY*0})
    ;
    \draw[myred] (0,0)
        -- ({8*\k/48}, {8*\k*\stretchY*(17/96)*(1+0.7)})
        -- ({8*\k/6}, {8*\k*\stretchY*(17/96)*(1+0.7)})
        -- ({8*\k*3/16}, {8*\k*\stretchY*(0.7/24 + 5/16)})
        -- ({8*\k/4}, {8*\k*\stretchY*(0.7/24 + 5/16)})
        -- ({8*\k*13/48}, {8*\k*\stretchY*( 0.7/24 + 41/96 )})
        -- ({8*\k*5/16}, {8*\k*\stretchY*( 0.7/24 + 41/96 )})
        -- ({8*\k*1/3}, {8*\k*\stretchY*( 0.7/32 + 101/256 )})
        -- ({8*\k*2/3}, {8*\k*\stretchY*( 0.7/32 + 101/256 )})
        -- ({8*\k*11/16}, {8*\k*\stretchY*( 0.7 / 32 + 41/96)})
        -- ({8*\k*35/48}, {8*\k*\stretchY*( 0.7 / 32 + 41/96)})
        -- ({8*\k*3/4}, {8*\k*\stretchY*( (0.7 + 15) / 48 )})
        -- ({8*\k*13/16}, {8*\k*\stretchY*( (0.7 + 15) / 48 )})
        -- ({8*\k*5/6}, {8*\k*\stretchY*( ( 0.7 + 17) / 96 )})
        -- ({8*\k*47/48}, {8*\k*\stretchY*( ( 0.7 + 17) / 96 )})
        -- ({8*\k}, {8*\k*\stretchY*0})
    ;
    % axes
    \draw[->] ({-0.5*\k}, {0*\k}) -- ({10.5*\k}, {0*\k}) node[below right] {$p$};
    \draw[->] ({0*\k}, -{0.5*\k}) -- ({0*\k}, {8.5*\k});
    % labels
    \draw [decorate,decoration={brace, amplitude=3pt, mirror}, yshift=-1pt]
        (0,0) -- ({\k*8*3/16},0) node [black, midway, below, yshift=-2pt] {$a_1$};
    \draw [decorate,decoration={brace, amplitude=2.4pt, mirror}, yshift=-1pt]
        ({\k*8*13/48},0) -- ({\k*8*5/16},0) node [black, midway, below, yshift=-2pt] {$a_2$};
    \draw [decorate,decoration={brace, amplitude=2.4pt, mirror}, yshift=-1pt]
        ({\k*8*11/16},0) -- ({\k*8*35/48},0) node [black, midway, below, yshift=-2pt] {$a_3$};
    \draw (-1pt, {\stretchY*\k*8*(0.7/24 + 41/96)}) 
        -- (-2pt, {\stretchY*\k*8*(0.7/24 + 41/96)}) 
        -- (-2pt, {\stretchY*\k*8*(0.7 / 32 + 41/96)})
        -- (-1pt, {\stretchY*\k*8*(0.7 / 32 + 41/96)})
    ;
    \draw (-2pt, {\stretchY*\k*8*((0.7/24 + 41/96) + (0.7 / 32 + 41/96))/2}) node[left, xshift=-1pt] {$\Omega(\e)$};
    \draw (-1pt, {\stretchY*\k*8*(-0.7/24 + 41/96)}) 
        -- (-2pt, {\stretchY*\k*8*(-0.7/24 + 41/96)}) 
        -- (-2pt, {\stretchY*\k*8*(-0.7 / 32 + 41/96)})
        -- (-1pt, {\stretchY*\k*8*(-0.7 / 32 + 41/96)})
    ;
    \draw (-2pt, {\stretchY*\k*8*((-0.7/24 + 41/96) + (-0.7 / 32 + 41/96))/2}) node[left, xshift=-1pt] {$\Omega(\e)$};
    \draw ({(8*\k*1.01)}, {\stretchY*\k*8*(-0.7 / 24 + 41/96)}) 
        -- ({(8*\k*1.02)}, {\stretchY*\k*8*(-0.7 / 24 + 41/96)})
        -- ({(8*\k*1.02)}, {8*\k*\stretchY*(0.7/24 + 5/16)})
        -- ({(8*\k*1.01)}, {8*\k*\stretchY*(0.7/24 + 5/16)})
    ;
    \draw ({(8*\k*1.02)}, {\stretchY*\k*8*((-0.7 / 24 + 41/96) + (0.7/24 + 5/16))/2}) node[right, xshift=1pt] {$\Omega(1)$};
    \draw (0,0) node[below left] {$0$}
        ({8*\k},0) node[below] {$1$};
    % arciapli
    \draw ({-2*\k},0) node {};
    \draw ({12*\k},0) node {};
\end{tikzpicture}
}
    \caption{%
    The only three regions where it makes sense for the learner to post prices are $a_1, a_2, a_3$. Prices in  $a_1$ reveal information about the sign of $\pm\e$ suffering a $\Omega(1)$ regret; prices in $a_2$ are optimal if the distribution of the seller is the red one $(+\e)$ but incur $\Omega(\e)$ regret if it is the blue one $(-\e)$; the converse happens in $a_3$.%
    }
    \label{f:t-two-third-lower-bound}
\end{figure}
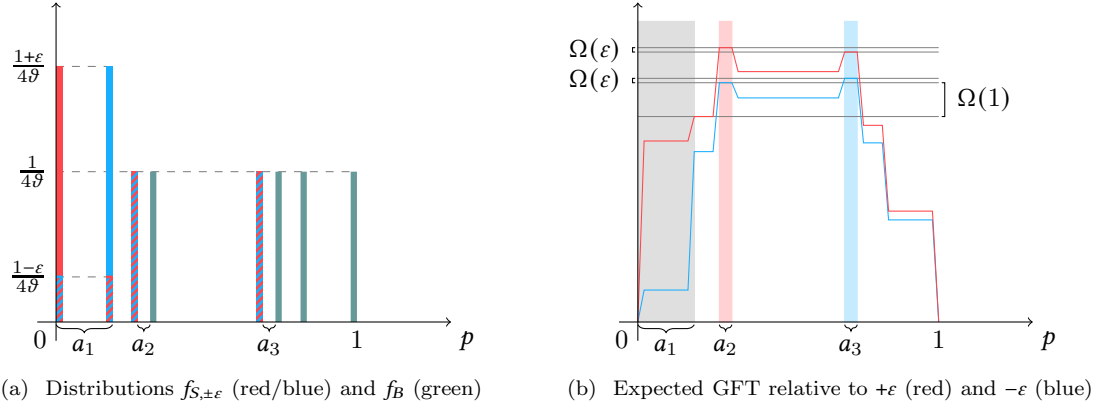
We build a family of distributions $\cD_{\pm \e}$ of the seller/buyer pair parameterized by $\e\in[0,1]$.
For the seller, for any $\e \in [0,1]$, we define the density
\[
    f_{S,\pm\e}
:=
    \frac{1}{4\tht} \lrb{
    (1\pm\e) \I_{[0,\tht]}
+ 
    (1\mp\e) \I_{\lsb{ \frac{1}{6}, \frac{1}{6} + \tht }} 
+
    \I_{\lsb{ \frac{1}{4}, \frac{1}{4} + \tht }}
+
    \I_{\lsb{ \frac{2}{3}, \frac{2}{3} + \tht }}
    } \;,
    \tag{\text{\cref{f:t-two-third-lower-bound-a}, in red/blue}}
\]
where $\tht := \nicefrac{1}{48}$ is a normalization constant.
For the buyer, we define a single density (independently of $\e$)
\[
    f_B
:=
    \frac{1}{4\tht} \lrb{
    \I_{\lsb{ \frac{1}{3}-\tht,\,\frac{1}{3} }}
+ 
    \I_{\lsb{ \frac{3}{4} - \tht,\, \frac{3}{4} }} 
+
    \I_{\lsb{ \frac{5}{6} - \tht,\, \frac{5}{6} }}
+
    \I_{\lsb{ 1-\tht,\, 1 }}
    }\;.
    \tag{\text{\cref{f:t-two-third-lower-bound-a}, in green}}
\]
In the $+\e$ (resp., $-\e$) case, the optimal price belongs to a region $a_2$ (resp., $a_3$, see \cref{f:t-two-third-lower-bound-b}).
By posting prices in the wrong region $a_3$ (resp., $a_2$) in the $+\e$ (resp., $-\e$) case, the learner incurs $\Omega(\e)$ regret.
Thus, if $\e$ is bounded away from zero, the only way to avoid suffering linear regret is to identify the sign of $\pm\e$ and play accordingly.
Clearly, the feedback received from the buyer gives no information on $\pm\e$. 
Since the feedback received from the seller at time $t$ by posting a price $p$ is $\I\{S_t \le p\}$, one can obtain information about (the sign of) $\pm\e$ only by posting prices in the costly ($\Omega(1)$-regret) sub-optimal region $a_1$.

This closely resembles the learning dilemma present in the so-called \emph{revealing action} partial monitoring game \citep{Nicolo06}.
In fact, a technical proof (see Appendix~\ref{s:proof-t-two-thrid-lower-bound-appe}), shows that our setting is harder (i.e., it has a higher minimax regret) than an instance of a revealing action problem, 
which has a known lower bound on its minimax regret of $\frac{11}{96} \brb{ \frac{1}{7} T^{2/3} }$ \citep{cesa2006regret}.
\end{proof}

\subsection{Linear Lower Bound Under Realistic Feedback (bd) \tccheck}
\label{s:lowerBoundBD}

In this section, we show that no strategy that can achieve worst-case sublinear regret when the seller/buyer pair $(S_t,B_t)$ is drawn i.i.d. from an unknown fixed distribution, even under the further assumption that the valuations of the seller and buyer have bounded densities.
This is due to a lack of observability.
For a full proof of the following theorem, see Appendix~\ref{s:lower-bd-appe}.
\begin{theorem}
\label{thm:lower-real-bd}
In the realistic-feedback model, for all horizons $T$, the minimax regret $\Rs_T$ satisfies 
\[
    \Rs_T 
:=
    \inf_{\alpha} \sup_{(S_1,B_1) \sim \cD} R_T(\alpha)
\ge
    c T \;,
\]
where $c \ge 1/24$, the infimum is over all of the learner's strategies $\alpha$, and the supremum is over all distributions $\cD$ of the seller/buyer pair such that:
\begin{itemize}
    \item[\emph{(iid)}] $(S_1,B_1),(S_2,B_2),\ldots \sim \cD$ is an i.i.d.\ sequence.
    \item[\emph{(bd)}] $(S_1,B_1)$ admits a density bounded by $M\ge24$.
\end{itemize}
\end{theorem}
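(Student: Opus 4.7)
The plan is an indistinguishability argument. I will exhibit two joint distributions $\cD_+$ and $\cD_-$ on $[0,1]^2$, each satisfying (bd) with $M = 24$, that produce identical realistic-feedback laws but whose expected-GFT functions have maxima in disjoint sub-intervals of $[0,1]$, separated by a constant suboptimality gap. Observational equivalence amounts to matching three functions of $p$---the two marginals $p \mapsto \P[S \le p]$, $p \mapsto \P[p \le B]$, and the trade probability $p \mapsto \P[S \le p \le B]$---since the joint law of the realistic feedback $(\I\{S \le p\}, \I\{p \le B\})$ is determined by these three quantities.

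The key observation enabling separated optima is the Decomposition lemma~\eqref{e:decomp-one}:
\[
    \E\bsb{\GFT(p,S,B)} = \int_{[p,1]} \P[S \le p \le \lambda \le B]\dif\lambda + \int_{[0,p]} \P[S \le \lambda \le p \le B]\dif\lambda.
\]
This shows that the expected GFT at $p$ depends on the off-diagonal survival values $\P[S \le x, B \ge y]$ for pairs with $x \le p \le y$. Observational equivalence fixes these values only on the diagonal $x = y$ (and on the boundary, via the marginals), leaving substantial freedom in the joint behavior of $(S,B)$ when they are allowed to be correlated---precisely the freedom forbidden by (iv). I would use this freedom to build $\cD_+$ and $\cD_-$ with identical marginals and trade-probability function but with the correlation pattern of ``large-gap'' trades swapped: a concrete starting point is the observationally equivalent point-mass pair $\tfrac{1}{2}\delta_{(0,1/2)} + \tfrac{1}{2}\delta_{(1/2,1)}$ versus $\tfrac{1}{2}\delta_{(0,1)} + \tfrac{1}{2}\delta_{(1/2,1/2)}$. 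The full construction smooths a suitable enrichment of such point-mass pairs (for example, two such pairs placed symmetrically around $1/2$, so that the optimum of $\cD_+$ lies in the left half of $[0,1]$ and the optimum of $\cD_-$ in the right half) into thin rectangular densities of total density at most $M$, with balanced perturbations chosen so that the trade probability is preserved over all of $[0,1]$.

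Given such $\cD_+, \cD_-$, the lower bound follows by the standard indistinguishability argument, formalized in the paper's framework via the Simulation lemma (\cref{l:simulation}). For any strategy $\alpha$, the joint law of the posted prices $(P_t)_{t=1}^T$ is identical under $\cD_+$ and $\cD_-$ (the algorithm sees feedback from a common law), so the expected fraction of rounds spent in the optimal region of $\cD_+$ (respectively, of $\cD_-$) coincides under both distributions. Combined with the constant suboptimality gap in the ``wrong'' region, a convexity argument gives $R_T(\alpha;\cD_+) + R_T(\alpha;\cD_-) = \Omega(T)$, and hence $\Rs_T = \Omega(T)$.

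The main obstacle is the distributional construction itself: matching three functional equalities (two marginals and the trade probability) \emph{exactly} on all of $[0,1]$ while respecting the density bound and forcing \emph{truly} disjoint optimal regions is delicate, because naive smoothings of corner point-mass examples generically disturb the trade probability on an interval and must be paired with balancing perturbations elsewhere. Achieving the sharp constant $c = 1/24$ with $M = 24$ further pins down the scale of the rectangles used in the smoothing.
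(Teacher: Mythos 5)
Your high-level plan is the same as the paper's: drop (iv), exploit the freedom in the correlation structure to build two bounded-density joint laws whose realistic-feedback statistics (the two marginals and the trade probability $p\mapsto\P[S\le p\le B]$, which indeed determine the law of the feedback pair) coincide for every $p$, while their expected-GFT maximizers sit in disjoint regions with a constant gap; then conclude by a two-scenario indistinguishability argument. However, there is a genuine gap: the construction that carries the entire weight of the theorem is never exhibited. Your seed pair does not do the job even qualitatively: under $\tfrac12\delta_{(0,1)}+\tfrac12\delta_{(1/2,1/2)}$ the expected gain from trade equals $\tfrac12$ for \emph{every} price, so the single price $p=\nicefrac{1}{2}$ is simultaneously optimal for both members of the pair and yields zero regret against both. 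The proposed fix (``two such pairs placed symmetrically around $\nicefrac{1}{2}$, then smooth into thin rectangles with balancing perturbations'') is exactly the delicate part: one must verify the three functional identities \emph{exactly} on all of $[0,1]$ after smoothing, keep the density below $M$, and still certify a quantitative suboptimality gap between the two optimal regions. You acknowledge this is the main obstacle, but without resolving it the argument is a plan rather than a proof, and it is not evident that your particular enrichment can be balanced so that the trade probabilities match everywhere.

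For comparison, the paper resolves precisely this point with an explicit pair: $f=\frac{64}{3}\Brb{\I_{[0,\nicefrac18]\times[\nicefrac38,\nicefrac12]}+\I_{[\nicefrac14,\nicefrac38]\times[\nicefrac78,1]}+\I_{[\nicefrac12,\nicefrac58]\times[\nicefrac58,\nicefrac34]}}$ and its anti-diagonal reflection $g(s,b)=f(1-b,1-s)$. These have identical marginals and trade-probability function (the three seller blocks are re-paired with the three buyer blocks), density $\nicefrac{64}{3}\le 24$, optimal prices $\nicefrac38$ and $\nicefrac58$ respectively, and a gap $\nicefrac13-\nicefrac14=\nicefrac1{12}$ off the correct region, which yields the constant $c=\nicefrac1{24}$. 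The paper then formalizes the ``feedback is uninformative'' step by restricting to the family $(1-\lambda)f+\lambda g$ and invoking the Simulation and Embedding lemmas to reduce to a trivial no-feedback two-action game; your randomization/convexity argument would serve the same purpose once the pair $(\cD_+,\cD_-)$ actually exists. So the missing idea is not the reduction but the concrete, verifiable construction of the indistinguishable pair with separated optima.
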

\begin{proof}[Proof sketch]
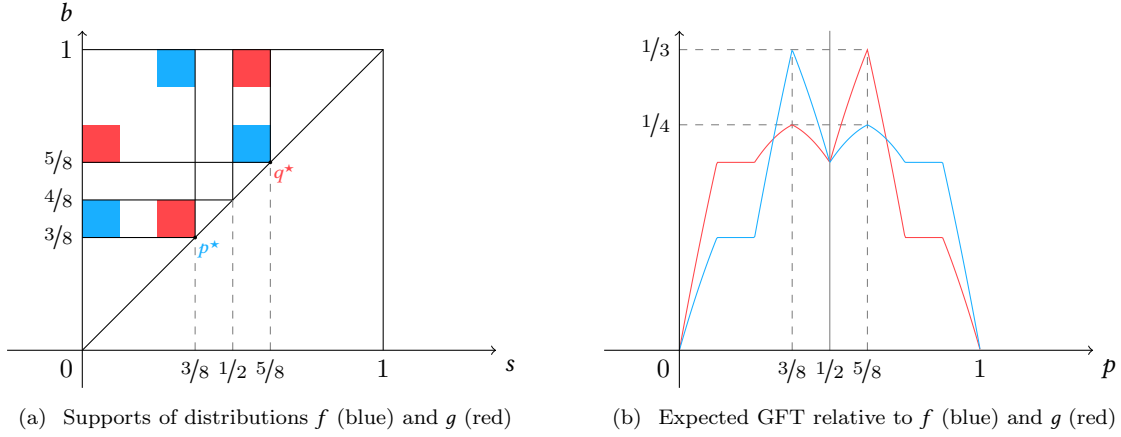
\begin{figure}
    \centering
    \subfigure[\label{f:linear-lower-bound-lip-a} Supports of distributions $f$ (blue) and $g$ (red)]
    {
    \begin{tikzpicture}
    \def\k{0.5}
    \definecolor{myblue}{RGB}{25,175,255}
    \definecolor{myred}{RGB}{255,70,75}
    \def\colorOne{myblue}
    \def\colorTwo{myred}
    % colored squares
    \fill[\colorOne] ({0*\k},{3*\k}) rectangle ({1*\k},{4*\k});
    \fill[\colorOne] ({4*\k},{5*\k}) rectangle ({5*\k},{6*\k});
    \fill[\colorOne] ({2*\k},{7*\k}) rectangle ({3*\k},{8*\k});
    \fill[\colorTwo] ({2*\k},{3*\k}) rectangle ({3*\k},{4*\k});
    \fill[\colorTwo] ({4*\k},{7*\k}) rectangle ({5*\k},{8*\k});
    \fill[\colorTwo] ({0*\k},{5*\k}) rectangle ({1*\k},{6*\k});
    % external square
    \draw ({0*\k}, {8*\k}) -- ({8*\k}, {8*\k}) -- ({8*\k}, {0*\k});
    % diagonal lines
    \draw ({0*\k}, {0*\k}) -- ({8*\k}, {8*\k});
    % p*
    \draw ({\k*0}, {3*\k}) -- ({\k*3}, {3*\k}) -- ({\k*3}, {8*\k});
    \draw[gray, dashed] ({\k*3}, {0*\k}) -- ({\k*3}, {3*\k});
    % q*
    \draw ({0*\k}, {\k*5}) -- ({\k*5}, {5*\k}) -- ({\k*5}, {8*\k});
    \draw[gray, dashed] ({5*\k}, {\k*0}) -- ({\k*5}, {5*\k});
    % 1/2
    \draw ({0*\k}, {\k*4}) -- ({\k*4}, {4*\k}) -- ({\k*4}, {8*\k});
    \draw[gray, dashed] ({4*\k}, {\k*0}) -- ({\k*4}, {4*\k});
    \draw[->] ({-2*\k}, {0*\k}) -- ({11*\k}, {0*\k}) node[below right] {$s$};
    \draw[->] ({0*\k}, -{1*\k}) -- ({0*\k}, {8.5*\k}) node[above left] {$b$};
    % labels
    \draw (0,0) node[below left] {$0$}
        ({\k*3, 0}) node[below] {$\nicefrac{3}{8}$}
        ({\k*4, 0}) node[below] {$\nicefrac{1}{2}$}
        ({\k*5, 0}) node[below] {$\nicefrac{5}{8}$}
        ({\k*8, 0}) node[below] {$1$}
        ({0, \k*3}) node[left] {$\nicefrac{3}{8}$}
        ({0, \k*4}) node[left] {$\nicefrac{4}{8}$}
        ({0, \k*5}) node[left] {$\nicefrac{5}{8}$}
        ({0, \k*8}) node[left] {$1$}
        ;
    \draw ({\k*3}, {\k*3}) node[circle, draw, fill, inner sep=0pt, minimum width=1pt] {};
        \draw ({\k*(3+0.4)}, {\k*(3-0.3)}) node[\colorOne] {$_{\ps}$};
    \draw ({\k*5}, {\k*5}) node[circle, draw, fill, inner sep=0pt, minimum width=1pt] {};
        \draw ({\k*(5+0.4)}, {\k*(5-0.3)}) node[\colorTwo] {$_{\qs}$};
    \end{tikzpicture}
    }
    \qquad
    \subfigure[\label{f:linear-lower-bound-lip-b} Expected $\gft$ relative to $f$ (blue) and $g$ (red)]
    {
    \begin{tikzpicture}[
    declare function={
        func(\x)
    = 
        (\x < 0) * (0)
        + and(\x >= 0, \x < 1/8) * ( (1/6)*\x*(7-8*\x) )
        + and(\x >= 1/8, \x < 2/8) * ( 1/8 )
        + and(\x >= 2/8, \x < 3/8) * ( (1/6)*(-8*\x*\x + 15*\x - 5/2) )
        + and(\x >= 3/8, \x < 4/8) * ( (1/24)*(-32*\x*\x + 4*\x + 11) )
        + and(\x >= 4/8, \x < 5/8) * ( (1/24)*(-32*\x*\x + 44*\x - 9) )
        + and(\x >= 5/8, \x < 6/8) * ( (1/6)*(-8*\x*\x + 9*\x - 1) )
        + and(\x >= 6/8, \x < 7/8) * ( 5/24 )
        + and(\x >= 7/8, \x < 1) * ( (1/6)*(-8*\x*\x + 5*\x +3) )
        + (\x >= 1) * (0)
      ;
      }
    ]
    \def\k{0.5}
    \def\stretchY{3}
    \definecolor{myblue}{RGB}{25,175,255}
    \definecolor{myred}{RGB}{255,70,75}
    \def\colorOne{myblue}
    \def\colorTwo{myred}
    % dashed vertical lines
    \draw[gray, dashed] ({3*\k}, {\k*0}) -- ({\k*3}, {\stretchY*\k*8*(1/3)});
    \draw[gray, dashed] ({5*\k}, {\k*0}) -- ({\k*5}, {\stretchY*\k*8*(1/3)});
    % dashed horizontal lines
    \draw[gray, dashed] ({\k*0}, {\stretchY*\k*8*(1/3)}) -- ({\k*5}, {\stretchY*\k*8*(1/3)});
    \draw[gray, dashed] ({\k*0}, {\stretchY*\k*8*(1/4)}) -- ({\k*5}, {\stretchY*\k*8*(1/4)});
    % vertical line
    \draw[gray, thin] ({\k*4}, 0) -- ({\k*4}, {\k*8.5});
    % plots
    \draw[domain = 0:{\k*8}, myred, samples = 300] plot (\x, {\stretchY*\k*8*func(1-\x/(\k*8))});
    \draw[domain = 0:{\k*8}, myblue, samples = 300] plot (\x, {\stretchY*\k*8*func(\x/(\k*8))});
    % axes
    \draw[->] ({-2*\k}, {0*\k}) -- ({11*\k}, {0*\k}) node[below right] {$p$};
    \draw[->] ({0*\k}, -{1*\k}) -- ({0*\k}, {8.5*\k});
    % labels
    \draw (0,0) node[below left] {$0$}
        ({\k*3, 0}) node[below] {$\nicefrac{3}{8}$}
        ({\k*4, 0}) node[below] {$\nicefrac{1}{2}$}
        ({\k*5, 0}) node[below] {$\nicefrac{5}{8}$}
        ({\k*8, 0}) node[below] {$1$}
        ({0, 8*\k*\stretchY/4}) node[left] {$\nicefrac{1}{4}$}
        ({0, 8*\k*\stretchY/3}) node[left] {$\nicefrac{1}{3}$}
        ;
\end{tikzpicture} 
}
    \caption{Under realistic feedback, the two densities $f$ and $g$ are indistinguishable. The optimal price $p^\star$ for $f$ gives constant regret under $g$ and $q^\star$ does the converse.}
    \label{f:linear-lower-bound-lip}
\end{figure}
Consider the two joint densities $f$ and $g$ of the seller/buyer pair as the normalized indicator functions of the red and blue squares in \cref{f:linear-lower-bound-lip-a}. Formally 
\[
    f
= 
    \frac{64}{3}  \Brb{ \I_{\lsb{ \nicefrac{0}{8}, \,\nicefrac{1}{8} } \times \lsb{ \nicefrac{3}{8}, \,\nicefrac{4}{8} } } + \I_{\lsb{ \nicefrac{2}{8}, \,\nicefrac{3}{8} } \times \lsb{ \nicefrac{7}{8}, \,\nicefrac{8}{8} }} + \I_{\lsb{ \nicefrac{4}{8}, \,\nicefrac{5}{8} } \times \lsb{ \nicefrac{5}{8}, \, \nicefrac{6}{8} }} }
\]
and $g(s,b)=f(1-b,1-s)$.
In the $f$ (resp., $g$) case, the optimal price belongs to the region $[0,\nicefrac{1}{2}]$ (resp., $(\nicefrac{1}{2}, 1]$, see \cref{f:linear-lower-bound-lip-b}).
By posting prices in the wrong region $(\nicefrac{1}{2}, 1]$ (resp., $[0,\nicefrac{1}{2}]$) in the $f$ (resp., $g$) case, the learner incurs at least a $\nicefrac{1}{3}-\nicefrac{1}{4} = \nicefrac{1}{12}$ regret.
Thus, the only way to avoid suffering linear regret is to determine if the valuations of the seller and buyer are generated by $f$ or $g$.
For each price $p \in [0,1]$, consider the four rectangles with opposite vertices $(p,p)$ and $(u_i,v_i)$, where $\lcb{ (u_i,v_i) }_{i=1,\ldots,4}$ are the four vertices of the unit square.
Note that the only information on the distribution of $(S,B)$ that the learner can gather from the realistic feedback $\brb{ \I\{S_t \le p\}, \, \I \{p\le B_t\} }$ received after posting a price $p$ is (an estimate of) the area of the portion of the support of the distribution included in each of these four rectangles.
However, these areas coincide in the cases $f$ and $g$.
Hence, under realistic feedback, $f$ and $g$ are completely indistinguishable.
Therefore, given that the optimal price in the $f$ (resp., $g$) case is $\nicefrac{3}{8}$ (resp., \nicefrac{5}{8}), the best that the learner can do is to sample prices uniformly at random in the set $\{\nicefrac{3}{8},\nicefrac{5}{8}\}$, incurring a regret of $\nicefrac{T}{24}$. 
For a formalization of this argument leveraging the techniques we described in the introduction, see Appendix~\ref{s:lower-bd-appe}.
\end{proof}

\subsection{Linear Lower Bound Under Realistic Feedback (iv) \tccheck}
\label{sec:linear_real}

In this section, we prove that in the realistic-feedback case, no strategy can achieve sublinear regret without any limitations on how concentrated the distributions of the valuations of the seller and buyer are, not even if they are independent of each other (iv).

At a high level, if the two distributions of the seller and the buyer are very concentrated in a small region, finding an optimal price is like finding a needle in a haystack.
For a full proof of the following theorem, see Appendix~\ref{sec:linear_real-appe}. 
\begin{theorem}
\label{thm:lower-real-iv}
In the realistic-feedback model, for all horizons $T$, the minimax regret $\Rs_T$ satisfies 
\[
    \Rs_T 
:=
    \inf_{\alpha} \sup_{(S_1,B_1) \sim \cD} R_T(\alpha)
\ge
    c T \;,
\]
where $c \ge 1/8$, the infimum is over all learner's strategies $\alpha$, and the supremum is over all distributions $\cD$ of the seller/buyer pair such that:
\begin{itemize}
    \item[\emph{(iid)}] $(S_1,B_1),(S_2,B_2),\ldots \sim \cD$ is an i.i.d.\ sequence.
    \item[\emph{(iv)}] $S_1$ and $B_1$ are independent of each other.
\end{itemize}
\end{theorem}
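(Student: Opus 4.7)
The plan is to exhibit a family of (iv) product distributions realizing a ``needle-in-a-haystack'' phenomenon and to invoke Yao's minimax principle via an information-theoretic argument. For an integer $N = N(T)$ growing with $T$ (e.g.\ $N := 2^{\lceil 4T\rceil}$, which is allowed because the minimax is taken separately at each horizon), I would pick equispaced points $p_1,\dots,p_N \in [1/4,3/4]$ and, for each $k \in [N]$, define the product distribution $\cD_k := \mu_S^k \otimes \mu_B^k$ with $\mu_S^k := \frac{1}{2}\delta_0 + \frac{1}{2}\delta_{p_k}$ and $\mu_B^k := \frac{1}{2}\delta_1 + \frac{1}{2}\delta_{p_k}$. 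Each $\cD_k$ satisfies (iid) and (iv) but has atoms, so (bd) fails, as required.

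The next step is to compute the expected gain by expanding $\E_{\cD_k}\bsb{\gft(p, S_1, B_1)}$ over the four equally likely outcomes of $(S_1, B_1)$; this should give the piecewise-constant function equal to $(1+p_k)/4$ for $p < p_k$, to $1/2$ at $p = p_k$, and to $(2-p_k)/4$ for $p > p_k$. Hence $p_k$ is the unique optimum, and every $p \ne p_k$ suffers a suboptimality gap of at least $\min(p_k, 1-p_k)/4 \ge 1/16$.

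I would then apply Yao's minimax principle with the uniform prior on $k \in [N]$: the worst-case regret is at least the Bayes regret under this prior. Since realistic feedback delivers at most two Bernoulli bits per round, the mutual information between $k$ and the length-$T$ observation history is at most $2T$, so the conditional entropy $H(k \mid \text{history}_t)$ stays at least $\log_2 N - 2T \ge 2T$ uniformly in $t \le T$. Fano's inequality then forces the learner's best guess for $k$ at any round to be correct with probability bounded away from $1$ by a constant, i.e.\ the expected maximum posterior atom on $k$ stays below a constant strictly less than $1$. Combined with the $1/16$ gap, this yields expected per-round regret $\Omega(1)$ and hence $R_T \ge cT$ for a universal $c > 0$.

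The main obstacle is the quantitative Fano step: one needs a uniform-in-$t$ bound on the expected maximum posterior atom, ideally sharp enough to recover the announced constant $c \ge 1/8$. This may require either refining the construction (e.g.\ tuning the mixture weights or narrowing the range of the $p_k$ to enlarge the gap) or, consistent with the general strategy of the paper, packaging the argument as an explicit reduction to a hopeless partial-monitoring game via the Embedding and Simulation Lemmas announced in Section~1.2.
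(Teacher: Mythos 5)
Your hard instances are essentially the paper's (seller $\tfrac12\delta_0+\tfrac12\delta_{x}$, buyer $\tfrac12\delta_{x}+\tfrac12\delta_1$, unique maximizer at the needle $x$, gap $\min(x,1-x)/4$ everywhere else), but the mechanism by which you prevent the learner from hitting the needle is genuinely different, and as calibrated it does not deliver the stated constant. The paper fixes the strategy $\alpha$ first and makes a pure cardinality argument: since realistic feedback takes only four values per round, the set of prices that $\alpha$ posts with positive probability, over all rounds and all feedback histories, is countable; picking $x^\star\in[\frac{1-\varepsilon}{2},\frac{1+\varepsilon}{2}]$ outside this countable set gives $\Pr[P_t=x^\star]=0$ at every round, so the learner pays the full gap $\ge\frac{1-\varepsilon}{8}$ each round, and letting $\varepsilon\to0$ yields $c=1/8$ with no information theory at all. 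Your Yao-plus-Fano route with $N=2^{\lceil 4T\rceil}$ needles is sound in outline, but the quantitative obstacle you flag is the real sticking point, and it is not cured by tuning mixture weights or enlarging the gap: with $\log_2 N\approx 4T$ and at most $2T$ feedback bits, Fano only bounds the per-round probability of posting the exact optimum by roughly $(2T+1)/(4T)\approx\tfrac12$, so even squeezing the grid toward $\tfrac12$ you get per-round regret about $\tfrac12\cdot\tfrac18=\tfrac1{16}$ (about $\tfrac1{32}$ with your grid in $[1/4,3/4]$), short of the claimed $1/8$.

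The repair inside your framework is to drive the identification probability to zero rather than the gap up: take $\log_2 N\gg T$ (e.g.\ $\log_2 N=T^2$), so Fano gives $\Pr[P_t=p_k]\le (2T+1)/T^2\to0$, place the grid in $[\frac{1-\delta}{2},\frac{1+\delta}{2}]$, and let $\delta\to0$ and $N\to\infty$ under the supremum over distributions; this recovers every constant below $1/8$ and hence $\Rs_T\ge T/8$. You would also need to handle the learner's internal randomization explicitly (e.g.\ $I\brb{k;\text{history},U}\le H\brb{\text{history}\mid U}\le 2T$ bits) so that the two-bits-per-round bound is legitimate. The paper's countable-atoms argument sidesteps all of this by making the exact-hit probability zero outright, which is why it reaches $1/8$ directly; your approach, once repaired as above, is a valid alternative but uses strictly more machinery for the same constant.
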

\begin{proof}[Proof sketch]
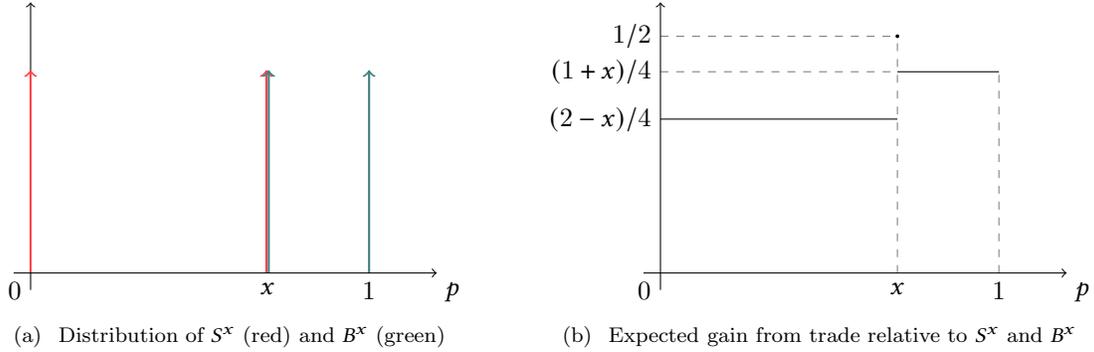
\begin{figure}
    \centering
    \subfigure[\label{fig:linear_lower_real_iv-a} Distribution of $S^x$ (red) and $B^x$ (green)]{
    \begin{tikzpicture}
    \def\k{0.45}
    \def\x{7}
    \def\stretchY{3}
    \definecolor{myblue}{RGB}{25,175,255}
    \definecolor{myred}{RGB}{255,70,75}
    \def\colorOne{myblue}
    \def\colorTwo{myred}
    \definecolor{mygreen}{RGB}{64,128,128}
    % y axis
    \draw[->] ({0*\k}, {-0.5*\k}) -- ({0*\k}, {8*\k});
    % Dirac measures
    \draw[->, myred, thick] (0,0) -- (0, {2*\k*\stretchY});
    \draw[->, myred, thick] ({\x*(0.995)*\k}, 0) -- ({\x*(0.995)*\k}, {2*\k*\stretchY});
    \draw[->, mygreen, thick] ({\x*(1.005)*\k}, 0) -- ({\x*(1.005)*\k}, {2*\k*\stretchY});
    \draw[->, mygreen, thick] ({10*\k}, 0) -- ({10*\k}, {2*\k*\stretchY});
    \draw (0,0) node[below left] {$0$}
        ({\x*\k}, 0) node [below] {$x$}
        ({10*\k}, 0) node [below] {$1$}
        ({12*\k}, 0) node [below right] {$p$}
    ;
    % x axis
    \draw[->] ({-0.5*\k}, {0*\k}) -- ({12*\k}, {0*\k});
    \end{tikzpicture}}
    \qquad
    \subfigure[\label{fig:linear_lower_real_iv-b} Expected gain from trade relative to $S^x$ and $B^x$]
    {
    \begin{tikzpicture}
    \def\k{0.45}
    \def\x{7}
    \def\stretchY{3.5}
    \definecolor{myblue}{RGB}{25,175,255}
    \definecolor{myred}{RGB}{255,70,75}
    \def\colorOne{myblue}
    \def\colorTwo{myred}
    \definecolor{mygreen}{RGB}{64,128,128}
    \draw[gray, dashed] ({\x*\k},0) -- ({\x*\k},{2*\k*\stretchY});
    \draw[gray, dashed] ({10*\k},0) -- ({10*\k},{\x+10)/10*\k*\stretchY});
    \draw[gray, dashed] (0,{(\x+10)/10*\k*\stretchY}) -- ({\x*\k},{(\x+10)/10*\k*\stretchY});
    \draw[gray, dashed] (0,{2*\k*\stretchY}) -- ({\x*\k},{2*\k*\stretchY});
    \draw (0,{(20-\x)/10*\k*\stretchY}) -- ({\x*\k},{(20-\x)/10*\k*\stretchY});
    \draw ({\x*\k},{(\x+10)/10*\k*\stretchY}) -- ({10*\k},{(\x+10)/10*\k*\stretchY});
     \draw[->] (0, {-0.5*\k}) -- (0,{8*\k});
     \draw[->] ({-0.5*\k}, 0) -- ({12*\k},0);
     \draw ({\x*\k},{2*\k*\stretchY}) node[circle, draw, fill, inner sep = 0pt, minimum width = 1pt] {};
    \draw (0,0) node[below left] {$0$}
        ({\x*\k},0) node[below] {$x$}
        ({10*\k},0) node[below] {$1$}
        (0,{(\x+10)/10*\k*\stretchY}) node[left] {$(1+x)/4$} % 1/4 + x/4 = (1 + x)/4
        (0,{(20-\x)/10*\k*\stretchY}) node[left] {$(2-x)/4$} % 1/4 + x/4 = (1 + x)/4
        (0,{2*\k*\stretchY}) node[left] {$1/2$} % 1/4 + x/4 = (1 + x)/4
        ({12*\k},0) node[below right] {$p$}
    ;
    \end{tikzpicture}
    }
    \caption{All prices but $x$ have high regret. However, under realistic feedback, finding $x$ in finite time is impossible.}
    \label{fig:linear_lower_real_iv}
\end{figure}
Consider a family of seller/buyer distributions $(S^x,B^x)$, parameterized by $x\in I$, where $I$ is a small interval centered in $\nicefrac 12$, $S^x$ and $B^x$ are independent of each other, and they satisfy
\[
S^x=\begin{cases}
    x &\text{with probability $\frac 12$}\\
    0 &\text{with probability $\frac 12$} 
\end{cases}\;,
\qquad
B^x=\begin{cases}
    x &\text{with probability $\frac 12$}\\
    1 &\text{with probability $\frac 12$} 
\end{cases}\;.
\]
The distributions and the corresponding gain from trade are represented in \cref{fig:linear_lower_real_iv-a} and \cref{fig:linear_lower_real_iv-b}, respectively. 
A direct verification shows that the best fixed price with respect to $(S^x,B^x)$ is $p=x$. 
Furthermore, by posting any other prices, the learner incurs a regret of approximately $1/2$ with probability $1/4$.
It is intuitively clear that no strategy can locate (exactly!) each possible $x\in I$ in a finite number of steps. 
This results, for any strategy, in regret of at least (approximately) $T/8$.
See Appendix~\ref{sec:linear_real-appe} for a more detailed analysis.
\end{proof}

\section{Adversarial Setting: Linear Lower Bound Under Full Feedback \tccheck}
\label{sec:adversarial}

In this section, we prove that even in the simpler full-feedback case, no strategy can achieve worst-case sublinear regret in an adversarial setting. 
Lower bounds for the adversarial setting have a slightly different structure that the stochastic ones. The idea of the proof is to build, for any strategy, a {\em hard} sequence of sellers and buyers' valuations $(s_1,b_1),(s_2,b_2),\ldots$ which causes the algorithm to suffer linear regret for any horizon $T$.
\begin{theorem}
\label{thm:adv-lower}
In the full-feedback adversarial (adv) setting, for all horizons $T$, the minimax regret $\Rs_T$ satisfies
\[
    \Rs_T 
:=
    \inf_{\alpha} \sup_{(s_1,b_1),(s_2,b_2),\ldots} R_T(\alpha)
\ge 
    c T \;,
\]
where $c \ge 1/4$, the infimum is over all of the learner's strategies $\alpha$, and the supremum is over all deterministic sequences $(s_1,b_1),(s_2,b_2),\ldots \in [0,1]^2$ of the seller and buyer's valuations.
\end{theorem}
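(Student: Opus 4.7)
The plan is to apply Yao's minimax principle: since
\[
    \inf_\alpha \sup_{\mathrm{seq}} R_T(\alpha)
\ge
    \sup_\mu \inf_{\alpha \text{ det}} \E_{\mathrm{seq}\sim\mu}[R_T(\alpha)],
\]
it suffices to exhibit a distribution $\mu$ over deterministic seller--buyer sequences such that every deterministic algorithm incurs expected regret at least $T/4$ when the sequence is sampled from $\mu$. The bound on $\Rs_T$ is then immediate.

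I would construct $\mu$ via a Cantor-ternary-set-inspired recursion. Draw i.i.d.\ fair bits $\eta_1, \eta_2, \ldots \in \{0, 1\}$ and use them to simultaneously define a hidden target price $p^\star(\eta) \in [0, 1]$, living in a Cantor-like subset, and the sequence of valuations $(s_t(\eta), b_t(\eta))_{t \ge 1}$. The recursion is arranged so that the following two properties hold. \textbf{(a)}~$p^\star(\eta) \in [s_t(\eta), b_t(\eta)]$ for every $t$, with the widths $b_t(\eta) - s_t(\eta)$ summing in expectation to at least $T/2$; hence
\[
    \E\Bsb{\max_{p \in [0,1]} \sum_{t=1}^T \gft_t(p)}
\ge
    \E\Bsb{\sum_{t=1}^T \gft_t(p^\star(\eta))}
\ge
    T/2.
\]
\textbf{(b)}~After observing the past feedback $(s_1, b_1), \ldots, (s_{t-1}, b_{t-1})$ the learner learns only $\eta_1, \ldots, \eta_{t-1}$, and the two candidate intervals corresponding to $\eta_t = 0$ and $\eta_t = 1$ are disjoint, separated by a Cantor-style ``middle third'' of positive length; therefore any past-measurable $P_t$ lies outside the realized $[s_t(\eta), b_t(\eta)]$ with conditional probability at least $1/2$ over the fresh bit $\eta_t$. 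Property \textbf{(b)} caps the learner's expected per-round reward at half the realized width, so summing over $t$ yields expected learner reward $\le T/4$; subtracting from \textbf{(a)} gives expected regret $\ge T/4$.

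The main obstacle is the joint construction realizing \textbf{(a)} and \textbf{(b)} simultaneously: widths must aggregate to $\Theta(T)$ (for \textbf{(a)}) while the two candidate intervals remain disjoint at each round (for \textbf{(b)}), two requirements that naturally compete for room in $[0, 1]$. The Cantor ternary hierarchy resolves this tension through its middle-third removal: at every level of the recursion, the two outer-third candidate intervals are separated by a positive-measure gap, guaranteeing the required disjointness for \textbf{(b)}; meanwhile the hierarchical nesting places $p^\star(\eta)$ in the revealed ``outer third'' at every level, so its cumulative reward $\sum_t \gft_t(p^\star(\eta))$ can be made $\Omega(T)$ despite the Cantor-like shrinking of individual widths. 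Formalizing the conditional uniformity of $\eta_t$ given the past feedback, and verifying the $1/2$ miss probability for arbitrary past-measurable $P_t$, is the technical core of the argument.
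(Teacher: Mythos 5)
Your plan is essentially a randomized-adversary (Yao) version of the paper's argument: it works, but the mechanism forcing the $1/2$ miss probability is genuinely different. The paper fixes the strategy $\alpha$ first and builds a single deterministic sequence adapted to it: at each round the two candidate trade intervals $[0,d]$ and $[c,1]$ are disjoint (separated by a Cantor middle third), so at least one of them gets mass at most $1/2$ under the learner's conditional price distribution $\nu_t$, and the adversary simply picks that one; nesting the middle-third cells yields the common price $p^\star=\lim_t c_t$ lying in every trade interval, each of width roughly $1/2$. You instead choose left/right with a fresh fair bit $\eta_t$ and argue that any past-measurable $P_t$ is independent of $\eta_t$, hence trades with conditional probability at most $1/2$; the easy direction of Yao (really just $\sup_{\mathrm{seq}} R_T(\alpha) \ge \mathbb{E}_{\mu}$) transfers this to the worst case. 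Your route buys a single strategy-independent hard distribution (and the independence argument in fact handles randomized learners directly, so restricting to deterministic algorithms is unnecessary), whereas the paper's route buys an explicit deterministic hard sequence for each strategy; the underlying Cantor geometry is the same.

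One quantitative point needs repair: properties (a) and (b) as stated are mutually inconsistent. If the two candidate intervals at round $t$ are disjoint subintervals of $[0,1]$ separated by a gap of length $g_t>0$, their lengths sum to at most $1-g_t$, so the conditional expected width is at most $(1-g_t)/2<1/2$ and the expected total width is strictly below $T/2$; likewise the learner's conditional expected reward is bounded by half the \emph{larger} candidate width, not half the realized width. The fix is exactly the paper's $\varepsilon$-bookkeeping: start from a cell of width $O(\varepsilon)$ around $1/2$ and shrink by a factor $3$ per round, so the gaps $g_t=\varepsilon 3^{-(t-1)}$ are geometrically summable, every realized width is at least $(1-3\varepsilon)/2$, and the per-round expected regret is at least $\tfrac12\min$ of the two candidate widths, i.e.\ at least $(1-3\varepsilon)/4$. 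Letting $\varepsilon\to 0$ (your supremum over $\mu$) then recovers the constant $1/4$, just as the paper does via the arbitrariness of its $\varepsilon\in(0,1/18)$.
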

\begin{proof}
We begin by fixing any strategy $\alpha$ of the learner.
This is a sequence of functions $\alpha_t$ mapping the past feedback $(s_1,b_1), \ldots, (s_{t-1},b_{t-1})$, together with  some internal randomization, to the price $P_t$ to be posted by the learner at time $t$.
In other words, the strategy maintains a distribution $\nu_t$ over the prices that is updated after observing each new pair $(s_t,b_t)$ and used to draw each new price $P_t$.
We will show how to constructively determine a sequence of seller/buyer valuations that is hard for $\alpha$ to learn.
This sequence is oblivious to the prices $P_1,P_2,\ldots$ posted by $\alpha$, in the sense it does not have access to the realizations of its internal randomization.
The idea is, at any time $t$, to determine a seller/buyer pair $(s_t,b_t)$ either of the form $(c_t,1)$ or $(0,d_t)$, with $c_t \approx \frac{1}{2} \approx d_t$, such that the probability $\nu_t$ that the strategy picks a price $P_t \in [s_t,b_t]$ (i.e., that there is a trade) is at most $\nicefrac{1}{2}$ and, at the same time, there is common price $\ps$ which belongs to $[s_t,b_t]$ for all times $t$.
This way, since $b_t-s_t \approx \frac{1}{2}$ for all $t$, the regret of $\alpha$ with respect to $(s_1,b_1), (s_2,b_2), \ldots$ is at least (approximately) greater than or equal to $\nicefrac{T}{4}$.

The formal construction proceeds inductively as follows. 
Let $\e \in \brb{ 0, \frac{1}{18} }$.
Let
\[
    \begin{cases}
        c_1:=\frac{1}{2}-\frac{3}{2}\varepsilon, \  d_1:=\frac{1}{2}-\frac{1}{2}\varepsilon, \  s_1 := 0,  \ b_1:=d_1,
    &
        \text{ if } \nu_{1}\bsb{\bsb{0,\frac{1}{2}-\frac{1}{2}\varepsilon}}\le\frac{1}{2} \;,
    \\
        c_1:=\frac{1}{2}+\frac{1}{2}\varepsilon, \  d_1:=\frac{1}{2}+\frac{3}{2}\varepsilon, \  s_1 := c_1, \  b_1:=1,
    &
        \text{ otherwise}.
    \end{cases}
\]
Then, for any time $t$, given that $c_i, d_i, s_i, b_i$ are defined for all $i\le t$ and recalling that $\nu_{t+1}$ is the distribution over the prices at time $t+1$ (of the strategy $\alpha$ after observing the feedback $(s_1,b_1), \ldots, (s_t,b_t)$), let
\[
    \begin{cases}
        c_{t+1}:=c_{t}, \ d_{t+1}:=d_{t}-\frac{2\varepsilon}{3^{t}}, \ s_{t+1}:=0, \ b_{t+1}:=d_{t+1},
    &
        \text{if } \nu_{t+1}\bsb{ \bsb{0,c_t+\frac{\varepsilon}{3^{t}}} }\le\frac{1}{2} \;,
    \\
        c_{t+1}:=c_{t}+\frac{2\varepsilon}{3^{t}}, \ d_{t+1}:=d_{t}, \ s_{t+1} := c_{t+1}, \ b_{t+1}:=1,
    &
        \text{otherwise}.
    \end{cases}
\]
Then the sequence of seller/buyer valuations $(s_1,b_1),(s_2,b_2),\ldots$ defined above by induction satisfies:
\begin{itemize}
    \item $\nu_t\bsb{ [s_t,b_t] } \le \frac{1}{2}$, for each time $t$.
    \item There exists $\ps \in [0,1]$ such that $\ps\in[s_t,b_t]$, for each time $t$ (e.g. $\ps:=\lim_{t \to \infty} c_t$).
    \item $b_t - s_t \ge \frac{1-3\e}{2}$, for each time $t$.
\end{itemize}
This implies, for any horizon $T$,
\[
    R_T(\alpha)
=
    \sum_{t=1}^T \gft(\ps,s_t,b_t) -  \sum_{t=1}^T \E \lsb{ \gft ( P_t, s_t, b_t ) }
\ge
    \sum_{t=1}^T (b_t-s_t)\brb{1-\nu_t \bsb{ [s_t,b_t] }}
\ge 
    \frac{1-3\e}{4}T\;.
\]
Since $\e$ and $\alpha$ are arbitrary, 
this yields immediately $\Rs \ge {T}/{4}$.
\end{proof}

\section{Breaking Linear Lower Bounds: Weakly Budget Balanced Results}
\label{s:wbb}

In this section, we show how to break the linear lower bound of \cref{s:lowerBoundBD} without requiring the independence of the valuations of the seller and the buyer. 
To do so, we move from a budget balance to a \emph{weak} budget balance mechanism.
In this setting, rather than posting a single price, the learner can post two (possibly distinct) prices $0 \le p \le p' \le 1$, $p$ to the seller, and $p'$ to the buyer. 
This condition allows the platform to extract money from the trade but not to subsidize it.
Naturally, this changes the benchmark: 
if the learner posts a pair $(p,p') \in [0,1]^2$ and the valuations of the seller and the buyer are $(s,b) \in [0,1]^2$, the net gain of the seller is $p-s$ while that of the buyer is $b-p'$. 
Thus, the gain from trade in this setting becomes
\begin{align*}
\GFT\colon [0,1]^2 \times [0,1]^2 & \to [0,1] \;,
\\
(p,p',s,b)&\mapsto (b - p' + p - s) \ind\{s\le p \le p' \le b\} \;.
\end{align*}
Note that posting the same price $p=p'$ to both the seller and the buyer leads to the old definition of gain for trade
$(b-s)\I \{s \le p \le b \}$,
which we denoted by $\GFT(p,s,b)$ in previous sections.
For this reason and to keep the notation lighter, we will denote $\GFT(p,p,s,b)$ simply by $\GFT(p,s,b)$ here.

We design a explore-then-exploit algorithm, that we call Scouting Blindits (\cref{alg:etc}).
In the exploration phase (scouting phase), the high-level idea is to leverage the Decomposition lemma \eqref{e:decomp-four} to build an accurate estimate $\hat{F}_k + \hat{G}_k$ of the gain from trade at each point $q_k$ of a suitably fine grid.
By the bounded density assumption, which implies the \lip{}ness of the expected gain from trade, this is sufficient to approximate $\E\bsb{ \GFT(\cdot,S_1,B_1) }$ uniformly, so that the price $\hat{P}^\star$ that maximizes the estimates $\hat{F}_k + \hat{G}_k$ is an approximate maximizer of gain from trade.
In the exploitation phase (blind phase) the algorithm posts $\hat{P}^\star$ blindly to both the seller and the buyer, ignoring all the feedback it receives. 
This implies that Scouting Blindits is actually budget balanced during the whole blind phase (which, after tuning, constitutes the majority of time).

\begin{algorithm}
     \textbf{input:} exploration time $T_0$, grid size $K$\;
     \textbf{initialization:} $q_i \gets i/(K+1)$, $\hat{F}_i \gets 0$, $\hat{G}_i \gets 0$, for all $i\in [K]$, $k\gets 1$\;
     \For(\tcp*[f]{scouting phase}){$t=1,\dots,2K T_0$}{
     \uIf{$t$ is odd}
     {
        draw $U_t$ from $[q_k,1]$ uniformly at random\;
        post the prices $( q_k, U_t )$ and observe feedback $\ind\{S_t \le q_k \le  U_t \le B_t\}$\;
        let $\hat{F}_k \gets \hat{F}_k + \frac{1}{T_0}(1-q_k)\ind\{S_t \le q_k \le U_t \le B_t\}$\;
     }
     \Else{
        draw $V_t$ from $[0,q_k]$ uniformly at random\;
        post the prices $(V_t, q_k)$ and observe feedback  $\ind\{S_t \le V_t \le q_k \le B_t\}$\;
        let $\hat{G}_k \gets \hat{G}_k + \frac{1}{T_0} q_k \ind\{S_t \le V_t \le q_k \le B_t\}$\;
    }
        \If{$t \ge 2k T_0$}
        {let $k\gets k+1$\;} 
    }
     compute $\hat{I}^\star \in \argmax_{k \in [K]} \brb{ \hat{F}_k + \hat{G}_k }$ and let $\hat{P}^\star \gets q_{\hat{I}^\star}$\;
     \For(\tcp*[f]{blind phase}){$t=2K T_0+1,\dots$}{
     post the price $\hat{P}^\star$ to both the seller and the buyer\;
     }
     \caption{\label{alg:etc} Scouting Blindits (SBl)}
\end{algorithm}

We will now show that the regret of suitable tuning of \cref{alg:etc} is at most $\widetilde{O}\brb{ T^{3/4} }$.

\begin{theorem}
\label{t:wbb}
If $(S_1,B_1),(S_2,B_2),\ldots $ is an i.i.d.\ sequence and $(S_1,B_1)$ has a density bounded by some constant $M$, then the regret of Scouting Blindits run with parameters $T_0$ and $K$ satisfies, for any time horizon $T\ge 2KT_0$,
\[
    \R_T(\text{\emph{SBl}}) 
\le 
    2K T_0 + 2 \lrb{ \frac{2M}{K} + \inf_{\e >0} \lrb{ \e + K \operatorname{exp}\brb{- 2 \e^2 T_0} }} (T-2K T_0) \;.
\]
In particular, tuning the parameters $K := \bce{ T^{1/4} }$ and $T_0 := \bce{ \brb{ \sqrt{T} \log{T} }/2 }$ yields 
\[
    \R_T(\text{\emph{SBl}}) 
=
    \cO \brb{ (M+\log T) \, T^{3/4} }\;.
\]
\end{theorem}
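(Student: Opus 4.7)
The plan is to split the horizon into the scouting phase ($t \le 2KT_0$) and the blind phase ($t > 2KT_0$) and bound each contribution separately. Since the benchmark $\E\bsb{\gft(p^\star, S_1, B_1)}$ is at most $1$ per round and the algorithm's reward is non-negative, the scouting phase contributes at most $2KT_0$ to the regret. For the blind phase, write $f(p) := \E\bsb{\gft(p,S_1,B_1)}$. The key observation is that, by the Decomposition Lemma in its form \eqref{e:decomp-four}, $\E\bsb{\hat F_k + \hat G_k} = f(q_k)$ for every $k \in [K]$: the $T_0$ odd-round summands of $T_0 \hat F_k$ are i.i.d.\ copies of $(1-q_k)\I\{S \le q_k \le U \le B\}$ with $U \sim \text{Unif}([q_k,1])$ independent of $(S,B)$, and the $T_0$ even-round summands of $T_0 \hat G_k$ are i.i.d.\ copies of $q_k\I\{S \le V \le q_k \le B\}$ with $V \sim \text{Unif}([0,q_k])$ independent of $(S,B)$.

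I would then concentrate $\hat F_k + \hat G_k$ around $f(q_k)$. Writing it as a sum of $2T_0$ independent terms with ranges $(1-q_k)/T_0$ and $q_k/T_0$, the sum of squared ranges is $\brb{(1-q_k)^2 + q_k^2}/T_0 \le 1/T_0$, so Hoeffding's inequality yields $\P\bsb{\babs{(\hat F_k + \hat G_k) - f(q_k)} > \e} \le 2\exp(-2T_0\e^2)$, and a union bound over $k \in [K]$ gives $\P[E_\e^c] \le 2K\exp(-2T_0\e^2)$, where $E_\e := \bcb{\max_k \babs{(\hat F_k + \hat G_k) - f(q_k)} \le \e}$. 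By \cref{l:decomp-lip}, $f$ is $4M$-Lipschitz; letting $k^\star$ index the grid point closest to $p^\star$ (so $|p^\star - q_{k^\star}| \le 1/(K+1)$), this yields $f(q_{k^\star}) \ge f(p^\star) - 4M/(K+1) \ge f(p^\star) - 4M/K$. On $E_\e$, the argmax step gives $f(\hat P^\star) \ge f(q_{\hat I^\star}) - \e \ge f(q_{k^\star}) - 2\e$, hence $f(p^\star) - f(\hat P^\star) \le 2\e + 4M/K$. Combining with the trivial bound of $1$ on $E_\e^c$, the expected regret of each blind-phase round is at most $2\e + 4M/K + 2K\exp(-2T_0\e^2) = 2\brb{\e + 2M/K + K\exp(-2T_0\e^2)}$. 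Summing over the $T - 2KT_0$ blind-phase rounds and taking the infimum over $\e > 0$ yields the first inequality.

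For the second part, I substitute $K = \bce{T^{1/4}}$ and $T_0 = \bce{\sqrt{T}\log T / 2}$: the scouting bound is $O(T^{3/4}\log T)$, the Lipschitz term $(2M/K)(T - 2KT_0)$ is $O(MT^{3/4})$, and choosing $\e$ so that $2\e^2 T_0 \asymp \log T$ (e.g.\ $\e \asymp T^{-1/4}$) makes $\e \cdot T = O(T^{3/4})$ while driving $K\exp(-2T_0\e^2) \cdot T = o(T^{3/4})$. This delivers the claimed $O\brb{(M + \log T)T^{3/4}}$ rate.

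The main obstacle is the correct application of the Decomposition Lemma \eqref{e:decomp-four} to obtain the unbiasedness of $\hat F_k + \hat G_k$. The conditional independence of the auxiliary $U_t, V_t$ from the valuations $(S_t, B_t)$ must be respected, and the scaling factors $(1-q_k)$ and $q_k$ in the updates must precisely match the lengths of the intervals $[q_k,1]$ and $[0,q_k]$ from which $U_t$ and $V_t$ are drawn; these factors compensate for the non-unit lengths of the sampling intervals so that each summand is an unbiased estimator of the corresponding term in \eqref{e:decomp-four}. Once this is in place, the remaining steps assemble into a textbook explore-then-commit argument combining uniform concentration with a Lipschitz discretization error.
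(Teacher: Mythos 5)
Your proposal is correct and follows essentially the same route as the paper's proof: bound the scouting phase trivially by $2KT_0$, use the Decomposition lemma \eqref{e:decomp-four} for the unbiasedness of $\hat F_k + \hat G_k$, apply Hoeffding plus a union bound to get the good event with failure probability $2K\operatorname{exp}(-2\e^2 T_0)$, combine the $4M/K$ Lipschitz discretization error (via \cref{l:decomp-lip}) with the $2\e$ estimation error on the good event, and then tune $K$, $T_0$, $\e$ exactly as stated. The only cosmetic difference is that you apply Hoeffding to $2T_0$ bounded summands via the sum-of-squared-ranges, whereas the paper pairs odd/even rounds into $T_0$ i.i.d.\ $[0,1]$-valued terms; both yield the same exponent.
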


\begin{proof}
Let $p^\star \in \argmax_{p \in [0,1]} \E\bsb{\GFT(p,S_1,B_1)}$.
Fix any $\e>0$ and define the good event $\cE$ as
\[
    \cE
:= 
    \bigcap_{k=1}^K \Bcb{ \babs{ \E[\GFT(q_k,S_1,B_1)]-\brb{ \hat{F}_k + \hat{G}_k } } \le \e}.
\]
By \cref{e:decomp-four}, for each $k \in [K]$, we have that $\hat{F}_k + \hat{G}_k$ is the empirical mean of $T_0$ i.i.d.\ $[0,1]$-valued copies of a random variable whose expected value is $\E\bsb{ \GFT(q_k,S_1,B_1) }$. Then, by Chernoff-Hoeffding inequality and a union bound, we have that
\[
\P[\cE^c] \le 2K \operatorname{exp}\lrb{- 2 \e^2 T_0} \;.
\]
On the other hand, for each $p \in [0,1]$, define $k(p) \in [K]$ as the index of a point in the grid $\{q_1,\dots,q_K\}$ closest to $p$.
Then, on the good event $\cE$, for all $t \ge 2KT_0 +1$, we have that:
\begin{multline*}
    \E\bsb{ \GFT(p^\star,S_t,B_t) } - \E\bsb{ \GFT(\hat{P}^\star,S_t,B_t)\mid \hat{P}^\star }
\\
\begin{aligned}
&=
   \E\bsb{ \GFT(p^\star,S_t,B_t) } - \E\bsb{\GFT(q_{k(p^\star)},S_t,B_t) } + \E\bsb{ \GFT(q_{k(p^\star)},S_t,B_t) } - (\hat{F}_{k(p^\star)} + \hat{G}_{k(p^\star)}) \\
   & \qquad (\hat{F}_{k(p^\star)} + \hat{G}_{k(p^\star)}) - (\hat{F}_{k(\hat{P}^\star)}+ \hat{G}_{k(\hat{P}^\star)}) + (\hat{F}_{k(\hat{P}^\star)}+ \hat{G}_{k(\hat{P}^\star)}) - \E[\GFT(\hat{P}^\star,S_t,B_t)\mid \hat{P}^\star]
\\
&\le
    \frac{4M}{K} + \e + 0 + \e = \frac{4M}{K} + 2\e \;.
\end{aligned}
\end{multline*}
So, if $(P_t,Q_t)_{t\in [T]}$ are the prices posted by \cref{alg:etc}, we have that
\begin{multline*}
    \sum_{t=1}^T \Brb{ \E\bsb{ \GFT(p^\star,S_t,B_t) } - \E \bsb{ \GFT(P_t,Q_t,S_t,B_t) } }
\\
\begin{aligned}
&\le
2K T_0 + \sum_{t=2K T_0 +1}^T \Brb{ \E\bsb{ \GFT(p^\star,S_t,B_t) } - \E\bsb{ \GFT\brb{\hat{P}^\star,S_t,B_t} } }
\\
&= 2K T_0 + \sum_{t=2K T_0 +1}^T\E\Bsb{ \E[\GFT(p^\star,S_t,B_t)] - \E[\GFT\brb{\hat{P}^\star,S_t,B_t} \mid \hat{P}^\star] }
\\
&\le 2K T_0 + \P[\cE^c] (T- 2K T_0) + \sum_{t=2K T_0 +1}^T\E\Bsb{ \Brb{\E\bsb{ \GFT(p^\star,S_t,B_t)} - \E\bsb{ \GFT\brb{\hat{P}^\star,S_t,B_t} \mid \hat{P}^\star } } \I_{cE} }
\\
&\le 2K T_0 +  2K \operatorname{exp}\lrb{-2 \e^2 T_0} (T- 2K T_0) + \sum_{t=2K T_0 +1}^T\E\lsb{ \lrb{ \frac{4M}{K} + 2\e } \I_{\cE} }\\
&\le 2K T_0 + 2 \lrb{ \frac{2M}{K} + \e + K \operatorname{exp}\brb{- 2 \e^2 T_0} } (T-2K T_0) \;.
\end{aligned}
\end{multline*}
By the arbitrariness of $\e$, we have the first part of the result. Substituting the stated choice of the parameters in the last expression (doing the calculation choosing e.g. $\e = \bce{ T^{-1/4}}$) yields the second part of the result.
\end{proof}

As we noted in \cref{sec:lipschitz}, if the time horizon is unknown, we can retain the regret guarantees of the previous result with a standard doubling trick.

It is straightforward to see that the same construction of \cref{thm:lower-real-iv+bd} applies, giving a lower bound on the regret in the weakly budget balance setting of order $\Omega( T^{2/3} )$. 
Indeed, there the distribution of the buyer is known. Therefore, it is counterproductive to post two different prices to the seller and the buyer (same quality of feedback but lower gain from trade).
We leave the gap between this $\Omega( T^{2/3} )$ lower bound and the $\widetilde{\cO}(T^{3/4})$ regret of \cref{alg:etc} open for future research.

\section{Learning with One Bit}
\label{s:one-bit}

In this section, we discuss the (im)possibility of learning with less than a realistic feedback.
We start by noting that Scouting Blindits requires only one bit of feedback $\I\{S_t \le P_t \le P'_t \le B_t\}$, i.e., whether or not the trade occurred at time $t$ if $P_t$ was posted to the seller and $P'_t$ to the buyer.
In this setting, one can therefore achieve sublinear regret without observing the two bits $\I\{S_t \le P_t\}$ and $\I\{P'_t \le B_t\}$ provided by realistic feedback.
Thus, it is natural to wonder whether the single bit $\I\{S_t \le P_t \le B_t\}$ is sufficient for obtaining sublinear regret bounds also in the budget balanced setting.
\emph{This is not the case}: even under the further assumptions of bounded densities (bd) and independent valuations (iv), a single bit in the budget balance setting does not provide sufficient observability.
Indeed, consider a first instance in which the seller $S$ and buyer $B$ have uniform distributions on $[0,1]$, independent of each other.
In this case, the only maximizer of the expected gain from trade is $\ps = \nicefrac{1}{2}$.
As a second instance, consider two independent distributions of the seller $S'$ and buyer $B'$ with densities (bounded by $2$ and even infinite differentiable) $f_{S'}(s) := 4\fracc{(4-2s^3+s^2)}{(s^3 - s^2 + 4)^2}$ and $f_{B'}(b) = b(b-\nicefrac{1}{2})(b-1)+1$ respectively.
Then, for all $p\in [0,1]$, we have $\P[ S \le p \le B ] = \P[ S' \le p \le B' ]$.
Therefore, the two instances are indistinguishable under the single-bit feedback, but a direct verification shows that in the second instance, $\ps = \nicefrac{1}{2}$ is \emph{not} a maximizer of the expected gain from trade.
Leveraging these facts and the continuity of the gain from trade in the two instances leads to a linear minimax regret, using the same ideas as in \cref{thm:lower-real-bd}.

\section{Conclusions \tccheck}
This work initiates the study of the bilateral trade problem in a regret minimization framework.
We designed algorithms and proved tight bounds on the regret rates achieved under various feedback and private valuation models.  

Our work opens several possibilities for future investigation. 
One first and natural research direction is related to the more general settings of two-sided markets with multiple buyers and sellers, different prior distributions, and complex valuation functions.  
A second direction is related to the tight characterization of the regret rates for weak budget balance mechanisms (which we proved are strictly better than the budget balance rates in some cases).
Finally, we believe other classes of markets, which assume prior knowledge of the agent's preferences, could be fruitfully studied in a regret minimization framework.

\appendix

\section{Missing Details of Section~\ref{s:decomp}}
\label{s:decomp-appe}

In this section, we prove the Decomposition lemma and its corollary as stated in  Section~\ref{s:decomp}.

\decomplemma*
\begin{proof}
We begin by proving \cref{e:decomp-zero}. For any $s,b\in[0,1]$, we have
\begin{align*}
    \GFT(p,s,b)
& =
    (b-s) \I\{s \le p \le b\}
=
    \int_s^b \dif \lambda \I\{s \le p \le b\}
=
    \int_{[0,1]} \I\{s\le p \le b\} \I\{s\le \lambda \le b\} \dif \lambda
\\
& 
=
    \int_{[p,1]} \I\{s \le p \le \lambda \le b\} \dif \lambda
    +
    \int_{[0,p]} \I\{s \le \lambda \le p \le b\} \dif \lambda \;.
\end{align*}

\cref{e:decomp-one} is an immediate consequence of \cref{e:decomp-zero} and Fubini's theorem.

We now prove \cref{e:decomp-two}. 
Under the assumptions, \cref{e:decomp-one} implies
\begin{align*}
    \P[S \le p \le U \le B]
&
=   
    \P\Bsb{ \{S \le p\} \cap \{ U \le B \} \cap \bcb{ U \in [p,1] } }
\\
&
=
    \int_{[p,1]} \P\bsb{ \{S \le p\} \cap \{ U \le B \} \mid U= \lambda } \dif \P_U(\lambda)
=
    \int_{[p,1]} \P[ S \le p \le \lambda \le B ] \dif \lambda \;.
\end{align*}
The equality 
$
    \P[S \le U \le p \le B]
=
    \int_{[0,p]} \P[S\le \lambda \le p \le B] \dif\lambda
$ 
can be shown analogously, proving \cref{e:decomp-two}. 

\cref{e:decomp-three} is an immediate consequence of \cref{e:decomp-two}, leveraging independence. 

We now prove \cref{e:decomp-four}.
If $p\in \{0,1\}$, the result follows from \cref{e:decomp-three}. 
Thus, assume $p\in (0,1)$. Then
\[
    \E \bsb{ \gft(p,S,B) }
=
    \int_{[p,1]} \P \bsb{ S \le p \le \lambda \le B } \dif \lambda
    +
    \int_{[0,p]} \P \bsb{ S \le \lambda \le p \le B } \dif \lambda
     \;.
\]
For the first addend, we have,
\begin{align*}
&
    \int_{[p,1]} \P \bsb{ S \le p \le \lambda \le B } \dif \lambda
=
    (1-p) \int_{[p,1]} \P \bsb{ S \le p \le \lambda \le B } \dif \P_U(\lambda)
\\
& \hspace{0.82265pt}
=
    (1-p) \int_{[p,1]} \P [ S \le p \le U \le B \mid U = \lambda ] \dif \P_U(\lambda)
= 
    (1-p) \P [ S \le p \le U \le B ] 
=
    \E \bsb{ (1-p) \I \{ S \le p \le U \le B \} } \;.
\end{align*}
Analogously, one shows
$
    \int_{[0,p]} \P \bsb{ S \le \lambda \le p \le B } \dif \lambda
=
    \E\bsb{ p \I \{S \le V \le p \le B \} }
$,
which gives \cref{e:decomp-four}.
\end{proof}

We conclude this section by showing that the bounded-density assumption implies the \lip{}ness of the expected gain from trade.

\decompcorollarylip*
\begin{proof}
Take any two $0 \le p < q \le 1$.
We have that
\begin{multline*}
    \labs{
    \int_{[p,1]}\P [ S \le p \le \lambda \le B ] \dif \lambda
    -
    \int_{[q,1]}\P [ S \le q \le \lambda \le B ] \dif \lambda
    }
\\
\begin{aligned}
&
=
    \labs{
    \int_{[q,1]} \Brb{ \P [ S \le p \le \lambda \le B ] - \P [ S \le q \le \lambda \le B ] } \dif \lambda
    + \int_{[p,q)} \P [ S \le p \le \lambda \le B ] \dif \lambda
    }
\\
&
\le
    \sup_{\lambda\in[0,1]} \babs{ \P [ S \le p \le \lambda \le B ] - \P [ S \le q \le \lambda \le B ] }
    +
    \labs{ p - q }
\\
&
\le
    \sup_{\lambda\in[0,1]}\labs{ \int_{[\lambda,1]} \lrb{ \int_{[p,q)} f(s,b) \dif s } \dif b }
    +
    \labs{ p - q }
\le
    2M \labs{p - q} \;.
\end{aligned}
\end{multline*}
Analogously, we can prove that
\[
    \labs{
    \int_{[0,p]}\P [ S \le \lambda  \le p \le B ] \dif \lambda
    -
    \int_{[0,q]}\P [ S \le \lambda \le q \le B ] \dif \lambda
    }
\le
    2M \labs{p-q} \;.
\]
Thus, \cref{e:decomp-one} yields $\babs{ \E \bsb{\GFT(p,S,B)} - \E \bsb{\GFT(q,S,B)}} \le 4M \labs{p-q}$.
\end{proof}

\section{Existence of the Best Price}
\label{s:existenceMax}

In this section, we show that a price $\ps$ maximizing the expected regret always exists. 

\begin{lemma}
The function $p\mapsto \E\bsb{ \GFT(p,S,B) }$ is upper semicontinuous. 
In particular, there exists a maximizer $\ps \in [0,1]$.
\end{lemma}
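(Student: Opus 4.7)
The plan is to establish upper semicontinuity of $p \mapsto \E\bsb{\GFT(p,S,B)}$ on the compact interval $[0,1]$, which automatically yields existence of a maximizer $\ps$ by the standard fact that an upper semicontinuous function on a compact set attains its supremum.

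First I would fix $(s,b) \in [0,1]^2$ and observe that $p \mapsto \GFT(p,s,b) = (b-s)\I\{s \le p \le b\}$ is upper semicontinuous in $p$. Indeed, if $s \le b$ this is a non-negative multiple of the indicator of the closed interval $[s,b]$, so its upper level sets $\{p \in [0,1] : \GFT(p,s,b) \ge c\}$ are closed for every $c \in \R$; if $s > b$, the function is identically $0$. Crucially, closed (not open) level sets on the upper side is exactly upper semicontinuity, which reflects the fact that a trade still occurs when $p$ coincides with an endpoint of $[s,b]$.

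Next I would pass from pointwise upper semicontinuity to the integrated statement via the reverse Fatou lemma, which is applicable since $0 \le \GFT \le 1$ is uniformly bounded. For any sequence $p_n \to p$ in $[0,1]$, pointwise upper semicontinuity gives $\limsup_n \GFT(p_n, S, B) \le \GFT(p, S, B)$ almost surely, and reverse Fatou then yields
\[
    \limsup_{n \to \infty} \E\bsb{\GFT(p_n, S, B)} \le \E\Bsb{ \limsup_{n \to \infty} \GFT(p_n, S, B) } \le \E\bsb{\GFT(p, S, B)} \;,
\]
which is precisely the upper semicontinuity of $p \mapsto \E\bsb{\GFT(p,S,B)}$ on $[0,1]$.

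Finally, since $[0,1]$ is compact and $p \mapsto \E\bsb{\GFT(p,S,B)}$ is upper semicontinuous and bounded above by $1$, it attains its supremum, proving existence of $\ps$. There is no real obstacle here: the only point of care is to remember that the indicator of a \emph{closed} set is upper (rather than lower) semicontinuous, so the result matches the orientation required by reverse Fatou. No assumption on the joint law of $(S,B)$ is needed, as advertised.
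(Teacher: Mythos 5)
Your proof is correct, and it takes a genuinely different (and more elementary) route than the paper's. The paper first invokes its Decomposition lemma to write $\E\bsb{\GFT(p,S,B)} = \P[S \le p \le U \le B] + \P[S \le U \le p \le B]$ with an auxiliary independent uniform $U$, and then checks upper semicontinuity of each term separately by taking left and right approximating sequences $q_n \uparrow p$, $r_n \downarrow p$ and applying dominated convergence; the potentially problematic right limit is handled by the fact that $\P[U=p]=0$, i.e., the atomlessness of $U$ absorbs the boundary case. You instead observe directly that for each fixed $(s,b)$ the map $p \mapsto \GFT(p,s,b)$ is upper semicontinuous (it is a nonnegative multiple of the indicator of the closed interval $[s,b]$, and this holds surely, not just almost surely), and then pass the $\limsup$ inside the expectation by reverse Fatou, which applies since $0 \le \GFT \le 1$; sequential upper semicontinuity on the metric space $[0,1]$ is equivalent to upper semicontinuity, and Weierstrass concludes. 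Your argument is shorter, avoids the auxiliary randomization entirely, and makes transparent that the closedness of $\{s \le p \le b\}$ (trades occur at the endpoints) is the only structural fact needed; the paper's route has the pedagogical advantage of reusing the Decomposition lemma that drives the rest of its analysis, but buys nothing extra for this particular statement.
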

\begin{proof}
Let $U$ be a random variable that is uniform on $[0,1]$ and independent of $(S,B)$.
By the Decomposition lemma \eqref{e:decomp-two}, it is sufficient to show that
\[
    f\colon \R \to [0,1], \ p\mapsto\P[S \le p \le U \le B]
    \qquad \text{ and } \qquad
    g\colon \R \to [0,1], \ p\mapsto\P[S \le U \le p \le B]
\]
are both upper semicontinuous.
We now prove that $f$ is upper semicontinuous, i.e., that for any $p \in \R$, we have
\[
    \limsup_{q \to p} f(q) \le f(p) \;.
\]
To do so, we show that for any $p \in \R$ and any two sequences $q_n\uparrow p$, $r_n \downarrow p$, we have that
\[
    \limsup_{q_n \uparrow p}f(q_n) \le f(p)
    \qquad \text{ and } \qquad
    \limsup_{r_n \downarrow p}f(r_n) \le f(p) \;.
\]
If $p\in \R\m[0,1]$, the result is trivially true.
Thus, let $p \in [0,1]$, $q_n\uparrow p$ and $r_n \downarrow p$.
Then, 
\begin{align*}
    &\I\{ S \le q_n \le U \le B \} \to \I \{ S < p \le U \le B \}\;, \qquad n \to \iop\;,
\\
    &\I\{ S \le r_n \le U \le B \} \to \I \{ S \le p < U \le B \}\;, \qquad n \to \iop\;,
\end{align*}
pointwise everywhere. By \leb{}'s dominated convergence theorem, it  follow that, if $n\to \iop$,
\begin{align*}
    &f(q_n) \to \P [ S < p \le U \le B ] \le \P [ S \le p \le U \le B ] = f(p) \;,
\\
        &f(r_n) \to \P [ S \le p < U \le B ] = \P [ S \le p \le U \le B ] = f(p) \;.
\end{align*}
By the arbitrariness of $p$, $(q_n)_{n\in \N}$ and $(r_n)_{n\in \N}$, $f$ is therefore upper semicontinuous.
Analogously, one can prove that $g$ is upper semicontinuous.
Hence, $p\mapsto \E\bsb{ \GFT(p,S,B) } = f(p) + g(p)$ is an upper semicontinuous function defined on the compact set $[0,1]$, so it attains its maximum at some $\ps \in [0,1]$ by the Weierstrass theorem.
\end{proof}

\section{Model and Notation \tccheck}
\label{s:model-appe}

For all $T\in \N$, we denote the set of the first $T$ integers $\{1,\ldots,T\}$ by $[T]$.
If $\P$ is a probability measure and $X$ is a random variable, we denote by $\P_X$ the probability measure defined for any (measurable) set $E$, by $\P_X[E] := \P[ X \in E]$.
We denote the expectation of a random variable $X$ with respect to the probability measure $\P$ by $\E_\P[X]$.
If a measure $\nu$ is absolutely continuous with respect to another measure $\mu$ with density $f$, we denote $\nu$ by $f \mu$, so that for any (measurable) set $E$, $(f \mu) [E] := \nu[E] = \int_E f(x) \dif \mu(x)$. 
We denote the \leb{} measure on the interval $[0,1]$ by $\mu_L$ and the product \leb{} measure on $[0,1]^\N$ by $\bmu_L$.
For any set $E$ and $x\in E$, we denote the Dirac measure on $x$ by $\delta_x$ (the dependence on $E$ will always be clear from context).

\subsection{The Learning Model \tccheck} 
\label{s:learning-model-appe}
In this section, we introduce an abstract notion of sequential games which encompasses all the settings we discussed in the main part of the paper, providing a unified perspective.
This will be especially useful when proving lower bounds.
\begin{definition}[Sequential game]
\label{d:game-appe}
A \emph{(sequential) game} is a tuple $\sG := (\cX,\cY,\cZ,\rho,\fhi,\sP)$, where:
\begin{itemize}
    \item $\cX,\cY,\cZ$ are sets called the \emph{player's action space}, \emph{adversary's action space}, and \emph{feedback space}.
    \item $\rho \colon \cX \times \cY \to [0,1]$ and $\fhi \colon \cX \times \cY \to \cZ$ are called the \emph{reward} and \emph{feedback} functions\footnote{More precisely, we need $\cX,\cY,\cZ$ to be non-empty measurable spaces and $\rho,\fhi$ to be measurable functions. To avoid clutter, in the following we will never mention explicitly these types of standard measurability assumptions unless strictly needed.}.
    \item $\sP$ is a set of probabilities on the set $\cY^\N$ of sequences in $\cY$, called the \emph{adversary's behavior}.
\end{itemize}
\end{definition}
This definition generalizes the partial monitoring games of \citep{lattimore2020bandit,bartok2014partial} to settings with infinitely many arms and is able to model adversarial, i.i.d., and more general stochastic settings all at once.
Before proceeding, we introduce another few extra handy definitions that will be used throughout the paper.
\begin{definition}
\label{d:extra-stuff-appe}
If $\sG = (\cX,\cY,\cZ,\rho,\fhi,\sP)$ is a game, then we say the following.
\emph{The sample space} is the set $\Omega := \cY^{\N} \times [0,1]^{\N}$.
\emph{The adversary's actions} $\brb{ Y_t }_{t\in \N}$ and \emph{the player's randomization} $\brb{ U_t }_{t\in\N}$ are sequences of random variables defined, for all $t \in \N$ and $\omega = \brb{ (y_n)_{n\in\N}, (u_n)_{n\in \N} } \in \Omega$, by $Y_t(\omega):=y_t$ and $U_t(\omega) := u_t$.
\emph{The set of scenarios} $\sS$ is the set of probability measures $\P$ on $\Omega$ of the form $\P = \boldsymbol \mu \otimes \boldsymbol{\mu}_L$, where $\boldsymbol \mu \in \sP$.
\end{definition}
For the sake of conciseness, whenever we fix a game $\sG$, we will assume that all the objects (sets, functions, random variables) presented in Definitions~\ref{d:game-appe}--\ref{d:extra-stuff-appe} are fixed and denoted by the same letters without declaring them explicitly each time, unless strictly needed.

Note that this setting models an \emph{oblivious} adversary since its actions are independent of the player's past randomization, i.e., for all $t\in \N$, $\P_{Y_{t+1} \mid Y_1, \ldots, Y_t, U_1, \ldots, U_t } = \P_{Y_{t+1} \mid Y_1, \ldots, Y_t}$.
Note also that we are assuming that the randomization of the player's strategy is carried out by drawing numbers in the interval $[0,1]$ independently and uniformly at random.
We can restrict ourselves to this case in light of the Skorokhod Representation Theorem \cite[Section 17.3]{williams1991probability} without losing (much) generality.
We now introduce formally the strategies of the player, the resulting played actions, and the corresponding feedback.

\begin{definition}[Player's strategies, actions, and feedback]
Given a game $\sG$, we define a \emph{player's strategy} as a sequence of functions $\alpha = (\alpha_t)_{t\in \N}$ such that, for each $t \in \N$, $\alpha_{t} \colon [0,1]^t\times\cZ^{t-1} \to \cX$.\footnote{When $t=1$, $[0,1]^t \times \cZ^{t-1} := [0,1]$. 
In the following, we will always adopt this type of convention without mention it.}
Given a player's strategy $\alpha$, we define inductively (on $t$) the corresponding sequences of \emph{player's actions} $(X_t)_{t\in \N}$ and \emph{player's feedback} $(Z_t)_{t\in \N}$ by $X_t := \alpha_t(U_1, \ldots, U_t, Z_1, \ldots, Z_{t-1})$, $Z_t := \varphi (X_t, Y_t)$.
In the sequel, we will denote the set of all strategies for a game $\sG$ by $\sA(\sG)$.
\end{definition}
To lighten the notation, we will write $\sA$ instead of $\sA(\sG)$ if it is clear from context.
We can now extend the standard notions of regret, worst-case regret, and minimax regret to our general setting.
\begin{definition}[Regret]
\label{d:regr-appe}
Given a game $\sG$ and a horizon $T\in \N$, we define the \emph{regret} (of $\alpha \in \sA$ in a scenario $\P \in \sS$), the \emph{worst-case regret} (of $\alpha \in \sA$), and the \emph{minimax regret} (of $\sG$), respectively, by
\[
R^{\P}_T(\alpha) := \sup_{x \in \cX} \E_{\P}\lsb{\sum_{t=1}^T \rho(x,Y_t) - \sum_{t=1}^T \rho(X_t,Y_t)}\;,
\quad
R_T^{\sS}(\alpha) := \sup_{\P \in \sS} R^{\P}_T(\alpha)\;,
\quad
\Rs_T(\sG) := \inf_{\alpha \in \sA(\sG)} R_T^{\sS}(\alpha)\;.
\]
\end{definition}
If $\sG$ and $\widetilde{\sG}$ are two games and $\Rs_T(\sG) \ge \Rs_T(\tilde \sG)$, we say that $\tilde{\sG}$ is \emph{easier} than $\sG$ (or equivalently, that $\sG$ is \emph{harder} than $\tilde{\sG}$).
When it is clear from the context, we will omit the dependence on $\sG$ in $\Rs_T(\sG)$.

\subsection{Bilateral Trade as a Game \tccheck}
\label{s:biltrad-setting-appe}

We now formally cast the various instances of bilateral trade we introduced in \cref{s:bil-tr-model} into our sequential game setting.\footnote{Straightforwardly, the same can be done for the weak budget balance setting we studied in \cref{s:wbb}.}
In this context, we think of the learner as the \emph{player} and the environment as the \emph{adversary}.

\subsubsection{Player's Actions, Adversary's Actions, and Reward}

The player's action space $\cX$ is the unit interval $[0,1]$.
This corresponds to the player posting the same price to both the seller and the buyer (budget balance).
The adversary's action space $\cY$ is $[0,1]^2$. 
They are the pairs of valuations of the seller and buyer.
The reward function $\rho$ is the gain from trade $\gft\colon [0,1] \times [0,1]^2 \to [0,1]$, $\brb{ p, (s,b) } \mapsto (b-s) \I\{ s \le p \le b\}$.

\subsubsection{Available Feedback} 

\begin{description}
    \item[Full:] the feedback space $\cZ$ is the unit square $[0,1]^2$ and the feedback function is $\fhi \colon [0,1] \times [0,1]^2 \to [0,1]^2$, $\brb{ p, (s,b) } \mapsto (s,b)$.
    This corresponds to the seller and the buyer revealing their valuations at the end of a trade.
    \item[Realistic:] the feedback space $\cZ$ is the boolean square $\{0,1\}^2$ and the feedback function is $\fhi \colon [0,1] \times [0,1]^2 \to \{0,1\}^2$, $\brb{ p, (s,b) } \mapsto \brb{ \I\{s \le p\}, \I\{ p\le b \} }$.
    This corresponds to the seller and the buyer accepting or rejecting a trade at a price $p$.
\end{description}

\subsubsection{Adversary's Behavior} 
\begin{description}
    \item[Stochastic (iid):] 
    the adversary's behavior $\sP = \sPiid$ consists of products of a single probability on $\cY = [0,1]^2$, i.e., $\bmu \in \sPiid$ if and only if there exists a probability measure $\mu$ on $[0,1]^2$ such that $\bmu = \otimes_{t\in \N} \, \mu$.
    This corresponds to a stochastic i.i.d.\ environment, where however the valuations of the seller and the buyer could be correlated.
    
    We will also investigate the following stronger assumptions.
    \begin{description}
        \item[Independent valuations (iv):] 
        the adversary's behavior $\sP = \sPiv$ is the subset of $\sPiid$ in which the valuations of the seller and the buyer are independent, i.e., $\bmu \in \sPiv$ if and only if there exist two probability measures $\mu_S,\mu_B$ on $[0,1]$ such that $\bmu = \otimes_{t\in \N} \, (\mu_S \otimes \mu_B)$.
        \item[Bounded density (bd):] 
        for a fixed $M \ge 1$, the adversary's behavior $\sP = \sPbd$ is the subset of $\sPiid$ in which the joint distribution of the valuations of buyer and seller has a density bounded by $M$, i.e., $\bmu \in \sPbd$ if and only if there exists a density $f\colon [0,1]^2 \to [0,M]$ such that  $\bmu = \otimes_{t\in \N} \, ( f \mu ) $, where $\mu = \mu_L \otimes \mu_L$.
        \item[Independent valuations with bounded density (iv+bd):] 
        for a fixed $M \ge 1$, the adversary's behavior $\sP = \sPivbd$ is the subset $\sPiv \cap \sPbd$ of $\sPiid$.
    \end{description}
    \item[Adversarial (adv):] 
    the adversary's behavior $\sP = \sPadv$ consists of products of Dirac measures on $\cY = [0,1]^2$, i.e., $\bmu \in \sPadv$ if and only if there exists a sequence $(s_t,b_t)_{t\in\N} \s [0,1]^2$ such that $\bmu = \otimes_{t\in \N} \, \delta_{(s_t,b_t)} $.
    This corresponds to a deterministic, oblivious, and adversarial environment.
\end{description}

\section{Two Key Lemmas on Simplifying Sequential Games \tccheck}
\label{s:keylemmas}
In this section we introduce some useful techniques that could be of independent interest for proving lower bounds in sequential games.
The idea is to give sufficient conditions for a given game to be harder than another, where the second one has a known lower bound on its minimax regret.

At a high level, the first lemma shows that if the adversary's actions are independent of each other, a game $\tilde{\sG}$ is easier than game $\sG$ if $\tilde{\sG}$ can be embedded in $\sG$ in such a way that: 
\begin{enumerate}
    \item The optimal player's actions of $\tilde{\sG}$ are no better than the ones in $\sG$.
    \item The suboptimal player's actions of $\tilde{\sG}$ no worse than the ones in $\sG$.
    \item At distributional level, the quality of the feedback in $\tilde{\sG}$ is no worse than that in $\sG$.
\end{enumerate}
The proof is deferred to Appendix~\ref{sec:lemmas}.

\begin{restatable}[Embedding]{lemma}{lembedding}
\label{l:embedding}
Let $\sG := (\cX,\cY,\cZ,\rho,\fhi,\sP)$ and $\tilde{\sG} := (\tilde{\cX},\tilde{\cY},\tilde{\cZ},\tilde{\rho},\tilde{\fhi},\tilde{\sP})$ be two games, $\sS, \tilde \sS$ their respective sets of scenarios, $(Y_t)_{t\in\N}, (\tilde Y_t)_{t\in\N}$ their adversaries' actions, and $T\in \N$ a horizon.
Assume that
$Y_1,\ldots,Y_T$ are $\P$-independent for any scenario $\P \in \sS$, $\tilde Y_1, \ldots, \tilde Y_T$ are $\tilde \P$-independent for any scenario $\tilde \P \in \tilde \sS$, and 
that there exist 
$
    \slf \colon \cX \to \tilde{\cX}
$, 
$\slg \colon \tilde{\cZ} \to \cZ
$,
and 
$\slh \colon \tilde{\sS} \to \sS
$
satisfying:
\begin{enumerate}
    \item \label{bullone} $\sup_{\tilde{x} \in \tilde{\cX}} \sum_{t=1}^T \E_{\tilde{\P}} \bsb{ \tilde{\rho}(\tilde{x},\tilde{Y}_t) } \le \sup_{x \in \cX}\sum_{t=1}^T \E_{\slh(\tilde{\P})} \bsb{ \rho(x,Y_t) }$ for any scenario $\tilde{\P} \in \tilde{\sS}$.
    \item \label{bulltwo} $\E_{\tilde{\P}}\bsb{ \tilde{\rho} \brb{ \slf(x),\tilde{Y}_t } } \ge \E_{\slh(\tilde{\P})}\bsb{ \rho(x,Y_t) }$ for any time $t\in [T]$, scenario $\tilde \P \in \tilde \sS$, and action $x \in \cX$.
    \item \label{bullthree} $\tilde{\P}_{\slg\lrb{\tilde{\fhi}\brb{\slf(x),\tilde{Y}_t}}} = \brb{\slh(\tilde{\P})}_{\fhi(x,Y_t)}$ for any time $t\in [T]$, scenario $\tilde{\P} \in \tilde{\sS}$, and action $x \in \cX$.
\end{enumerate}
Then 
$
\Rs_T(\sG) \ge \Rs_T(\tilde{\sG})
$.
\end{restatable}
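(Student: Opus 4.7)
The plan is to show that for any strategy $\alpha \in \sA(\sG)$, the natural simulated strategy $\tilde\alpha \in \sA(\tilde\sG)$ defined by
\[
\tilde\alpha_t(u_1,\dots,u_t,\tilde z_1,\dots,\tilde z_{t-1}) := \slf\brb{\alpha_t\brb{u_1,\dots,u_t,\slg(\tilde z_1),\dots,\slg(\tilde z_{t-1})}}
\]
has worst-case regret in $\tilde\sG$ no larger than that of $\alpha$ in $\sG$, from which $\Rs_T(\tilde\sG) \le \Rs_T(\sG)$ follows by taking infima. Informally, $\tilde\alpha$ pipes the received feedback through $\slg$ into $\alpha$ and lifts $\alpha$'s proposed action through $\slf$. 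Fix any scenario $\tilde\P \in \tilde\sS$ and set $\P := \slh(\tilde\P) \in \sS$. Denote by $(X_t,Z_t)$ the action/feedback sequence produced by $\alpha$ in $\sG$ under $\P$, by $(\tilde X_t,\tilde Z_t)$ the one produced by $\tilde\alpha$ in $\tilde\sG$ under $\tilde\P$, and introduce the ``virtual'' action $X_t^{v} := \alpha_t\brb{U_1,\dots,U_t,\slg(\tilde Z_1),\dots,\slg(\tilde Z_{t-1})}$, so that $\tilde X_t = \slf(X_t^{v})$.

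The central step is to prove by induction on $t$ that the $\tilde\P$-law of $\brb{U_1,\dots,U_t,\slg(\tilde Z_1),\dots,\slg(\tilde Z_{t-1})}$ equals the $\P$-law of $(U_1,\dots,U_t,Z_1,\dots,Z_{t-1})$. The base case is immediate because the $U_i$'s are i.i.d.\ uniform on $[0,1]$ under both measures. For the inductive step, the independence of $\tilde Y_t$ from $(\tilde Y_1,\dots,\tilde Y_{t-1})$ under $\tilde\P$, combined with the analogous fact for $Y_t$ under $\P$ (both guaranteed by the hypothesis), decouples the new adversary action from the history. Conditioning on the event $X_t^{v} = x$ (which is measurable with respect to the history), the next feedback coordinate on the $\tilde\sG$ side has distribution $\tilde\P_{\slg(\tilde\fhi(\slf(x),\tilde Y_t))}$, which by condition~3 equals $\P_{\fhi(x,Y_t)}$, the corresponding conditional distribution on the $\sG$ side. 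Gluing the matched conditional with the matched history law (by the inductive hypothesis) yields the equality at time $t+1$. In particular, $X_t^{v}$ under $\tilde\P$ has the same law as $X_t$ under $\P$.

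Next I would disintegrate the expected reward at each time step. Since $X_t^{v}$ is a measurable function of $(U_1,\dots,U_t,\tilde Y_1,\dots,\tilde Y_{t-1})$ and the adversary's actions are independent under $\tilde\P$, the random variable $X_t^{v}$ is $\tilde\P$-independent of $\tilde Y_t$; the analogous property holds for $X_t$ and $Y_t$ under $\P$. Thus
\[
\E_{\tilde\P}\bsb{\tilde\rho(\tilde X_t,\tilde Y_t)} = \int \E_{\tilde\P}\bsb{\tilde\rho(\slf(x),\tilde Y_t)}\dif \tilde\P_{X_t^{v}}(x) \ge \int \E_{\P}\bsb{\rho(x,Y_t)}\dif \P_{X_t}(x) = \E_{\P}\bsb{\rho(X_t,Y_t)},
\]
where the inequality uses condition~2 pointwise under the integral together with the law-equality from the previous step. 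Summing over $t \in [T]$ and combining with condition~1 delivers $R^{\tilde\P}_T(\tilde\alpha) \le R^{\P}_T(\alpha)$. Taking the supremum over $\tilde\P \in \tilde\sS$ (and using $\slh(\tilde\sS) \subseteq \sS$) and then the infimum over $\alpha$ gives the claim.

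The main obstacle will be the inductive joint-law matching: condition~3 only compares the marginal distributions of a single feedback coordinate after a fixed action, and promoting this to an equality of joint distributions along the learner's filtration in both games simultaneously is exactly where the oblivious-adversary independence hypothesis is essential. The remaining work---verifying that $\tilde\alpha$ qualifies as an element of $\sA(\tilde\sG)$ and justifying the use of Fubini in the disintegration---is routine measurability bookkeeping that can be dispatched once the laws and independence structure are set up carefully.
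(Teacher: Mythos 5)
Your proposal is correct and follows essentially the same route as the paper's proof: the same simulated strategy $\tilde\alpha$ (lift actions through $\slf$, pipe feedback through $\slg$), the same auxiliary "virtual" actions (the paper's $\hat X_t$), the same induction matching the joint laws of histories under $\tilde\P$ and $\slh(\tilde\P)$ via condition~3 and the independence of the adversary's actions, and the same disintegration of the per-round expected reward to apply condition~2 before invoking condition~1 and taking suprema and infima.
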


The second lemma addresses feedback with uninformative (i.e., scenario-independent) components.
At a high level, if the feedback of some of the player's actions has one or more uninformative components, the game can be simplified by getting rid of them.
The player can achieve this by simulating the uninformative parts of the feedback using her randomization.
The proof is deferred to Appendix~\ref{sec:lemmas}.

\begin{restatable}[Simulation]{lemma}{lsimulation}
\label{l:simulation}
Let $\cV,\cW$ be two sets, $\sG := (\cX,\cY,\cZ,\rho,\fhi,\sP)$ a game with $\cZ = \cV \times \cW$, $\sS$ its set of scenarios, $(Y_t)_{t\in \N}$ its adversary's actions, $\pi \colon \cZ \to \cV$ the projection on $\cV$, and $T \in \N$ a horizon.
Assume that
$Y_1,\ldots,Y_T$ are $\P$-independent for any scenario $\P \in \sS$ and that there exist disjoint sets $\cR, \cU \subset \cX$ such that $\cR \cup \cU = \cX$ and
\begin{enumerate}
    \item \label{buone} For any time $t \in [T]$ and action $x \in \cR$ there exists $\psi_{t,x} \colon [0,1] \to \cW$ such that, for all $\P \in \sS$,
    \[
        \P_{\fhi(x,Y_t)} = \P_{\pi\brb{\fhi(x,Y_t)}} \otimes (\mu_L)_{\psi_{t,x}} \;.
    \]
    \item \label{butwo} For any time $t \in [T]$ and action $x \in \cU$, there exists $\gamma_{t,x}\colon[0,1] \to \cZ$ such that, for all $\P \in \sS$,
    \[
        \P_{\fhi(x,Y_t)} = (\mu_L)_{\gamma_{t,x}} \;.
    \]
\end{enumerate}
Let $* \in \cV$ and define
\[
\tilde{\fhi} \colon \cX \times \cY \to \cV, \ (x,y) \mapsto
\begin{cases}
\pi\brb{\varphi(x,y)} \;, &\text{ if } x \in \cR,\\
* \;, &\text{ if } x \in \cU.
\end{cases}
\]
Define the game $\tilde{\sG} := (\cX,\cY,\cV,\rho,\tilde{\fhi},\sP)$.
Then
$
\Rs_T(\sG) \ge \Rs_T(\tilde{\sG})
$.
\end{restatable}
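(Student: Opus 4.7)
The plan is to establish the claim by constructing, for every player's strategy $\alpha \in \sA(\sG)$, a companion strategy $\tilde{\alpha} \in \sA(\tilde{\sG})$ such that $R_T^\P(\tilde{\alpha}) = R_T^\P(\alpha)$ for all scenarios $\P \in \sS$. The key observation is that, thanks to hypotheses~\ref{buone} and~\ref{butwo}, the pieces of feedback available in $\sG$ but not in $\tilde{\sG}$ are distributed according to fixed, scenario-independent laws. Therefore, the player in $\tilde{\sG}$ can \emph{simulate} them online using her own randomization and then feed the synthesized feedback into $\alpha$ as if she were playing $\sG$.

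Concretely, fix a measure-preserving Borel isomorphism $\tau \colon [0,1] \to [0,1]^2$, writing $\tau(u) = \brb{u^{(1)}, u^{(2)}}$, so that whenever $U \sim \mu_L$ the two coordinates $U^{(1)}, U^{(2)}$ are independent uniforms on $[0,1]$. Define $\tilde{\alpha}_t \colon [0,1]^t \times \cV^{t-1} \to \cX$ inductively as follows: on input $(u_1, \ldots, u_t, v_1, \ldots, v_{t-1})$, iteratively compute simulated actions and simulated $\sG$-feedbacks $(x_i, z_i)_{i=1}^{t-1}$ by setting $x_i := \alpha_i\brb{u_1^{(1)}, \ldots, u_i^{(1)}, z_1, \ldots, z_{i-1}}$ together with
\begin{equation*}
    z_i := \begin{cases} \brb{v_i,\, \psi_{i, x_i}(u_i^{(2)})}, & \text{if } x_i \in \cR, \\ \gamma_{i, x_i}(u_i^{(2)}), & \text{if } x_i \in \cU, \end{cases}
\end{equation*}
and finally return $\tilde{\alpha}_t(u_1,\ldots,u_t,v_1,\ldots,v_{t-1}) := \alpha_t\brb{u_1^{(1)}, \ldots, u_t^{(1)}, z_1, \ldots, z_{t-1}}$. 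Because $\cR$ and $\cU$ partition $\cX$, this is well-defined.

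The core step is to show, by induction on $t \le T$, that for every $\P \in \sS$ the joint law of $\brb{\tilde{X}_1, Y_1, z_1^{\mathrm{sim}}, \ldots, \tilde{X}_t, Y_t}$ under $\tilde{\alpha}$ in $\tilde{\sG}$ --- where $z_i^{\mathrm{sim}}$ denotes the variable $z_i$ produced internally by $\tilde{\alpha}$ --- coincides with the law of $(X_1, Y_1, Z_1, \ldots, X_t, Y_t)$ under $\alpha$ in $\sG$. The inductive step reduces to showing that, conditional on the full past and on $\tilde{X}_t = x$, the simulated $z_t^{\mathrm{sim}}$ has exactly the law $\P_{\fhi(x,Y_t)}$. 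If $x \in \cR$, hypothesis~\ref{buone} factorizes $\P_{\fhi(x,Y_t)}$ as $\P_{\pi(\fhi(x,Y_t))} \otimes (\mu_L)_{\psi_{t,x}}$, which is exactly the law of $\brb{\tilde{V}_t,\psi_{t,x}(U_t^{(2)})}$ --- here one crucially uses that $U_t^{(2)}$ is independent of everything else and that $Y_t$ is $\P$-independent of $(Y_1,\ldots,Y_{t-1})$ (which passes to independence from the whole past history, itself a function of $(U_1,\ldots,U_t,Y_1,\ldots,Y_{t-1})$). The case $x \in \cU$ follows identically from hypothesis~\ref{butwo}.

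Since $\rho$, $\cY$, and $\sP$ are common to both games, equality of the joint laws of $(X_t, Y_t)_{t\le T}$ and $(\tilde{X}_t, Y_t)_{t\le T}$ immediately gives $R_T^\P(\alpha) = R_T^\P(\tilde{\alpha})$ for every $\P \in \sS$; taking $\sup_\P$ and then $\inf_\alpha$ yields $\Rs_T(\tilde{\sG}) \le \Rs_T(\sG)$. The main technical obstacle is measurability: one needs $(t,x,u) \mapsto \psi_{t,x}(u)$ and $(t,x,u) \mapsto \gamma_{t,x}(u)$ to be jointly measurable so that $\tilde{\alpha}$ actually qualifies as a strategy. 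In the bilateral trade instances targeted in the paper this is automatic; in full generality, it can be secured by a standard measurable-selection argument, or by restating hypotheses~\ref{buone}--\ref{butwo} in terms of Markov kernels, which is without loss of generality for the applications we care about.
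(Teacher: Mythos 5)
Your proposal is correct and follows essentially the same route as the paper's proof: split each uniform $U_t$ into two independent uniform streams (the paper does this by bit-interleaving, you by a measure-preserving map $[0,1]\to[0,1]^2$, which is the same device), use one stream to drive $\alpha$ and the other to synthesize the scenario-independent feedback via $\psi_{t,x}$ and $\gamma_{t,x}$, and prove by induction that the simulated history has the same joint law as the true one, which gives equality of regrets. Your closing remark on joint measurability of $(t,x,u)\mapsto\psi_{t,x}(u)$ is a fair point that the paper itself leaves implicit (cf.\ its blanket measurability footnote), so it is not a gap relative to the paper's argument.
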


\subsection{Proofs of the lemmas \tccheck}
\label{sec:lemmas}

In this section, we will give a full proof of the two useful Embedding and Simulation lemmas introduces in Appendix~\ref{s:keylemmas}.
To lighten the notation, for any $m,n \in \N$, with $m\le n$ and a family $(\lambda_k)_{k\in\N}$ we let $\lambda_{m:n} := (\lambda_m, \lambda_{m+1}, \ldots, \lambda_n)$ and similarly $\lambda_{n:m} := (\lambda_{n}, \lambda_{n-1} \ldots, \lambda_m)$.

We begin by proving the Embedding lemma, that we restate for ease of reading.

\lembedding*

\begin{proof}
Fix any strategy $\alpha \in \sA(\sG)$. For each time $t \in \N$, define
\[
\tilde{\alpha}_t \colon [0,1]^t \times \tilde{\cZ}^{t-1} \to \tilde \cX, (u_1, \dots, u_t, \tilde{z}_1, \dots, \tilde{z}_{t-1}) \mapsto \slf\Brb{\alpha_t\brb{u_1,\dots, u_t, \slg(\tilde{z}_1), \dots, \slg(\tilde{z}_{t-1})}}.
\]
Then $\tilde{\alpha} := (\tilde{\alpha}_t)_{t \in \N} \in \sA(\tilde \sG)$. As usual, let $(Y_t)_{t \in \N}$ and $(U_t)_{t \in \N}$ be the adversary's actions and the player's randomization in game $\sG$ and $(X_t)_{t \in \N}$ and $(Z_t)_{t \in \N}$ the player's actions and the feedback according to the strategy $\alpha$. Let $(\tilde{Y}_t)_{t \in \N}, (\tilde{U}_t)_{t \in \N},(\tilde{X}_t)_{t \in \N},(\tilde{Z}_t)_{t \in \N}$ be the corresponding objects for the game $\tilde{\sG}$ and the strategy $\tilde{\alpha}$. Furthermore, define
\[
\hat{X}_1 = \alpha_1(\tilde{U}_1), \quad 
\hat{Z}_1 = \slg\brb{\tilde{\fhi}(\tilde{X}_1, \tilde{Y}_1)}, \quad 
\hat{X}_2 = \alpha_2(\tilde{U}_1,\tilde{U}_2,\hat{Z}_1), \quad 
\hat{Z}_2 = \slg\brb{\tilde{\fhi}(\tilde{X}_2, \tilde{Y}_2)}, \dots \;.
\]
Fix $\tilde{\P} \in \tilde{\sS}$, where $\tilde{\sS}$ is the set of scenarios of the game $\tilde \sG$. 
Then $\tilde{\P}_{\tilde{U}_1} = \brb{\slh(\tilde{\P})}_{U_1}$. Now, since $X_1 = \alpha_1(U_1)$ and $\hat{X}_1 =\alpha_1 (\tilde{U}_1)$, we also have that $\tilde{\P}_{\hat{X}_1, \tilde{U}_1} = \brb{\slh(\tilde{\P})}_{X_1,U_1} =: \Q_1$. 
Now, up to a set with $\Q_{1}$-probability zero, if $x_1 \in \cX$ and $u_1 \in [0,1]$, we get, using \cref{bullthree}:
\begin{align*}
    \tilde{\P}_{\hat{Z}_1 \mid \hat{X}_1=x_1, \tilde{U}_1=u_1} &= 
    \tilde{\P}_{\slg\Brb{\tilde{\fhi}\brb{\slf(\hat{X}_1), \tilde{Y}_1} } \mid \hat{X}_1=x_1, \tilde{U}_1=u_1} =
    \tilde{\P}_{\slg\Brb{\tilde{\fhi}\brb{\slf(x_1), \tilde{Y}_1} } }
    \\
    &=
    \brb{\slh(\tilde{\P})}_{\fhi(x_1,Y_1)}=
    \brb{\slh(\tilde{\P})}_{\fhi(X_1,Y_1) \mid X_1=x_1, U_1=u_1} = \brb{\slh(\tilde{\P})}_{Z_1 \mid X_1=x_1, U_1=u_1} \;.
\end{align*}
So, if $A_1 \subset \cZ$ and $D \subset \cX \times [0,1]$, then
\begin{align*}
\tilde{\P}_{\hat{Z}_1, \brb{\hat{X}_1, \tilde{U}_1}}(A_1 \times D) &= \int_{D} \P_{\hat{Z}_1 \mid \hat{X}_1=x_1, \tilde{U}_1 = u_1}(A_1) \dif\P_{\hat{X}_1, \tilde{U}_1}(x_1,u_1)
\\
&=
\int_{D} \brb{\slh(\tilde{\P})}_{Z_1 \mid X_1=x_1, U_1=u_1}(A_1) \dif\brb{\slh(\tilde{\P})}_{X_1, U_1}(x_1,u_1) = \brb{\slh(\tilde{\P})}_{Z_1, (X_1, U_1)}(A_1 \times D) \,,
\end{align*}
from which it follows that $\tilde{\P}_{\hat{Z}_1, \hat{X}_1, \tilde{U}_1} = \brb{\slh(\tilde{\P})}_{Z_1, X_1, U_1}$.
By induction, suppose that for time $t\in [T-1]$ we have that
\[
\tilde{\P}_{\hat{Z}_t,\dots,\hat{Z}_1,\hat{X}_t,\dots,\hat{X}_1, \tilde{U}_t,\dots,\tilde{U}_1} = \brb{\slh(\tilde{\P})}_{Z_t,\dots,Z_1,X_t,\dots,X_1, U_t,\dots,U_1} \;.
\]
Then, using independence we have that
\[
\tilde{\P}_{\hat{Z}_t,\dots,\hat{Z}_1,\hat{X}_t,\dots,\hat{X}_1, \tilde{U}_{t+1}, \tilde{U}_t,\dots,\tilde{U}_1} = \brb{\slh(\tilde{\P})}_{Z_t,\dots,Z_1,X_t,\dots,X_1, U_{t+1} , U_t,\dots,U_1} \;.
\]
Furthermore, since $X_{t+1} = \alpha_{t+1}(U_1,\dots, U_{t+1}, Z_1, \dots, Z_t)$ and $\hat{X}_{t+1} = \alpha_{t+1} (\tilde{U}_1,\dots, \tilde{U}_{t+1}, \hat{Z}_1, \dots, \hat{Z}_t)$, we have that
\[
\tilde{\P}_{\hat{Z}_t,\dots,\hat{Z}_1,\hat{X}_{t+1},\hat{X}_t,\dots,\hat{X}_1, \tilde{U}_{t+1}, \tilde{U}_t,\dots,\tilde{U}_1} = \brb{\slh(\tilde{\P})}_{Z_t,\dots,Z_1,X_{t+1},X_t,\dots,X_1, U_{t+1}, U_t,\dots,U_1} =: \Q_{t+1}\;.
\]
Now, up to a set with $\Q_{t+1}$-probability zero, if $x_1, \dots, x_{t+1} \in \cX$, $u_1, \dots, u_{t+1} \in [0,1]$, and $z_1, \dots, z_t \in \cZ$, by the $\tilde{\P}$-independence of $\tilde{Y}_1, \ldots, \tilde Y_{t+1}$, \cref{bullthree}, and the $\slh(\tilde \P)$-independence of $Y_1, \ldots, Y_{t+1}$, we have
\begin{multline*}
    \tilde{\P}_{\hat{Z}_{t+1} \mid \hat{Z}_t = z_t,\dots,\hat{Z}_1 = z_1,\hat{X}_{t+1} =x_{t+1},\dots,\hat{X}_1 = x_1, \tilde{U}_{t+1} = u_{t+1},\dots,\tilde{U}_1=u_1}
\\
\begin{aligned}
&=
    \tilde{\P}_{\slg\Brb{\tilde{\fhi}\brb{\slf(\hat{X}_{t+1}),\tilde{Y}_{t+1}}}\mid \hat{Z}_t = z_t,\dots,\hat{Z}_1 = z_1,\hat{X}_{t+1} =x_{t+1},\dots,\hat{X}_1 = x_1, \tilde{U}_{t+1} = u_{t+1},\dots,\tilde{U}_1=u_1} 
=
    \tilde{\P}_{\slg\Brb{\tilde{\fhi}\brb{\slf(x_{t+1}),\tilde{Y}_{t+1}}}}
\\
&= \brb{\slh(\tilde{\P})}_{\fhi(x_{t+1},Y_{t+1})} =
\brb{\slh(\tilde{\P})}_{\fhi(X_{t+1},Y_{t+1}) \mid Z_t = z_t,\dots,Z_1 = z_1,X_{t+1} =x_{t+1},\dots,X_1 = x_1, U_{t+1} = u_{t+1},\dots,U_1=u_1}
\\
&= \brb{\slh(\tilde{\P})}_{Z_{t+1} \mid Z_t = z_t,\dots,Z_1 = z_1,X_{t+1} =x_{t+1},\dots,X_1 = x_1, U_{t+1} = u_{t+1},\dots,U_1=u_1} \;.
\end{aligned}
\end{multline*}
So, if $A_{t+1}\subset \cZ, D \subset \cZ^t \times \cX^{t+1}\times[0,1]^{t+1}$, we have that
\begin{multline*}
\tilde{\P}_{\hat{Z}_{t+1},\brb{\hat{Z}_{t:1},\hat{X}_{t+1:1},\tilde{U}_{t+1:1} } }(A_{t+1}\times D)
\\
\begin{aligned}
&=
\int_{D} \tilde{\P}_{\hat{Z}_{t+1} \mid \hat{Z}_{t:1} = z_{t:1},\hat{X}_{t+1:1} =x_{t+1:1}, \tilde{U}_{t+1:1} = u_{t+1:1}}(A_{t+1}) \dif\tilde{\P}_{\hat{Z}_{t:1},\hat{X}_{t+1:1}, \tilde{U}_{t+1:1}}(z_{t:1}, x_{t+1:1}, u_{t+1:1})
\\
&=
\int_{D} \brb{\slh(\tilde{\P})}_{Z_{t+1} \mid Z_{t:1} = z_{t:1},C_{t+1:1} =x_{t+1:1}, U_{t+1:1} = u_{t+1:1}}(A_{t+1}) \dif\brb{\slh(\tilde{\P})}_{Z_{t:1},X_{t+1:1}, U_{t+1:1}}(z_{t:1}, x_{t+1:1}, u_{t+1:1})
\\
&=
\brb{\slh(\tilde{\P})}_{Z_{t+1},\brb{ Z_{t:1},X_{t+1:1}, U_{t+1:1} }}(A_{t+1}\times D) \;,
\end{aligned}
\end{multline*}
from which follows that $\tilde{\P}_{\hat{Z}_{t+1},\dots,\hat{Z}_1,\hat{X}_{t+1},\dots,\hat{X}_1, \tilde{U}_{t+1},\dots,\tilde{U}_1} = \brb{\slh(\tilde{\P})}_{Z_{t+1},\dots,Z_1,X_{t+1},\dots,X_1, U_{t+1},\dots,U_1}$.
In particular, for each $t \in [T]$ we have that $\tilde{\P}_{\hat{X}_t} = \brb{\slh(\tilde{\P})}_{X_t}$. 
Hence, using the $\slh(\tilde \P)$-independence of $Y_1,\ldots,Y_T$, Item~(\ref{bulltwo}), and the $\tilde{\P}$-independence of  $\tilde{Y}_1,\ldots,\tilde{Y}_T$, we get
\begin{align*}
\sum_{t=1}^T \E_{\slh(\tilde{\P})} \bsb{\rho(X_t,Y_t) } &= \sum_{t=1}^T \int_{\cX}\E_{\slh(\tilde{\P})} \bsb{\rho(x,Y_t)} \dif  \brb{\slh(\tilde{\P})}_{X_t}(x)
\\
&\le
\sum_{t=1}^T \int_{\cX}\E_{\tilde{\P}}[\tilde{\rho}(\slf(x),\tilde{Y}_t)] \dif  \brb{\slh(\tilde{\P})}_{X_t}(x)
\\
&= 
\sum_{t=1}^T \int_{\cX}\E_{\tilde{\P}}[\tilde{\rho}(\slf(x),\tilde{Y}_t)] \dif  \tilde{\P}_{\hat{X}_t}(x)
\\
&=
\sum_{t=1}^T \E_{\tilde{\P}} \bsb{\tilde \rho(\slf(\hat{X}_t),\tilde{Y}_t) } =
\sum_{t=1}^T \E_{\tilde{\P}} \bsb{\tilde \rho(\tilde{X}_t,\tilde{Y}_t) } \;.
\end{align*}
Then, using Item~(\ref{bullone}), we have
\begin{align*}
R_T^{\slh(\tilde{\P})}(\alpha) &= \sup_{x \in \cX} \bbrb{ \sum_{t=1}^T \E_{\slh(\tilde{\P})} \bsb{\rho(x,Y_t) } - \sum_{t=1}^T \E_{\slh(\tilde{\P})} \bsb{\rho(X_t,Y_t) }}
\\
&
\ge
\sup_{\tilde x \in \tilde \cX} \bbrb{ \sum_{t=1}^T \E_{\tilde{\P}} \bsb{\tilde \rho(\tilde x,\tilde{Y}_t) } - \sum_{t=1}^T \E_{\tilde{\P}} \bsb{\tilde \rho(\tilde{X}_t,\tilde{Y}_t) }} =
R_T^{\tilde{\P}}(\tilde{\alpha}) \;.
\end{align*}
Since $\tilde \P$ was arbitrary, we get
\[
    \Rs_T(\tilde{\sG}) 
= 
    \inf_{\beta \in \sA(\tilde{\sG})} R_T^{\tilde \sS}(\beta) 
\le 
    R_T^{\tilde \sS}(\tilde{\alpha}) 
= 
    \sup_{\tilde{\P} \in \tilde{\sS} } R_T^{\tilde{\P}}(\tilde{\alpha}) \le \sup_{\tilde{\P} \in \tilde{\sS} } R_T^{\slh(\tilde{\P})}(\alpha) \le \sup_{\P \in \sS } R_T^{\P}(\alpha) = R_T^{\sS}(\alpha) \;,
\]
and since $\alpha$ was arbitrary, we get
\[
\Rs_T(\tilde{\sG}) \le \inf_{\alpha \in \sA(\sG)} R_T^{\sS}(\alpha) = \Rs_T(\sG) \;.
\]
\end{proof}

We now prove the Simulation lemma we introduced in Appendix~\ref{s:keylemmas} showing how to get rid of uninformative feedback.

\lsimulation*

\begin{proof}
For each number $a\in [0,1]$, fix a binary representation $0.a_1 a_2 a_3 \ldots$ of $a$
and define $\xi(a) := 0. a_1 a_3 a_5 \ldots$, $\zeta(a) := 0. a_2 a_4 a_6 \ldots$.
Note that the two resulting functions $\xi,\zeta \colon [0,1] \to [0,1]$ are $\mu_L$-independent with common (uniform) push-forward distribution $(\mu_L)_\xi = \mu_L = (\mu_L)_\zeta$.

Let $(Y_t)_{t \in \N}, (U_t)_{t \in \N}$ be the sequences of adversary's actions and player's randomization for the sequential game $\sG$ and note that they are also the same for the sequential game $\tilde{\sG}$.
For each $t \in \N$ define $\beta_t \colon \cX \times \cV \times [0,1] \to \cZ$ via
\[
(x,v,u) \mapsto
\begin{cases}
\brb{v,\psi_{t,x}(u)} \;, &\text{ if } x \in \cR \;, \\
\gamma_{t,x}(u) \;,       &\text{ if } x \in \cU \;,
\end{cases}
\]
if $t \le T$, and in an arbitrary manner if $t\ge T+1$. Fix $\alpha = (\alpha_t)_{t\in\N} \in \sA(\sG)$. Let $(X_t)_{t \in \N}, (Z_t)_{t \in \N}$ be the sequences of player's actions and feedback associated to the strategy $\alpha$.

Fix $(u_t)_{t \in \N} \subset [0,1]$ and $(v_t)_{t\in \N} \subset \cV$. Define by induction (on $t$) the sequences $(x_t)_{t \in \N}$ and $(z_t)_{t\in \N}$ via the relationships
\[
x_t = \alpha_t\brb{\xi(u_1),\dots,\xi(u_{t}), z_1, \dots, z_{t-1}}, \qquad z_t = \beta_t\brb{x_t, v_t, \zeta(u_t)}.
\]
Note that for each $t \in \N$, we have that $x_t$ depends only on $u_1,\dots, u_t, v_1,\dots, v_{t-1}$, so we can define
\[
\tilde{\alpha}_{t}(u_1,\dots, u_t, v_1,\dots, v_{t-1}) := x_t.
\]
Being $(u_t)_{t \in \N}$ and $(v_t)_{t\in \N}$ arbitrary, this defines a sequence of functions $(\tilde{\alpha}_t)_{t \in \N}$ such that, for all $t\in \N$,
\[
\tilde{\alpha}_{t} \colon [0,1]^t \times \cV^{t-1} \to \cX
\]
i.e., $\tilde{\alpha}:=(\tilde{\alpha}_t)_{t \in \N} \in \sA(\tilde{\sG})$. Let $(\tilde{X}_t)_{t \in \N}$ and $(\tilde{V}_t)_{t \in \N}$ be respectively the sequence of player's actions and the feedback sequence associated with the strategy $\tilde{\alpha}$. For each $t \in \N$, define also $\tilde{Z}_t := \beta_{t}\brb{\tilde{X}_t, \tilde{V}_t, \zeta(U_t)}$. Note that for each $t \in \N$ it holds that $\tilde{X}_t = \alpha_t \brb{\xi(U_1), \dots, \xi(U_t), \tilde{Z}_1, \dots, \tilde{Z}_{t-1}}$.

Fix a scenario $\P \in \sS$. 
Note first that $\P_{\xi(U_1)} = \P_{U_1}$, and since $X_1 = \alpha_1(U_1)$ and $\tilde{X}_1 = \tilde{\alpha}_1(U_1) = \alpha_1\brb{\xi(U_1)}$, we also have that $\P_{\tilde{X}_1, \xi(U_1)} = \P_{X_1, U_1} =: \Q_1$. 
Now, up to a set with $\Q_{1}$-probability zero, if $x_1 \in \cX$ and $u_1 \in [0,1]$, using Items~(\ref{buone})~and~(\ref{butwo}), we have that
\begin{multline*}
    \P_{\tilde{Z}_1 \mid \tilde{X}_1=x_1, \xi(U_1) = u_1} 
=
    \P_{\beta_1\brb{\tilde{X}_1, \tilde{\fhi}(\tilde{X}_1,Y_1), \zeta(U_1)} \mid \tilde{X}_1=x_1, \xi(U_1) = u_1}
=
    \P_{\beta_1\brb{x_1, \tilde{\fhi}(x_1,Y_1), \zeta(U_1)}}
\\
\begin{aligned}
&=
\begin{cases}
\P_{\beta_1\Brb{x_1 , \pi\brb{\fhi(x_1,Y_1)}, \zeta(U_1)} } &\text{if } x_1 \in \cR \\
\P_{\beta_1\brb{ x_1, *, \zeta(U_1) }} &\text{if } x_1 \in \cU \\
\end{cases}
=
\begin{cases}
\P_{\Brb{ \pi\brb{\fhi(x_1,Y_1)} , \psi_{1,x_1} \brb{\zeta(U_1)} } } &\text{if } x_1 \in \cR \\
\P_{\gamma_{1,x_1}\brb{\zeta(U_1)}} &\text{if } x_1 \in \cU \\
\end{cases}
\\
&=
\begin{cases}
\P_{ \pi\brb{\fhi(x_1,Y_1)} } \otimes \P_{ \psi_{1,x_1} \brb{\zeta(U_1)}} &\text{if } x_1 \in \cR \\
\P_{\gamma_{1,x_1}\brb{\zeta(U_1)}} &\text{if } x_1 \in \cU \\
\end{cases}
=
\begin{cases}
\P_{ \pi\brb{\fhi(x_1,Y_1)} } \otimes \brb{\P_{\zeta(U_1)}}_{ \psi_{1,x_1} } &\text{if } x_1 \in \cR \\
\brb{\P_{\zeta(U_1)}}_{\gamma_{1,x_1}} &\text{if } x_1 \in \cU \\
\end{cases}
\\
&=
\begin{cases}
\P_{ \pi\brb{\fhi(x_1,Y_1)} } \otimes \brb{\mu_L}_{ \psi_{1,x_1} } &\text{if } x_1 \in \cR \\
\brb{\mu_L}_{\gamma_{1,x_1}} &\text{if } x_1 \in \cU \\
\end{cases}
=
\P_{\fhi(x_1,Y_1)} 
=
    \P_{\fhi(X_1,Y_1) \mid X_1=x_1, U_1=u_1} = \P_{Z_1 \mid X_1=x_1, U_1=u_1} \;.
\end{aligned}
\end{multline*}
So, if $A_1 \subset \cZ$ and $D \subset \cX \times [0,1]$, then
\begin{align*}
\P_{\tilde{Z}_1, \brb{\tilde{X}_1, \xi(U_1)}}(A_1 \times D) &= \int_{D} \P_{\tilde{Z}_1 \mid \tilde{X}_1=x_1, \xi(U_1) = u_1}(A_1) \dif\P_{\tilde{X}_1, \xi(U_1)}(x_1,u_1)
\\
&=
\int_{D} \P_{Z_1 \mid X_1=x_1, U_1 = u_1}(A_1) \dif\P_{X_1, U_1}(x_1,u_1) = \P_{Z_1, (X_1, U_1)}(A_1 \times D) \;,
\end{align*}    
from which it follows that $\P_{\tilde{Z}_1, \tilde{X}_1, \xi(U_1)} = \P_{Z_1, X_1, U_1}$.
By induction, suppose that for $t\in [T-1]$ we have that
\[
\P_{\tilde{Z}_t,\dots,\tilde{Z}_1,\tilde{X}_t,\dots,\tilde{X}_1, \xi(U_t),\dots,\xi(U_1)} = \P_{Z_t,\dots,Z_1,X_t,\dots,X_1, U_t,\dots,U_1} \;.
\]
Then, using independence we have that
\[
\P_{\tilde{Z}_t,\dots,\tilde{Z}_1,\tilde{X}_t,\dots,\tilde{X}_1, \xi(U_{t+1}),\xi(U_t),\dots,\xi(U_1)} = \P_{Z_t,\dots,Z_1,X_t,\dots,X_1, U_{t+1},U_t,\dots,U_1} \;.
\]
Furthermore, since $X_{t+1} = \alpha_{t+1}(U_1,\dots, U_{t+1}, Z_1, \dots, Z_t)$ and 

\[
    \tilde{X}_{t+1} 
= 
    \tilde{\alpha}_{t+1} (U_1,\dots, U_{t+1}, \tilde{V}_1, \dots, \tilde{V}_t) = \alpha_{t+1} (\xi(U_1),\dots, \xi(U_{t+1}), \tilde{Z}_1, \dots, \tilde{Z}_t)
\]
we have that
\[
\P_{\tilde{Z}_t,\dots,\tilde{Z}_1,\tilde{X}_{t+1},\tilde{X}_t,\dots,\tilde{X}_1, \xi(U_{t+1}),\xi(U_t),\dots,\xi(U_1)} = \P_{Z_t,\dots,Z_1,X_{t+1},X_t,\dots,X_1, U_{t+1},U_t,\dots,U_1} =: \Q_{t+1} \;.
\]
Now, up to a set with $\Q_{t+1}$-probability zero, if $x_1, \dots, x_{t+1} \in \cX$, $u_1, \dots, u_{t+1} \in [0,1]$ and $z_1, \dots, z_t \in \cZ$, using the $\P$-independence of $Y_1,\ldots,Y_{t+1}$ and Items~\eqref{buone}--\eqref{butwo}, we have that
\begin{multline*}
    \P_{\tilde{Z}_{t+1} \mid \tilde{Z}_t = z_t,\dots,\tilde{Z}_1 = z_1,\tilde{X}_{t+1} =x_{t+1},\dots,\tilde{X}_1 = x_1, \xi(U_{t+1}) = u_{t+1},\dots,\xi(U_1)=u_1}
\\
\begin{aligned}
&=
    \P_{\beta_{t+1}\brb{\tilde{X}_{t+1}, \tilde{\fhi}(\tilde{X}_{t+1},Y_{t+1}), \zeta(U_{t+1})} \mid \tilde{Z}_t = z_t,\dots,\tilde{Z}_1 = z_1,\tilde{X}_{t+1} =x_{t+1},\dots,\tilde{X}_1 = x_1, \xi(U_{t+1}) = u_{t+1},\dots,\xi(U_1)=u_1}
\\
&=
    \P_{\beta_{t+1}\brb{x_{t+1}, \tilde{\fhi}(x_{t+1},Y_{t+1}), \zeta(U_{t+1})}}
=
\begin{cases}
\P_{\beta_{t+1}\Brb{x_{t+1} , \pi\brb{\fhi(x_{t+1},Y_{t+1})}, \zeta(U_{t+1})} } &\text{if } x_{t+1} \in \cR \\
\P_{\beta_{t+1}\brb{ x_{t+1}, *, \zeta(U_{t+1}) }} &\text{if } x_{t+1} \in \cU \\
\end{cases}
\\
&=
\begin{cases}
\P_{\Brb{ \pi\brb{\fhi(x_{t+1},Y_{t+1})} , \psi_{{t+1},x_{t+1}} \brb{\zeta(U_{t+1})} } } &\text{if } x_{t+1} \in \cR \\
\P_{\gamma_{{t+1},x_{t+1}}\brb{\zeta(U_{t+1})}} &\text{if } x_{t+1} \in \cU \\
\end{cases}
\\
&
=
\begin{cases}
\P_{ \pi\brb{\varphi(x_{t+1},Y_{t+1})} } \otimes \P_{ \psi_{{t+1},x_{t+1}} \brb{\zeta(U_{t+1})}} &\text{if } x_{t+1} \in \cR \\
\P_{\gamma_{{t+1},x_{t+1}}\brb{\zeta(U_{t+1})}} &\text{if } x_{t+1} \in \cU \\
\end{cases}
\\
&=
\begin{cases}
\P_{ \pi\brb{\varphi(x_{t+1},Y_{t+1})} } \otimes \brb{\P_{\zeta(U_{t+1})}}_{ \psi_{{t+1},x_{t+1}} } &\text{if } x_{t+1} \in \cR \\
\brb{\P_{\zeta(U_{t+1})}}_{\gamma_{{t+1},x_{t+1}}} &\text{if } x_{t+1} \in \cU \\
\end{cases}
\\
&
=
\begin{cases}
\P_{ \pi\brb{\varphi(x_{t+1},Y_{t+1})} } \otimes \brb{\mu_L}_{ \psi_{{t+1},x_{t+1}} } &\text{if } x_{t+1} \in \cR \\
\brb{\mu_L}_{\gamma_{{t+1},x_{t+1}}} &\text{if } x_{t+1} \in \cU \\
\end{cases}
\\
&=
\P_{\varphi(x_{t+1},Y_{t+1})} 
=
\P_{\varphi(X_{t+1},Y_{t+1}) \mid Z_t = z_t,\dots,Z_1 = z_1,X_{t+1} =x_{t+1},\dots,X_1 = x_1, U_{t+1} = u_{t+1},\dots,U_1=u_1} 
\\
&= \P_{Z_{t+1} \mid Z_t = z_t,\dots,Z_1 = z_1,X_{t+1} =x_{t+1},\dots,X_1 = x_1, U_{t+1} = u_{t+1},\dots,U_1=u_1} \;.
\end{aligned}
\end{multline*}

So, if $A_{t+1}\subset \cZ, D \subset \cZ^t \times \cX^{t+1}\times[0,1]^{t+1}$, we have that
\begin{multline*}
\P_{\tilde{Z}_{t+1},\brb{\tilde{Z}_{t}\dots,\tilde{Z}_1,\tilde{X}_{t+1},\dots,\tilde{X}_1, \xi(U_{t+1}),\dots,\xi(U_1)}}(A_{t+1}\times D)
\\
\begin{aligned}
&=
\int_{D} \P_{\tilde{Z}_{t+1} \mid \tilde{Z}_{t:1} = z_{t:1},\tilde{X}_{t+1:1} =x_{t+1:1}, \brb{\xi(U_{t+1}),\dots, \xi(U_{1})} = u_{t+1:1}}(A_{t+1}) \dif \Q_{t+1}
(z_{t:1}, x_{t+1:1}, u_{t+1:1})
\\
&=
\int_{D} \P_{Z_{t+1} \mid Z_{t:1} = z_{t:1},X_{t+1:1} =x_{t+1:1}, U_{t+1:1} = u_{t+1:1}}(A_{t+1})
\dif \Q_{t+1}
(z_{t:1}, x_{t+1:1}, u_{t+1:1})
\\
&=
\P_{Z_{t+1},\brb{Z_{t},\dots,Z_1,X_{t+1},\dots,X_1, U_{t+1},\dots,U_1}}(A_{t+1}\times D)
\\
\end{aligned}
\end{multline*}
from which it follows that $\P_{\tilde{Z}_{t+1:1}, \tilde{X}_{t+1:1}, \brb{\xi(U_{t+1}),\dots,\xi(U_1)} } = \P_{Z_{t+1:1}, X_{t+1:1}, U_{t+1:1}}$.
In particular, for each $t \in [T]$ we have that $\P_{X_t} = \P_{\tilde{X}_t}$. So, for each $t \in [T]$, using the $\P$-independence of $Y_1,\ldots,Y_{t}$, we have that
\[
\P_{X_t,Y_t} = \P_{X_t}\otimes\P_{Y_t} = \P_{\tilde{X}_t}\otimes\P_{Y_t} = \P_{\tilde{X}_t,Y_t} \;,
\]
and then
\[
\E_\P \bsb{\rho(X_t,Y_t)} = \E_{\P_{X_t,Y_t}} \bsb{\rho} = \E_{\P_{\tilde{X}_t,Y_t}} \bsb{\rho} = \E_\P \bsb{\rho(\tilde{X}_t,Y_t)} \;.
\]
In conclusion
\begin{multline*}
    R_T^\P(\alpha) = \sup_{x \in \cX} \E_\P \lsb{\sum_{t=1}^T \rho(x,Y_t) - \sum_{t=1}^T \rho(X_t,Y_t)} = \sup_{x \in \cX} \bbrb{\sum_{t=1}^T \E_\P \lsb{ \rho(x,Y_t)}-\sum_{t=1}^T\E_\P \lsb{\rho(X_t,Y_t)}} 
\\
= \sup_{x \in \cX} \bbrb{\sum_{t=1}^T \E_\P \lsb{ \rho(x,Y_t)}-\sum_{t=1}^T\E_\P \lsb{\rho(\tilde{X}_t,Y_t)}} =  \sup_{x \in \cX} \E_\P \lsb{\sum_{t=1}^T \rho(x,Y_t) - \sum_{t=1}^T \rho(\tilde{X}_t,Y_t)} = R_T^\P(\tilde{\alpha}) \;.
\end{multline*}
Since $\P$ was arbitrary, it follows that $R_T^{\sS}(\alpha) = R_T^{\sS}(\tilde{\alpha})$. Since $\alpha$ was arbitrary, it follows that
\[
\Rs_T(\sG) = \inf_{\alpha \in \sA(\sG)} R_T^{\sS}(\alpha) = \inf_{\alpha \in \sA(\sG)} R_T^{\sS}(\tilde{\alpha}) \ge \inf_{\alpha' \in \sA(\tilde{\sG})} R_T^{\sS}(\alpha') = \Rs_T(\tilde{\sG}) \;.
\]
\end{proof}

\section{\texorpdfstring{$\sqrt{T}$}{sqrt(T)} Lower Bound Under Full-Feedback (iv+bd) \tccheck}
\label{s:lower-full}

In this section, we prove that in the full-feedback case, no strategy can beat the $\sqrt{T}$ rate that we proved in 
\Cref{thm:upper_full} when the seller/buyer pair $(S_t,B_t)$ is drawn i.i.d. from an unknown fixed distribution, not even under the further assumptions that the valuations of the seller and buyer are independent of each other and have bounded densities.

The idea of the proof is to build a family of scenarios $\P^{\pm\e}$ parameterized by $\e \in [0,1]$, like in \cref{f:root-t-full}.
The only way to avoid suffering linear regret in a scenario $\P^{\pm\e}$ is to identify the sign of $\pm \e$.
Leveraging the Embedding and Simulation lemmas (\cref{l:embedding,l:simulation}), this construction leads to a reduction to a two-action expert problem, which has a know lower bound on the regret of order $\sqrt{T}$.

%%%% This command restate the theorem with the same numbering
\begin{theorem}[\cref{thm:lower-full}, restated]
In the full-feedback stochastic (iid) setting with independent valuations (iv) and densities bounded by a constant $M\ge 4$ (bd), for all horizons $T\in \N$, the minimax regret satisfies
\[
    \Rs_T = \Omega \brb{ \sqrt{T} } \;.
\]
\end{theorem}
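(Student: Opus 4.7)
The plan is to formalize the sketch in Figure~\ref{f:root-t-full} within the sequential-game framework of Appendix~\ref{s:model-appe} and reduce to a two-action prediction-with-expert-advice problem. First, I fix the normalization $M\ge 4$, let $\e\in[0,1/4]$ be a parameter to be tuned, and define two product distributions $\boldsymbol{\mu}_{\pm\e}\in\sPivbd$ for the seller/buyer pairs by taking the seller's density to be $f_{S,\pm\e}=2(1\pm\e)\I_{[0,1/4]}+2(1\mp\e)\I_{[1/2,3/4]}$ and the buyer's independent density $f_B=2\I_{[1/4,1/2]\cup[3/4,1]}$. Both densities are bounded by $2(1+\e)\le 3\le M$, integrate to $1$, and are independent, so the induced scenarios $\P^{\pm\e}:=(\otimes_{t}(f_{S,\pm\e}f_B\cdot\mu_L\otimes\mu_L))\otimes\bmu_L$ lie in the scenario set of the full-feedback (iv+bd) bilateral trade game $\sG$.

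Next I would compute $p\mapsto\E_{\P^{\pm\e}}[\gft_t(p)]$ using the Decomposition lemma \eqref{e:decomp-three}; by construction, this function is piecewise polynomial, symmetric under $p\mapsto 1-p$ and $\e\mapsto-\e$, and a direct check shows its maximum is attained uniquely in $[0,1/2]$ under $+\e$ and in $(1/2,1]$ under $-\e$. More importantly, the best price in the ``wrong'' half differs in expected gain from the best price in the ``right'' half by a quantity $\Theta(\e)$, uniformly over any specific choice of price in the wrong half. Hence, letting $\tilde{\sG}$ denote the two-action expert game with arm set $\tilde{\cX}=\{L,R\}$, adversary's actions $\tilde{Y}_t\in\{+\e,-\e\}$ (drawn i.i.d.), reward $\tilde\rho(L,+\e)=\tilde\rho(R,-\e)=c\e$ and $\tilde\rho(L,-\e)=\tilde\rho(R,+\e)=0$ for a small universal $c>0$, and feedback $\tilde\fhi(\tilde x,\tilde y)=\tilde y$ (full information), the best action differs between the two environments by a reward gap of $c\e$.

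To apply the Embedding lemma (\cref{l:embedding}), I would set $\slf(p)=L$ if $p\le 1/2$ and $\slf(p)=R$ otherwise; let $\slh(\tilde\P_{\pm})=\P^{\pm\e}$; and choose $\slg$ so that the full-feedback seller/buyer pair $(S_t,B_t)$ in $\sG$ maps to $\tilde Y_t\in\{+\e,-\e\}$. The feedback-distribution identity of item~3 fails directly, because $(S_t,B_t)$ carries strictly more information than the sign of $\pm\e$; but this extra information is uninformative across the two scenarios in the precise sense required by the Simulation lemma (\cref{l:simulation}): conditionally on the sign $\tilde Y_t$, the distribution of $(S_t,B_t)$ is fixed and can be simulated from $\mu_L$. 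I would therefore first invoke \cref{l:simulation} (with $\cV=\{+\e,-\e\}$) to replace the full $(S_t,B_t)$-feedback in $\sG$ by its sign, and then apply \cref{l:embedding} to the resulting simplified game and to $\tilde{\sG}$, using items~\ref{bullone}--\ref{bulltwo} which follow from the gain-gap computation above.

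Finally, the minimax regret of the two-action expert game $\tilde{\sG}$ over $T$ rounds with reward gap $c\e$ is, by the standard testing-of-two-hypotheses lower bound (e.g., \citep{cover65}), at least $\Omega(c\e\sqrt{T})$ once $\e\lesssim 1/\sqrt{T}$. Tuning $\e=\Theta(1/\sqrt{T})$ and chaining $\Rs_T(\sG)\ge \Rs_T(\tilde{\sG})\ge \frac{1}{8\sqrt{2\pi}}\sqrt{T}$ yields the claim. The main obstacle I anticipate is verifying item~\ref{bullone} of the Embedding lemma: one must check that the supremum over $\tilde x\in\{L,R\}$ of the expected cumulative reward in $\tilde\sG$ is dominated by the supremum over $p\in[0,1]$ of the expected cumulative gain from trade in $\sG$, which requires the precise computation of $\max_p\E[\gft(p,S_1,B_1)]$ in both environments $\P^{\pm\e}$ and a careful choice of the constant $c$ so that $c\e$ matches the actual per-round gap (rather than just being $\Theta(\e)$).
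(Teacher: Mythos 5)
Your construction of the hard family ($f_{S,\pm\e}$, $f_B$) and the $\Theta(\e)$ gap between the two halves of $[0,1]$ match the paper, but the reduction you then set up has two genuine gaps. First, the target expert game is mis-specified. You put the parameter $\e$ into the \emph{rewards} ($\tilde\rho(L,+\e)=c\e$, etc.) and let the feedback be $\tilde Y_t$ itself; if each environment makes $\tilde Y_t$ (essentially) deterministic, one round of full feedback identifies the environment and that game has minimax regret $O(\e)$, which is useless. Even taking your quantitative claim at face value, chaining "$\Omega(c\e\sqrt{T})$ for $\e\lesssim 1/\sqrt{T}$" with the tuning $\e=\Theta(1/\sqrt{T})$ gives only a constant, not $\Omega(\sqrt{T})$; the $\e$ must not multiply the regret twice. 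In the paper's chain the final game $\sG_5$ has \emph{constant} reward entries ($1/2$ vs.\ $3/8$) and $\e$ enters only through the adversary's Bernoulli bias $\frac{1\pm\e}{2}$; the quoted bound $\frac{1}{8\sqrt{2\pi}}\sqrt{T}$ for two-expert prediction already performs the optimization over $\e$ internally, so no external tuning is needed. Concretely, the hardness lives in the fact that each round provides only a Bernoulli$\brb{\frac{1+\e}{2}}$ observation about the sign, so distinguishing costs $\sim 1/\e^2$ rounds at $\Theta(\e)$ per-round regret; your formulation of $\tilde\sG$ loses exactly this structure.

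Second, your use of the Simulation lemma "with $\cV=\{+\e,-\e\}$" is not well-formed: in \cref{l:simulation} the retained component must be a (measurable) function of the action and the adversary's action $(p,(S_t,B_t))$, whereas the sign of $\pm\e$ is a property of the \emph{scenario}, not of the realization $(S_t,B_t)$ — no single observation determines it. What can legitimately be retained is the informative bit $\I\{S_t\le \nicefrac{1}{4}\}$, whose law is Bernoulli$\brb{\frac{1+\e}{2}}$, while the conditional position of $S_t$ within its block and the value of $B_t$ have scenario-independent conditional laws and can be simulated away (this is exactly how the paper proceeds, via its games $\sG_3$ and $\sG_4$). Relatedly, the paper sidesteps the obstacle you flag for item~\ref{bullone} of \cref{l:embedding} by first replacing $\gft$ with a modified reward $\rho_1$ inside the bilateral trade game (the reward of a price depends only on which half of $[0,1]$ it lies in, evaluated at a representative price), which is weakly larger for suboptimal prices and has the same optimum; after that, embedding into the constant-reward expert game is a routine verification rather than a delicate constant-matching exercise. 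With these two repairs — retaining $\I\{S_t\le\nicefrac14\}$ as the feedback and targeting the constant-gap, $\e$-biased expert game — your argument becomes the paper's proof.
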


\begin{proof}
Fix any horizon $T\in \N$ and any $M \ge 4$. 
Recalling Appendix~\ref{s:biltrad-setting-appe}, the full-feedback stochastic (iid) setting with independent valuations (iv) and densities bounded (bd) by $M$ is a game $\sG := (\cX, \cY, \cZ, \rho, \varphi, \sP)$, where $\cX = [0,1]$, $\cY = [0,1]^2$, $\cZ = [0,1]^2$, $\rho = \gft$, $\fhi\colon \brb{p, (s,b)} \mapsto (s,b)$, and $\sP = \sPivbd$.
Define, for each $\e \in [-1,1]$, the densities $f_{S,\e} = 2(1+\e)\I_{[0,\frac{1}{4}]} + 2(1-\e)\I_{[\frac{1}{2},\frac{3}{4}]}$ and $f_B = 2\I_{[\frac{1}{4},\frac{1}{2}]\cup[\frac{3}{4},1]}$. 
Fix the adversary's behavior $\sP_1$ as the subset of $\sP$ whose elements have the form $\bmu_\e := \otimes_{t \in \N} (f_{S,\e}\mu_L \otimes f_B\mu_L)$, for some $\e \in [-1,1]$. 
Since $\sP_1 \subset \sP$, the game $\sG_1 := (\cX, \cY, \cZ, \rho, \varphi, \sP_1)$ is easier than $\sG$ (i.e., $\Rs_T(\fG) \ge \Rs_T(\fG_1)$) by the Embedding lemma (\cref{l:embedding}) with $\slf$ and $\slg$ as the identities, and $\slh$ as the inclusion. 
Now, define $\rho_1 \colon \cX \times \cY \to [0,1]$, $\lrb{p,(s,b)} \mapsto (b-s)\I\lcb{s \le \frac{1}{4} \le b}\I\lcb{p \le \frac{1}{2}} + (b-s)\I\lcb{s \le \frac{3}{4} \le b}\I\lcb{p > \frac{1}{2}}$ and note that, defining $\fG_2 := (\cX, \cY, \cZ, \rho_1, \varphi, \fP_1)$, by the Embedding lemma with $\slf,\slg,\slh$ as the identities, we have that the game $\fG_2$ is easier than the game $\fG_1$ (i.e., $\Rs_T(\fG_1) \ge \Rs_T(\fG_2)$). 
Then, let $\cZ_3 := \{0,1\} \times \bsb{ 0, \frac{1}{4} } \times [0,1]$ and $\fhi_3 \colon \cX \times \cY \to \cZ_3$, $\brb{ p, (s,b) } \mapsto \brb{ \I\{s \le \nicefrac{1}{4}\}, \,  s\I\{s \le \nicefrac{1}{4}\} + (s-\nicefrac{1}{2})\I \{ \nicefrac{1}{2} \le s \le \nicefrac{3}{4} \}, \, b }$.
Define the game $\sG_3 := (\cX, \cY, \cZ_3, \rho_1, \fhi_3, \sP_1)$. 
By the Embedding lemma with $\slf,\slh$ as the identities and $\slg \colon \cZ_3 \to \cZ$, $(i, \tilde{s}, b) \mapsto \brb{ \tilde s i + (\nicefrac{1}{2}+ \tilde s)(1-i) , \, b }$, we have that the game $\sG_3$ is easier than the game $\sG_2$ (i.e., $\Rs_T(\fG_2) \ge \Rs_T(\fG_3)$). 
Next, let $\varphi_4 \colon \cX \times \cY \to \cZ_3$, $\brb{p,(s,b)} \mapsto \I\{s \le \frac{1}{4}\}$, and define the game $\fG_4 := (\cX, \cY, \cZ_3, \rho_1, \varphi_4, \fP_1)$.
Let $(Y_t)_{t\in \N}$ be the adversary's actions in $\sG_4$.
A tedious computation verifies that for all $t\in \N$, $p\in \cX$, and scenarios $\P$ of game $\sG_3$, $\P_{\fhi_3 (p, Y_t)} = \P_{\pi ( \fhi_3(p, Y_t) ) } \otimes (\nu \otimes f_B \mu_L)$, where $\pi \colon \cZ_3 \to \{0,1\}$ is the projection on the first component $\{0,1\}$ of $\cZ_3$ and $\nu$ is the uniform distribution on $[0,\nicefrac{1}{4}]$.
By the well-known Skorokhod representation \cite[Section 17.3]{williams1991probability}, there exists $\psi \colon [0,1] \to [0,\nicefrac{1}{4}] \times [0,1]$ such that $\nu \otimes f_B \mu_L = (\mu_L)_\psi$.
Thus, by the Simulation lemma (\cref{l:simulation}) with $\cR = \cX$ and $\cU = \varnothing$, the game $\fG_4$ is easier than $\fG_3$ (i.e., $\Rs_T(\fG_3) \ge \Rs_T(\fG_4)$).
Finally, consider the game $\fG_5 := \brb{ \{1,2\}, \{1,2\}, \{0,1\}, \rho_5, \fhi_5, \sP_5 }$, where in matrix notation, $\rho_5 = \bsb{ \rho_5(i,j) }_{i,j \in \{1,2\}}$ and $\fhi_5 = \bsb{ \fhi_5(i,j) }_{i,j \in \{1,2\}}$ are given by
\[
\rho_5:=\begin{bmatrix}
1/2 & 3/8   \\
3/8 & 1/2 
\end{bmatrix} \;,
\qquad
\fhi_5:=\begin{bmatrix}
1 & 0   \\
1 & 0 
\end{bmatrix} \;,
\]
and $\sP_5$ is the set of all measures $\tilde \bmu_\e$ of the form $\tilde \bmu_\e = \otimes_{t=1}^\infty \brb{ \frac{1+\e}{2}\delta_{1}+\frac{1-\e}{2} \delta_{2}}$ for some $\e \in [-1,1]$, where $\delta_i$ is the Dirac measure at $i \in \{1,2\}$.
Thus, letting $\sS_4$ and $\sS_5$ be the two sets of scenarios in games $\sG_4$ and $\sG_5$ respectively (note that $\sS_4$ coincides with the set of scenarios of $\sG_1$) and using again the Embedding lemma, this time with $\slf\colon [0,1] \to \{1,2\}$, $p\mapsto \I\{p\le \nicefrac{1}{2}\} + 2 \I\{p > \nicefrac{1}{2}\}$, $\slg\colon \{0,1\} \to \{0,1\}$, $i\mapsto i$, and $\slh \colon \sS_5 \to \sS_4$, $\tilde{\bmu}_\e \otimes \bmu_L \mapsto \bmu_\e \otimes \bmu_L$, we obtain that $\fG_5$ is easier than $\sG_4$ (i.e., $\Rs_T(\fG_4) \ge \Rs_T(\fG_5)$).
This last game $\sG_5$ is an online learning problem with full information (also known as learning with expert advice), whose minimax regret is known to be lower bounded by $\frac{1}{8 \sqrt{2 \pi}}\sqrt{T}$ \citep{cover65}.
In conclusion, we proved that $\Rs_T(\sG)\ge \Rs_T(\sG_5) \ge \frac{1}{8 \sqrt{2 \pi}}\sqrt{T}$.
\end{proof}

\section{Proof of \texorpdfstring{$T^{2/3}$}{T\^{}(2/3)} Lower Bound Under Realistic Feedback (iv+bd) \tccheck}
\label{s:proof-t-two-thrid-lower-bound-appe}

In this section we give a detailed proof of our $T^{2/3}$ lower bound of \cref{sec:candidate}
which hinges in a non-trivial way on our Embedding and Simulation lemmas (\cref{l:embedding,l:simulation}).
We denote Bernoulli distributions with parameter $\lambda$ by $\ber_\lambda$. 

%%% To restate the theorem
\begin{theorem}[\cref{thm:lower-real-iv+bd}, restated]
In the realistic-feedback stochastic (iid) setting with independent valuations (iv) and densities bounded by a constant $M\ge 24$ (bd), for all horizons $T \in \N$, the minimax regret satisfies
\[
    \Rs_T \ge \frac{11}{672} T^{2/3} \;.
\]
\end{theorem}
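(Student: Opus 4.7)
The plan is to lower-bound the minimax regret by a chain of reductions culminating in a known $\Omega(T^{2/3})$ bound for the classical \emph{revealing-action} partial monitoring game, using the Embedding and Simulation lemmas (\cref{l:embedding,l:simulation}) in the same spirit as the proof of \cref{thm:lower-full} in Appendix~\ref{s:lower-full}. Concretely, I would first restrict the adversary's behavior to the one-parameter family $\sP_1 := \bcb{ \bmu_{\pm\e} : \e \in [0,1] }$ generated by the densities $f_{S,\pm\e}$ and $f_B$ of \cref{f:t-two-third-lower-bound-a}; a direct check (using the normalizing constant $\theta = \nicefrac{1}{48}$) shows these densities are bounded by a constant at most $24 \le M$, so $\sP_1 \s \sPivbd$, and the Embedding lemma applied with $\slf,\slg$ identities and $\slh$ the inclusion yields an easier game whose minimax regret lower-bounds the original. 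Using the Decomposition lemma \eqref{e:decomp-three} I would then compute $\E\bsb{ \gft(p,S,B) }$ explicitly on each of the nine relevant sub-intervals of $[0,1]$, obtaining the piecewise-quadratic picture of \cref{f:t-two-third-lower-bound-b} and verifying the three numerical facts underlying the sketch: (i) the optimum lies in region $a_2$ (resp.\ $a_3$) under $+\e$ (resp.\ $-\e$); (ii) any price in the ``wrong'' region suffers regret $\Theta(\e)$; (iii) any price in $a_1$ suffers regret $\Omega(1)$.

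Next, I would peel away the uninformative components of the realistic feedback via repeated applications of the Simulation lemma. Since $f_B$ does not depend on $\e$, the bit $\I\{P_t \le B_t\}$ has a scenario-independent law and can be simulated by the player from her own randomization. More importantly, by direct integration one sees that the total $f_{S,\pm\e}$-mass of $\lsb{0,\, \nicefrac{1}{6}+\theta}$ equals $\nicefrac{1}{2}$ in both the $+\e$ and $-\e$ scenarios, so for any price $p \ge \nicefrac{1}{6}+\theta$ the bit $\I\{S_t \le p\}$ is also scenario-independent and can be simulated: only prices in $a_1$ carry information about the sign of $\pm\e$. Then, via a second application of the Embedding lemma, I would collapse the action space by noting that the expected reward is essentially constant on each of $a_1,a_2,a_3$, so restricting the learner to one representative price per region loses nothing beyond an $O(\e)$ additive term and yields an easier three-action game.

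What remains is a three-action stochastic partial monitoring game whose reward and feedback structure coincide, up to an affine rescaling, with the classical revealing-action problem of \citep{cesa2006regret}, whose minimax regret is known to be at least $\frac{11}{96}\brb{ \frac{1}{7} T^{2/3} } = \frac{11}{672}T^{2/3}$. A final application of the Embedding lemma, identifying the three representative prices with the three revealing-action arms and mapping each $\pm\e$-scenario to the corresponding Bernoulli-parameter scenario, transfers this bound back to the original game and yields the stated constant. The main technical obstacle will be verifying the Simulation lemma hypotheses \emph{on the nose}: I need exact (not approximate) equality in distribution between the true feedbacks and the simulated ones under both $\pm\e$, which requires carefully choosing the Skorokhod-type maps $\psi_{t,x}$ of \cref{l:simulation} and exploiting the piecewise-constant geometry of $f_{S,\pm\e}$ so that all pushforward laws match exactly; the rest of the argument is essentially the bookkeeping of reward shifts through the chain of reductions so that the final constants land as claimed.
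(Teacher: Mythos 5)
Your overall route is the same as the paper's (Appendix~\ref{s:proof-t-two-thrid-lower-bound-appe}): restrict the adversary to the $f_{S,\pm\e}$, $f_B$ family, simplify the game with the Embedding and Simulation lemmas, and finish by identifying the result with the revealing-action partial monitoring game of \citep{cesa2006regret}, whose $\frac{11}{96}\brb{\frac{1}{7}T^{2/3}}$ bound gives the constant $\frac{11}{672}$. However, there is a concrete gap at your final embedding step. At any ``revealing'' price in $a_1$, the only information a single round can give is which block of the support the seller's valuation fell into; under $\cD_{\pm\e}$ the seller lands in the $(1\pm\e)$-weighted block with probability $\frac{1\pm\e}{4}$, in the other $\e$-weighted block with probability $\frac{1\mp\e}{4}$, and in one of the two $\e$-independent blocks with probability $\frac12$. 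So the feedback at the revealing arm is (at best) a multi-outcome observation that is uninformative half the time, e.g.\ a $\mathrm{Ber}_{(1+\e)/4}$ bit --- not the $\mathrm{Ber}_{(1+\e)/2}$ bit the reference revealing-action game provides. Condition~3 of the Embedding lemma requires \emph{exact} equality of feedback laws as a $\slg$-pushforward, and no map $\slg\colon\{0,1\}\to\{0,1\}$ pushes $\mathrm{Ber}_{(1+\e)/2}$ forward to $\mathrm{Ber}_{(1+\e)/4}$, so the identification ``up to affine rescaling'' you invoke does not go through as stated. The paper closes exactly this hole in its Steps~5--6: by an Embedding that only \emph{adds} information, each adversary action is replaced by an i.i.d.\ sequence of copies and the learner is handed the bit $\I\{\eta(s_\tau)=0\}$, where $\tau$ is the first index at which the seller falls in one of the two informative blocks; this bit is exactly $\mathrm{Ber}_{(1+\e)/2}$, after which the final embedding onto the matrices $\rho_7,\fhi_7$ matches distributions on the nose and yields the stated constant. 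Your plan needs this amplification device (or an equivalent one), or else a separately proved lower bound for the ``diluted-feedback'' revealing-action game, which is not what \citep{cesa2006regret} supplies and would change the constant.

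A secondary, fixable issue: collapsing the action space to one representative price per region ``losing nothing beyond an $O(\e)$ additive term'' is not something the Embedding lemma licenses --- it requires exact reward dominance, scenario by scenario. The paper arranges this in its Step~2 by redefining the reward so that every price in a region is credited with the gain from trade of that region's representative ($\nicefrac16+\tht$, $\nicefrac14+\tht$, $\nicefrac23+\tht$), which can only help the learner and hence produces an easier game exactly; you should do the same rather than argue approximately. Your other deviations (simulating away the buyer bit, and treating prices above $\nicefrac16+\tht$ rather than above $\nicefrac14$ as uninformative) are harmless details.
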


\begin{proof}
Fix an arbitrary horizon $T\in \N$ and any $M \ge 24$. 
Recalling Appendix~\ref{s:biltrad-setting-appe}, the realistic-feedback stochastic (iid) setting with independent valuations (iv) and densities bounded (bd) by $M$ is a game $\sG := (\cX, \cY, \cZ, \rho, \varphi, \sP)$, where $\cX = [0,1]$, $\cY = [0,1]^2$, $\cZ = \{0,1\}^2$, $\rho = \gft$, $\fhi\colon \brb{p, (s,b)} \mapsto \brb{ \I\{s\le p\},\,\I\{p\le b\} }$, and $\sP = \sPivbd$.
The idea of the proof is to build a sequence of games, each one easier than the former, the last of which has a known lower bound on its minimax regret.
In the first step we limit the adversary's behavior to a parametric family which is easily manageable and well-represents the difficulty of the problem (see \cref{f:t-two-third-lower-bound}).
In the second step, we increase the reward of suboptimal actions in order to have only three possible expected-reward values in each scenario.
In the third and fifth steps we increase the feedback, presenting it in a way that highlights that only its first component is informative.
In step four and six, we simulate-away the uninformative parts of the feedback.
Finally, in step 7 we show that the resulting game is harder than a known partial monitoring game with minimax regret of order at least $T^{2/3}$.

\paragraph{Step 1.}
Let $\tht := \nicefrac{1}{48}$. 
Define the following densities of the seller and buyer, respectively, by
\begin{align*}
    f_{S,\e}
&
:=
    \frac{1}{4\tht} \lrb{
    (1+\e) \I_{[0,\tht]}
+ 
    (1-\e) \I_{\lsb{ \frac{1}{6}, \frac{1}{6} + \tht }} 
+
    \I_{\lsb{ \frac{1}{4}, \frac{1}{4} + \tht }}
+
    \I_{\lsb{ \frac{2}{3}, \frac{2}{3} + \tht }}
    }\;,
    \forall \e \in [-1,1] \;,
    \tag{\text{red/blue in \cref{f:t-two-third-lower-bound}}}
\\
    f_B
&
:=
    \frac{1}{4\tht} \lrb{
    \I_{\lsb{ \frac{1}{3}-\tht,\,\frac{1}{3} }}
+ 
    \I_{\lsb{ \frac{3}{4} - \tht,\, \frac{3}{4} }} 
+
    \I_{\lsb{ \frac{5}{6} - \tht,\, \frac{5}{6} }}
+
    \I_{\lsb{ 1-\tht,\, 1 }}
    } \;.
    \tag{\text{green in \cref{f:t-two-third-lower-bound}}}
\end{align*}
Define $\fP_1$ as the subset of $\fP$ whose elements have the form $\bmu_\e := \otimes_{t \in \N} (f_{S,\e}\mu_L \otimes f_B\mu_L)$ for $\e \in [-1,1]$.
Since $\sP_1 \subset \sP$, the game $\sG_1 := (\cX, \cY, \cZ, \rho, \varphi, \sP_1)$ is easier than $\sG$ (i.e., $\Rs_T(\fG) \ge \Rs_T(\fG_1)$) by the Embedding lemma (\cref{l:embedding}) with $\slf$ and $\slg$ as the identities, and $\slh$ as the inclusion.
\paragraph{Step 2.}
Define 
$\rho_2\colon \cX \times \cY \to [0,1]$, 
$\brb{ p, (s,b) } \mapsto 
    \gft\brb{ \frac{1}{6} + \tht, (s,b) } \I\bcb{ p < \frac{1}{4} } 
    + \gft\brb{ \frac{1}{4} + \tht, (s,b) } \I\bcb{ \frac{1}{4} \le p < \frac{1}{3} } 
    + \gft\brb{ \frac{2}{3} + \tht, (s,b) } \I\bcb{ \frac{1}{3} < p  }
$.
By the Embedding lemma with $\slf$, $\slg$, and $\slh$ as the identities, we have that the game $\sG_2 := (\cX, \cY, \cZ, \rho_2, \varphi, \sP_1)$ is easier than $\sG_1$ (i.e., $\Rs_T(\fG_1) \ge \Rs_T(\fG_2)$).

\paragraph{Step 3.}
Define $\cZ_3 := \bcb{0, \frac{1}{6}, \frac{1}{4}, \frac{2}{3}} \times [0,\tht] \times \{0,1\} \times \{0,1\} \times \cX$ and
$\fhi_3 \colon \cX \times \cY \to \cZ_3$, 
\[
    \brb{ p, (s,b) }
    \mapsto
    \begin{cases}
        \brb{ \eta(s), s-\eta(s), 0, \I\{p \le b\}, p } \;,
    & 
        \text{ if } p < \frac{1}{4} \;,
    \\
        \brb{ 0, 0, \I\{s \le p\}, \I\{p \le b\}, p } \;,
    & 
        \text{ if } p \ge \frac{1}{4} \;,
    \end{cases}
\]
where $\eta \colon [0,1] \to \bcb{0, \frac{1}{6}, \frac{1}{4}, \frac{2}{3}}$,
$
    s 
    \mapsto 
    \frac{1}{6} \I \bcb{ \frac{1}{6} \le s \le \frac{1}{6} +\tht }
    +
    \frac{1}{4} \I \bcb{ \frac{1}{4} \le s \le \frac{1}{4} +\tht }
    +
    \frac{2}{3} \I \bcb{ \frac{2}{3} \le s \le \frac{2}{3} +\tht }
$.
Define the game $\sG_3 := (\cX, \cY, \cZ_3, \rho_2, \fhi_3, \sP_1 )$.
By the Embedding lemma with $\slf,\slh$ as the identities and 
\[
    \slg \colon \cZ_3 \to \cZ\;,
    \quad
    (v,u,i,j,p)
    \mapsto
    \begin{cases}
    \brb{ \I\{v+u\le p\}, j }
    &
        \text{ if } p < \frac{1}{4} \;,
    \\
        (i,j) \;,
    & 
        \text{ if } p \ge \frac{1}{4} \;,
    \end{cases}
\]
we have that the game $\sG_3$ is easier than $\sG_2$ (i.e., $\Rs_T(\fG_2) \ge \Rs_T(\fG_3)$).

\paragraph{Step 4.}
Let $\cZ_4 := \bcb{0, \frac{1}{6}, \frac{1}{4}, \frac{2}{3}}$
and
$
    \fhi_4 \colon \cX \times \cY \to \cZ_4
$, 
$
    \brb{ p, (s,b) }
    \mapsto
    \eta(s) \I \lcb{ p < \frac{1}{4} } 
$.
Define the game $\sG_4 := (\cX, \cY, \cZ_4, \rho_2, \fhi_4, \sP_1)$.
Let $(Y_t)_{t\in \N} = (S_t,B_t)_{t\in \N}$ be the adversary's actions in $\sG_4$,
$E := \bsb{0, \tht} \cup \bsb{\frac{1}{6}, \frac{1}{6} + \tht} \cup \bsb{\frac{1}{4}, \frac{1}{4} + \tht} \cup \bsb{\frac{2}{3}, \frac{2}{3} + \tht}$ and $F := \bsb{\frac{1}{3} - \tht, \frac{1}{3}} \cup \bsb{\frac{3}{4} - \tht, \frac{3}{4}} \cup \bsb{\frac{5}{6} - \tht, \frac{5}{6}} \cup \bsb{1 - \tht, 1}$. 
A long and tedious computation verifies that for all $t\in \N$, 
\begin{itemize}
    \item For each $p\in [0,\nicefrac{1}{4})$ and any scenario $\P$ of game $\sG_3$, $\P_{\fhi_3 (p, Y_t)} = \P_{ \eta(S_t) } \otimes (\nu \otimes \delta_0 \otimes \ber_{\lambda_{F,p}} \otimes \delta_p)$, where $\nu$ is the uniform distribution on $[0,\tht]$ and $ \lambda_{F,p} := \frac{1}{4\tht} \mu_L \bsb{ [p,1] \cap F } $.
    By the well-known Skorokhod representation \cite[Section 17.3]{williams1991probability}, there exists $\psi_p \colon [0,1] \to [0,\tht] \times \{0,1\} \times \{0,1\} \times \cX$ such that $\nu \otimes \delta_0 \otimes \ber_{\lambda_{F,p}} \otimes \delta_p = (\mu_L)_{\psi_p}$.
    \item For each $p\in [\nicefrac{1}{4},1]$ and any scenario $\P$ of game $\sG_3$, $\P_{\fhi_3 (p, Y_t)} = \delta_0 \otimes \delta_0 \otimes \ber_{\lambda_{E,p}} \otimes \ber_{\lambda_{F,p}} \otimes \delta_p$, where $\lambda_{E,p} := \frac{1}{4\tht} \mu_L \bsb{ [0,p] \cap E }$ and $ \lambda_{F,p} := \frac{1}{4\tht} \mu_L \bsb{ [p,1] \cap F } $.
    By the Skorokhod representation, there exists $\gamma_p \colon [0,1] \to \cZ_3$ such that $\delta_0 \otimes \delta_0 \otimes \ber_{\lambda_{E,p}} \otimes \ber_{\lambda_{F,p}} \otimes \delta_p = (\mu_L)_{\gamma_p}$.
\end{itemize}
Thus, by the Simulation lemma (\cref{l:simulation}) with $\cR = [0, \nicefrac{1}{4})$ and $\cU = [\nicefrac{1}{4}, 1]$, the game $\fG_4$ is easier than $\fG_3$ (i.e., $\Rs_T(\fG_3) \ge \Rs_T(\fG_4)$).

\paragraph{Step 5.}
Let 
$\cY_5 := \cY^\N$, 
$\cZ_5 := \{0,1\} \times \brb{ \N \cup \{\iop\} } \times \{0,1\} \times \cX$, 
$\rho_5 \colon \cX \times \cY_5 \to [0,1]$, 
$\brb{ p, (s_k, b_k)_{k\in\N} } \mapsto \rho_2 (p,s_1,b_1)$,
\[
    \fhi_5 \colon \cX \times \cY_5 \to \cZ_5\;,
    \quad
    \brb{ p, (s_k, b_k)_{k\in\N} } 
    \mapsto 
    \begin{cases}
        \Brb{ \I \bcb{ \eta(s_\tau) = 0 }, \tau, \I \bcb{ \eta(s_1) = \frac{1}{4} }, p } \;,
    &
        \text{ if } p \in \bigl[ 0, \frac{1}{4} \bigr) \;,
    \\
        (0,1,0,p) \;,
    &
        \text{ if } p \in \bigl[ \frac{1}{4}, 1 \bigr]\;,
    \end{cases}
\]
where $\eta$ is defined in game $\sG_3$, $\tau := \inf \bcb{ k\in \N \mid \eta( s_k ) \in \{0, \nicefrac{1}{6} \} } \in \N \cup \{\iop\}$, and $s_\iop := 0$.
Let $\sP_5$ be the set of measures on $\cY_5^\N$ of the form $\tilde \bmu_\e := \otimes_{t\in \N} \brb{ \otimes_{k\in\N} ( f_{S,\e} \mu_L \otimes f_B \mu_L ) }$ for $\e \in [-1,1]$, and define the game $\sG_5 := ( \cX, \cY_5, \cZ_5, \rho_5, \fhi_5, \sP_5 )$.
By the Embedding lemma with $\slf$ as the identity,
\[
    \slg \colon \cZ_5 \to \cZ_4 \;,
    \quad
    (z,k,j,p)
    \mapsto
    \frac{1}{6}(1-z) \I \lcb{ p < \frac{1}{4}, k = 1 }
    + \lrb{ \frac{1}{4}j + \frac{2}{3} (1-j) } \I \lcb{ p < \frac{1}{4}, k > 1 } \;,
\]
and $\slh \colon \tilde{ \bmu }_\e \otimes \bmu_L \mapsto \bmu_\e \otimes \bmu_L$, 
we have that the game $\sG_5$ is easier than $\sG_4$ (i.e., $\Rs_T(\fG_4) \ge \Rs_T(\fG_5)$).

\paragraph{Step 6.}
Now, define $\pi\colon \cZ_5 \to \{0,1\}$ as the projection on the first component $\{0,1\}$ of $\cZ_5$, $\cZ_6 := \{0,1\}$, $\fhi_6 := \pi \circ \fhi_5$, and the game $\sG_6 := ( \cX, \cY_5, \cZ_6, \rho_5, \fhi_6, \sP_5)$.
Let $(\tilde Y_t)_{t\in \N}$ be the adversary's actions in $\sG_5$. 
A straightforward verification shows that for all $t\in \N$, 
\begin{itemize}
    \item For each $p\in [0,\nicefrac{1}{4})$ and any scenario $\P$ of game $\sG_5$, $\P_{\fhi_5 (p, \tilde Y_t)} = \P_{ \pi \brb{ \fhi_5 (p, \tilde Y_t) } } \otimes (\nu \otimes \delta_p)$, where $\nu$ is the unique distribution on $\brb{ \N \cup \{\iop\} } \times \{0,1\}$ such that, for all $k \in\N\cup\{\iop\}$, $j\in \{0,1\}$, $\nu \bsb{ \{ (k,j) \} } = \frac{1}{2} \I \{ k=1, j=0\} + \frac{1}{2^{k+1}} \I \{1<k<\iop\}$.
    Using again the Skorokhod representation, there exists $\psi_p \colon [0,1] \to \brb{ \N \cup \{\iop\} } \times \{0,1\} \times [0,1]$ such that $\nu \otimes \delta_p = (\mu_L)_{\psi_p}$.
    \item For each $p\in [\nicefrac{1}{4},1]$ and any scenario $\P$ of game $\sG_5$, $\P_{\fhi_5 (p, \tilde Y_t)} = \delta_{(0,1,0,p)} = (\mu_L)_{\gamma_p}$, where $\gamma_p \colon [0,1] \to \cZ_5$, $\lambda \mapsto (0,1,0,p)$.
\end{itemize}
Thus, by the Simulation lemma with $\cR = [0, \nicefrac{1}{4})$ and $\cU = [\nicefrac{1}{4}, 1]$, the game $\fG_6$ is easier than $\fG_5$ (i.e., $\Rs_T(\fG_5) \ge \Rs_T(\fG_6)$).

\paragraph{Step 7.}

Finally, consider the game 
$\sG_7 := \brb{ \{1,2,3\}, \{1,2\}, \{0,1\}, \rho_7, \fhi_7, \sP_7 } $, where in matrix notation, $\rho_7 = \bsb{ \rho(i,j) }_{i \in \{1,2,3\}, j \in \{1,2\}}$ and $\fhi_7 = \bsb{ \fhi(i,j) }_{i \in \{1,2,3\}, j \in \{1,2\}}$ are given by
\[
\rho_7:= \frac{1}{96}\begin{bmatrix}
34 & 34   \\
45 & 37   \\
38 & 44
\end{bmatrix} \;,
\qquad
\fhi_7:=\begin{bmatrix}
1 & 0   \\
0 & 0   \\
0 & 0
\end{bmatrix} \;,
\]
and $\sP_7$ is the set of all measures of the form $\otimes_{t\in \N} \brb{ \frac{1+\e}{2}\delta_1 + \frac{1-\e}{2} \delta_2 }$, for $\e \in [-1,1]$.
Thus, using again the Embedding lemma, this time with $\slf\colon [0,1] \to \{1,2,3\}$, $p\mapsto \I\{p < \nicefrac{1}{4}\} + 2 \I\{ \nicefrac{1}{4} \le p \le \nicefrac{1}{3}\} + 3 \I \{ \nicefrac{1}{3} < p \}$, $\slg\colon \{0,1\} \to \{0,1\}$, $i\mapsto i$, and $\slh \colon \otimes_{t\in \N} \brb{ \frac{1+\e}{2}\delta_1 + \frac{1-\e}{2} \delta_2 } \otimes \bmu_L \mapsto \tilde \bmu_\e \otimes \bmu_L$, we obtain that $\fG_7$ is easier than $\sG_6$ (i.e., $\Rs_T(\fG_6) \ge \Rs_T(\fG_7)$).
This last game is an instance of the so-called revealing action partial monitoring game, whose minimax regret is known to be lower bounded by $\frac{11}{96} \brb{ \frac{1}{7} T^{2/3} }$ \citep{cesa2006regret}.
In conclusion, we proved that $\Rs_T(\sG)\ge \Rs_T(\sG_7) \ge \frac{11}{672} T^{2/3}$.
\end{proof}

\section{Linear Lower Bound Under Realistic Feedback (bd) \tccheck}
\label{s:lower-bd-appe}

In this section, we prove that in the realistic-feedback case, no strategy can achieve sublinear regret in the worst case if the valuations of the buyer and the seller may be dependent, not even if they have a bounded density.

The idea of the proof is to exploit the lack of observability in this setting, building a family of scenarios $\P^\lambda$ (parameterized by $\lambda \in [0,1]$) as convex combinations of the two measures in \cref{f:linear-lower-bound-lip}.
If $\lambda < \nicefrac{1}{2}$, the optimal action is $\nicefrac{3}{8}$, while if $\lambda > \nicefrac{1}{2}$, the optimal action becomes $\nicefrac{5}{8}$.
This family is built is such a way that the feedback gives no information on $\lambda$, making it impossible to distinguish between the two cases.
Leveraging the Embedding and Simulation lemmas (\cref{l:embedding,l:simulation}), this construction leads to a reduction to an instance of a non-observable partial monitoring game, whose regret is trivially lower bounded by $T/24$.

%%%% To restate the Theorem
\begin{theorem}[\cref{thm:lower-real-bd}, restated]
In the realistic-feedback stochastic (iid) setting with joint density bounded by a constant $M\ge \nicefrac{64}{3}$ (bd), for all horizons $T \in \N$, the minimax regret satisfies
\[
    \Rs_T \ge \frac{1}{24} T \;.
\]
\end{theorem}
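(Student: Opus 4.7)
The plan is to follow the structure of the proof of \cref{thm:lower-real-iv+bd}: construct a pair of scenarios whose realistic feedback is distributionally identical at every price, and use our Simulation lemma to reduce the game to one in which the learner is effectively blind. Recalling Appendix~\ref{s:biltrad-setting-appe}, the realistic-feedback stochastic (bd) setting is the game $\sG := ([0,1], [0,1]^2, \{0,1\}^2, \GFT, \fhi, \sPbd)$. I first introduce the two densities $f := \tfrac{64}{3}(\I_{R_1} + \I_{R_2} + \I_{R_3})$ with $R_1 := [0,\tfrac{1}{8}]\times[\tfrac{3}{8},\tfrac{1}{2}]$, $R_2 := [\tfrac{1}{4},\tfrac{3}{8}]\times[\tfrac{7}{8},1]$, $R_3 := [\tfrac{1}{2},\tfrac{5}{8}]\times[\tfrac{5}{8},\tfrac{3}{4}]$, and $g(s,b) := f(1-b,1-s)$. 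Both are bounded by $\tfrac{64}{3} \le M$, so, letting $\mu := \mu_L \otimes \mu_L$, the product scenarios $\bmu_f := \otimes_{t\in\N}(f\mu)$ and $\bmu_g := \otimes_{t\in\N}(g\mu)$ belong to $\sPbd$. Applying the Embedding lemma (\cref{l:embedding}) with $\slf, \slg$ the identities and $\slh$ the inclusion yields an easier game $\sG_1 := ([0,1], [0,1]^2, \{0,1\}^2, \GFT, \fhi, \{\bmu_f,\bmu_g\})$.

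The key technical step is the \emph{observability} claim: for every price $p \in [0,1]$, the distribution of the realistic feedback $(\I\{S\le p\}, \I\{p \le B\})$ on $\{0,1\}^2$ coincides under $f$ and $g$. Since each outcome $(i,j)$ corresponds to one of the four rectangles $Q_{ij}(p)$ obtained by cutting $[0,1]^2$ along the lines $s = p$ and $b = p$, this is equivalent to $\int_{Q_{ij}(p)} f = \int_{Q_{ij}(p)} g$ for each $(i,j) \in \{0,1\}^2$ and every $p$. Using the involution $\sigma\colon (s,b) \mapsto (1-b,1-s)$, which sends $Q_{ij}(p)$ to $Q_{1-j,\,1-i}(1-p)$ and which by definition pushes $f$ forward to $g$, the claim further reduces to checking $\int_{Q_{ij}(p)} f = \int_{Q_{1-j,1-i}(1-p)} f$ uniformly in $p$---an elementary case analysis over intersections of the three squares $R_1, R_2, R_3$ with axis-aligned rectangles. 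Once observability is verified, I apply the Simulation lemma (\cref{l:simulation}) with $\cR = \varnothing$, $\cU = [0,1]$, and $\cV = \{*\}$: the common feedback law at each action $p$ can be realized as $(\mu_L)_{\gamma_p}$ for a measurable $\gamma_p \colon [0,1] \to \{0,1\}^2$ (by the Skorokhod representation), and this collapses $\sG_1$ to a game $\sG_2$ with trivial singleton feedback.

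In $\sG_2$ the feedback is uninformative, so any strategy reduces to an oblivious sequence of price distributions $(\nu_t)_{t\in[T]}$ on $[0,1]$ depending only on internal randomization. A direct computation (matching \cref{f:linear-lower-bound-lip}) shows $\max_p \E_f[\GFT(p,S,B)] = \E_f[\GFT(3/8, S, B)] = 1/3$ while $\E_f[\GFT(p, S, B)] \le 1/4$ for every $p > 1/2$; by the $\sigma$-symmetry, the analogous bound holds under $g$ with optimal price $5/8$ and critical region $[0, 1/2]$. Hence the regret under $\bmu_f$ is at least $\tfrac{1}{12}\sum_t \nu_t((1/2,1])$ and under $\bmu_g$ at least $\tfrac{1}{12}\sum_t \nu_t([0,1/2])$; since these two sets partition $[0,1]$, the two lower bounds sum to $T/12$, so the worst-case regret is at least $T/24$, yielding the claim. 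The main obstacle is the uniform-in-$p$ verification of the observability identity; this is a finite but somewhat tedious piecewise computation, while the rest of the argument is a clean transcription of the Embedding/Simulation framework of Appendix~\ref{s:proof-t-two-thrid-lower-bound-appe}.
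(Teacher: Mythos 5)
Your proposal is correct and follows essentially the same route as the paper's proof: the same pair of densities $f$ and $g$, the same observability claim that the realistic feedback law coincides under both at every price, and the same use of the Embedding and Simulation lemmas to pass to a game with trivial feedback. The only differences are cosmetic: you keep just the two scenarios $\bmu_f,\bmu_g$ instead of the paper's convex family $\bmu_\lambda$, and you finish with a direct averaging argument (regret under $f$ plus regret under $g$ is at least $T/12$) rather than embedding into the trivial $2\times 2$ partial monitoring game, which is exactly the computation hiding behind the paper's ``trivially'' bounded last step.
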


\begin{proof}
Fix any horizon $T \in \N$ and $M\ge \nicefrac{64}{3}$.
Recalling Appendix~\ref{s:biltrad-setting-appe}, the realistic-feedback stochastic (iid) setting with joint density bounded by $M$ (bd) is a game $\sG := (\cX, \cY, \cZ, \rho, \varphi, \sP)$, where $\cX = [0,1]$, $\cY = [0,1]^2$, $\cZ = \{0,1\}^2$, $\rho = \gft$, $\fhi\colon \brb{p, (s,b)} \mapsto \brb{ \I\{s\le p\},\,\I\{p\le b\} }$, and $\sP = \sPbd$.
Define the two joint densities
$
    f 
= 
    \frac{64}{3} 
    \brb{ 
        \I_{[\nicefrac{0}{8}, \nicefrac{1}{8}]\times[\nicefrac{3}{8}, \nicefrac{4}{8}]} 
    +
        \I_{[\nicefrac{2}{8}, \nicefrac{3}{8}]\times[\nicefrac{7}{8}, \nicefrac{8}{8}]} 
    +
        \I_{[\nicefrac{4}{8}, \nicefrac{5}{8}]\times[\nicefrac{5}{8}, \nicefrac{6}{8}]} 
    }
$
and $g\colon [0,1]^2\to [0,M]$, $(s,b) \mapsto f(1-b,1-s)$ (see~\cref{f:linear-lower-bound-lip}, left).
Let $\fP_1$ be the subset of $\sPbd$ whose elements have the form $\bmu_\lambda := \otimes_{t \in \N} \brb{ \brb{ (1-\lambda)f + \lambda g }  (\mu_L\otimes\mu_L) }$ for $\lambda \in [0,1]$.
Since $\fP_1 \subset \fP$ the game $\fG_1 := (\cX, \cY, \cZ, \rho, \fhi, \fP_1)$ is easier than $\sG$ (i.e., $\Rs_T(\fG) \ge \Rs_T(\fG_1)$) by the Embedding lemma (\cref{l:embedding}) with $\slf$ and $\slg$ as the identities, and $\slh$ as the inclusion.
Define $\cZ_1 := \{0\}$ and $\fhi_1 \colon \cX \times \cY \to \cZ_1 \ , \brb{p,(s,b)} \mapsto 0$.
Let $(Y_t)_{t\in \N}$ be the adversary's actions in $\cG_1$.
Now, since for all $t\in \N$, any two scenarios $\P$ and $\Q$ of game $\sG_1$, and each $p \in [0,1]$,  $\P_{\fhi(p,Y_t)} = \Q_{\fhi(p,Y_t)}$, 
then by the well-known Skorokhod representation \cite[Section 17.3]{williams1991probability}, for each $t\in \N$ and each $p \in [0,1]$ there exists $\gamma_{t,p} \colon [0,1] \to \{0,1\}^2$ such that for any scenario $\P$ of game $\sG_1$, $\P_{\varphi(x,Y_t)} = (\mu_L)_{\gamma_{t,x}}$.
Thus, the Simulation lemma (\cref{l:simulation}) with $\cR = \varnothing$ and $\cU = \cX$ implies that the game $\fG_2 := (\cX, \cY, \cZ_2, \rho, \fhi_2, \fP_1)$ is easier than $\fG_1$ (i.e., $\Rs_T(\fG_1) \ge \Rs_T(\fG_2)$). 
Define $\rho_3 \colon \cX \times \cY \to [0,1] \ , \brb{p,(s,b)} \mapsto (b-s)\I\bcb{s \le \frac{3}{8} \le b}\I\lcb{p \le \frac{1}{2}} + (b-s)\I\lcb{s \le \frac{5}{8} \le b}\I\lcb{p > \frac{1}{2}}$ and $\fG_3 := (\cX, \cY, \cZ_2, \rho_3, \varphi_2, \fP_1)$. 
By the Embedding lemma with $\slf,\slg,\slh$ as the identities, we have that the game $\fG_3$ is easier than the game $\fG_2$ (i.e., $\Rs_T(\fG_2) \ge \Rs_T(\fG_3)$). 
Finally, consider the game 
$\sG_4 := \brb{ \{1,2\}, \{1,2\}, \{0\}, \rho_4, \fhi_4, \sP_4 } $, where in matrix notation, $\rho_4 = \bsb{ \rho(i,j) }_{i,j \in \{1,2\}}$ and $\fhi_4 = \bsb{ \fhi(i,j) }_{i,j \in \{1,2\}}$ are given by
\[
\rho_4:= \begin{bmatrix}
\nicefrac{1}{3} & \nicefrac{1}{4}   \\
\nicefrac{1}{4} & \nicefrac{1}{3} 
\end{bmatrix} \;,
\qquad
\fhi_4:=\begin{bmatrix}
0 & 0   \\
0 & 0 
\end{bmatrix} \;,
\]
and $\sP_4$ is the set of all measures of the form $(1-\lambda)\delta_1 + \lambda\delta_2$, for $\lambda \in [0,1]$.
Using again the Embedding lemma, this time with $\slf\colon [0,1] \to \{1,2\}$, $p\mapsto \I\{p \le \nicefrac{1}{2}\} + 2 \I\{ \nicefrac{1}{2} < p \}$, $\slg\colon \{0\} \to \{0\}$, $i\mapsto i$, and $\slh \colon \otimes_{t\in \N} \brb{ (1-\lambda)\delta_1 + \lambda \delta_2 } \otimes \bmu_L \mapsto 
\bmu_\lambda \otimes \bmu_L$, we obtain that $\fG_4$ is easier than $\sG_3$ (i.e., $\Rs_T(\fG_3) \ge \Rs_T(\fG_4)$).
This last game has (trivially) minimax regret at most $\brb{ \frac{1}{3}-\frac{1}{4} } \frac{T}{2}$.
In conclusion, we proved that $\Rs_T(\sG) \ge \Rs_T(\sG_4) \ge \frac{1}{24} T$.
\end{proof}

\section{Linear Lower Bound Under Realistic Feedback (iv) \tccheck}
\label{sec:linear_real-appe}

In this section, we prove that in the realistic-feedback case, no strategy can achieve sublinear regret without any limitations on how concentrated the distributions of the valuations of the seller and buyer are, not even if they are independent of each other (iv).

The idea of the proof is that if the two distributions are very concentrated in a small region, finding an optimal price is like finding a needle in a haystack.
Each strategy that (at each time step) receives as feedback only a finite number of bits, as in our realistic setting, can assign positive probability to at most a countable set of points.
Thus one could find concentrated distributions of the buyer and seller that have a unique optimal point in which the strategy has zero probability of posting prices at all time steps, and such that \emph{all} other prices suffer large regret.

\begin{theorem}[\cref{thm:lower-real-iv}, restated]
    In the realistic-feedback stochastic (iid) setting  with independent valuations (iv), for all horizons $T \in \N$, the minimax regret satisfies
\[
    \Rs_T \ge \frac{1}{8} T \;.
\]
\end{theorem}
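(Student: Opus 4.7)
The plan is to turn the construction of the sketch into a rigorous lower bound by means of a coupling argument. Fix a small parameter $\delta > 0$ and work with the interval $I := \brb{\tfrac{1}{2}-\delta, \tfrac{1}{2}+\delta}$; for every $x \in I$, introduce the scenario $\P^x$ under which $(S_t, B_t)_{t \in \N}$ is i.i.d., $S_t$ is uniform on $\{0, x\}$, $B_t$ is uniform on $\{x, 1\}$, and $S_t, B_t$ are independent. A direct four-case calculation on $(S_1, B_1)$ yields
\[
    \E^x \bsb{ \GFT(p, S_1, B_1) }
=
    \tfrac{1}{2} \, \I\{p = x\}
    + \tfrac{1+x}{4} \, \I\{p < x\}
    + \tfrac{2-x}{4} \, \I\{p > x\} \;,
\]
so $\ps = x$ and, for every $p \ne x$, the per-round regret is at least $\tfrac{1}{2} - \max\!\brb{\tfrac{1+x}{4}, \tfrac{2-x}{4}} \ge \tfrac{1}{8} - \tfrac{\delta}{4}$. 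Hence the theorem reduces to showing that for every strategy $\alpha$ there exists some $x \in I$ with $\P^x \bsb{ \exists t \le T : P_t = x } = 0$, and then letting $\delta \downarrow 0$.

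To produce such an $x$, I would couple all the scenarios $\P^x$ on a common probability space $(\Omega^*, \mu^*)$: introduce two independent $\ber(1/2)$ sequences $(\sigma_t)_{t \in \N}$ and $(\beta_t)_{t \in \N}$, set $S_t^x := x \sigma_t$ and $B_t^x := x + (1-x) \beta_t$, and retain the learner's internal randomization $U_1, \ldots, U_T$ independent of everything else. Under $\mu^*$ the coupled variables $(S_t^x, B_t^x)_t$ have exactly the law prescribed by $\P^x$, and, given a realization of $(U, \sigma, \beta)$, the prices $(P_1^x, \ldots, P_T^x)$ and the feedback $(Z_1^x, \ldots, Z_T^x)$ become deterministic functions of $x$. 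Since the feedback sequence lives in the finite set $\{0, 1\}^{2T}$, the map $x \mapsto (Z_1^x, \ldots, Z_T^x)$ partitions $I$ into at most $2^{2T}$ measurable pieces; on each piece, both the feedback and the posted prices $P_1^x, \ldots, P_T^x$ are $x$-constants in $[0, 1]$. Consequently the ``bad set''
\[
    \cB(U, \sigma, \beta)
:=
    \bcb{ x \in I \,:\, P_t^x = x \text{ for some } t \le T }
\]
contains at most $T \cdot 2^{2T}$ points and hence has Lebesgue measure zero. By Fubini's theorem,
\[
    \int_I \mu^* \bsb{ \exists t \le T : P_t^x = x } \dif x
=
    \E^* \bsb{ \operatorname{Leb} \brb{ \cB(U, \sigma, \beta) } }
=
    0 \;,
\]
so $\mu^* \bsb{ \exists t \le T : P_t^x = x } = 0$ for Lebesgue-almost every $x \in I$, and any such $x$ will do.

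The main obstacle I anticipate is the inductive bookkeeping needed to certify the ``piecewise constant in $x$'' structure: the jump locations of the $t$-th feedback bit are themselves previously posted prices $P_1^x, \ldots, P_{t-1}^x$, which depend on $x$ through the earlier feedback. A clean way to handle this is to proceed by induction on $t$, simultaneously showing that (i) the partial feedback sequence $(Z_1^x, \ldots, Z_t^x)$ is piecewise constant in $x$ on $I$ and (ii) on each of its level sets the entire prefix $(P_1^x, \ldots, P_{t+1}^x)$ is $x$-constant. All remaining steps are standard measurability checks and the aforementioned sup-over-$x$ vs.\ integral-over-$x$ comparison to conclude.
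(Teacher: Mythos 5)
Your proposal is correct, and it reaches the same $\brb{\tfrac18-\tfrac{\delta}{4}}T$ per-round accounting via the same family of two-point distributions, but the crucial "the learner can never hit $x$" step is argued by a genuinely different route. The paper fixes the strategy and observes that, for each time $t$ and each of the finitely many feedback strings $z_{1:t}$, the conditional price map $u_{1:t+1}\mapsto \alpha_{t+1}(u_{1:t+1},z_{1:t})$ pushes the uniform randomization to a measure with at most countably many atoms; the union of all these atom sets over $t$ and $z_{1:t}$ is countable, so one can pick a single $\xs$ in the interval that is posted with probability zero in \emph{every} scenario, and only then choose the distribution concentrated at $\{0,\xs\}\times\{\xs,1\}$. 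You instead fix the whole one-parameter family first, couple all scenarios through common Bernoulli sources and common learner randomization, note that conditionally on the realization the prices are constant on each of the at most $4^T$ level sets of the feedback map $x\mapsto(Z_1^x,\dots,Z_T^x)$, so the bad set has at most $T\cdot 4^T$ points, and conclude by Fubini that almost every $x$ in the interval is never posted. Both arguments are sound; yours yields the slightly stronger "Lebesgue-almost every $x$ is hard" statement and makes explicit use of the finiteness of the feedback alphabet, at the price of setting up the coupling and checking joint measurability of $(x,\omega)\mapsto P_t^x(\omega)$, while the paper's countability argument needs no product-space machinery and would extend verbatim to any countable feedback alphabet. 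Two minor remarks: the inductive "piecewise constant in $x$" bookkeeping you worry about is unnecessary, since your own level-set observation already gives constancy of the price prefix on each level set of the full feedback string; and when converting $\E\bsb{\GFT_t(P_t)}$ into an expectation of the map $p\mapsto\E^x\bsb{\GFT(p,S_1,B_1)}$ evaluated at $P_t$, you should state explicitly that $P_t$ is independent of $(S_t,B_t)$ (it depends only on past valuations and the internal randomization), exactly as the paper does implicitly when averaging over the four equally likely valuation pairs.
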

\begin{proof}
To lighten the notation, for any $n \in \N$ and a family $(\lambda_k)_{k\in\N}$, we let $\lambda_{1:n} := (\lambda_1, \ldots, \lambda_{n})$.
Fix an arbitrary horizon $T \in \N$.
Recalling Appendix~\ref{s:biltrad-setting-appe}, the realistic-feedback stochastic (iid) setting with independent valuations (iv) is a game $\sG := (\cX, \cY, \cZ, \rho, \varphi, \sP)$, where $\cX = [0,1]$, $\cY = [0,1]^2$, $\cZ = \{0,1\}^2$, $\rho = \gft$, $\fhi\colon \brb{p, (s,b)} \mapsto \brb{ \I\{s\le p\},\,\I\{p\le b\} }$, and $\sP = \sPiv$.
Let $\sS$ be the set of scenarios of $\sG$.
Fix a strategy $\alpha$ for game $\sG$ and let $\e \in (0, 1)$.
Define $\bar \alpha_1 := \alpha_1$, $\nu_1 := (\mu_L)_{\bar \alpha _1}$, and for each $t \in \N$ and $z_1, \dots, z_{t} \in \{0,1\}^2$, 
\[
    \bar{\alpha}_{t+1,z_{1:t}} \colon [0,1]^{t+1} \to [0,1], 
    \quad
    u_{1:t+1} \mapsto \alpha_{t+1}(u_{1:t+1}, z_{1:t})
\qquad
\text{ and }
\qquad
    \nu_{t+1,z_{1:t}} 
:= 
    (\otimes_{s=1}^{t+1}\mu_L)_{\bar{\alpha}_{t+1,z_{1:t}}} \;.
\]
Define also the set $A_1 := \bcb{x \in \lsb{0,1} \mid \nu_{1}[\{x\}] > 0 }$ and, for each $t \in \N$, the union
$
    A_{t+1} 
:= 
    \bigcup_{z_{1:t}\in \{0,1\}^2} \bcb{x \in \lsb{0,1} \mid \nu_{t,z_{1:t}}[\{x\}] > 0 }
$.
Note that, for each $t\in \N, A_t$ is countable, being the union of $4^{t-1}$ countable sets.
Then $A := \bigcup_{t \in \N} A_t$ is countable.
Since $B:=[\frac{1-\e}{2}, \frac{1+\e}{2}]$ has the power of continuum, we have that the same holds for $B \m A$.
In particular, $B\m A$ is non-empty.
Pick $\xs \in B \backslash A$ and define $\mu_S := \frac{1}{2}\delta_{0}+\frac{1}{2}\delta_{\xs}$, 
$\mu_B := \frac{1}{2}\delta_{\xs} + \frac{1}{2}\delta_{1}$, 
and $\P := \lrb{ \otimes_{t \in \N} (\mu_S \otimes \mu_B) }\otimes \bmu_L  \in \sS$. Then for each $t \in \N$, we have that
\[
\E_{\P} \bsb{\rho(\xs,Y_t)} =  \frac{\xs + (1-\xs) + 1}{4}\;.
\]
On the other hand, $\P[X_1 = \xs] = \nu_1[\{\xs\}] = 0$ and for each $t \in \N$, we have that
\begin{multline*}
\P[X_{t+1} = \xs] = \P\lsb{\alpha_{t+1} (U_1, \dots, U_{t+1}, Z_1, \dots, Z_{t}) = \xs}
\\
\begin{aligned}
&= \sum_{z_1, \dots, z_{t} \in \{0,1\}^2} \P\lsb{\alpha_{t+1} (U_1, \dots, U_{t+1}, z_1, \dots, z_{t}) = \xs \cap Z_1 = z_1 \cap \dots \cap Z_{t} = z_{t}}
\\
&\le \sum_{z_1,\dots, z_{t} \in \{0,1\}^2} \P\lsb{\alpha_{t+1} (U_1, \dots, U_{t+1}, z_1, \dots, z_{t}) = \xs}
= \sum_{z_1,\dots, z_{t} \in \{0,1\}^2} \nu_{t+1,z_1,\dots, z_{t}} \lsb{\{\xs\}} = 0 \;,
\end{aligned}
\end{multline*}
which in turn gives
\begin{align*}
&
\E_{\P} \bsb{\rho(X_t,Y_t)} 
= \frac{\E_\P \lsb{\rho\brb{X_t, (0,\xs)}} + \E_\P \lsb{\rho\brb{X_t, (\xs,1)}} + \E_\P \lsb{\rho\brb{X_t, (0,1)}} + \E_\P \lsb{\rho\brb{X_t, (\xs,\xs)}}}{4}
\\
& \hspace{29.25865pt} = \frac{\xs \P_{X_t} \bsb{[0,\xs]} + (1-\xs) \P_{X_t} \bsb{[\xs,1]} + 1}{4}
= \frac{\xs \P_{X_t} \bsb{[0,\xs)} + (1-\xs) \P_{X_t} \bsb{(\xs,1]} + 1}{4}
\\
& \hspace{29.25865pt}
\le \frac{ \max(\xs, 1-\xs) + 1 }{4}
= \frac{\xs + (1-\xs) + 1 - \min(\xs, 1-\xs)}{4} \;.  
\end{align*}
So, if $T \in \N$ we get
\[
R^{\P}_T(\alpha) = \E_{\P}\lsb{\sum_{t=1}^T \rho(x^\star,Y_t) - \sum_{t=1}^T \rho(X_t,Y_t)} \ge  \frac{\min(x^\star, 1-x^\star)}{4} T \ge \frac{1-\e}{8} T.
\]
Since $\e$ was arbitrary, we get, for all $T \in \N$,
$
R_T^{\sS}(\alpha) = \sup_{\P \in \sS} R^{\P}_T(\alpha) \ge \sup_{\e \in (0, 1)} \frac{1-\e}{8} T = \nicefrac{T}{8}
$.
Since $\alpha$ was arbitrary we get, for each $T \in \N$,
$
\Rs_T = \inf_{\alpha \in \fA} R_T^{\sS}(\alpha) \ge \nicefrac{T}{8}
$.
\end{proof}

\section{Adversarial Setting: Linear Lower Bound Under Full Feedback \tccheck}
\label{sec:adversarial-appe}

In this section, we give a more detailed proof of \cref{thm:adv-lower} with a notation consistent with our abstract setting of sequential games.

\begin{theorem}[\cref{thm:adv-lower}, restated]
In the full-feedback adversarial (adv) setting, for all horizons $T \in \N$, we have
\[
    \Rs_T \ge \frac{1}{4} T \;.
\]
\end{theorem}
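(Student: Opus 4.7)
The plan is to formalize, within the abstract sequential-game framework of Appendix~\ref{s:model-appe}, the construction already sketched in Section~\ref{sec:adversarial}. Fix any player strategy $\alpha \in \sA(\sG)$ and any $\e \in \brb{0, \tfrac{1}{18}}$. I will build, by induction on $t$, a deterministic sequence $(s_t, b_t)_{t\in\N} \subset [0,1]^2$ together with nested intervals shrinking geometrically to a common point $\ps$. Then, considering the scenario $\P := \lrb{ \otimes_{t \in \N} \delta_{(s_t,b_t)} } \otimes \bmu_L$ (which corresponds to an adversary's behavior in $\sPadv$), I will show $R_T^\P(\alpha) \ge (1-3\e) T / 4$. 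Since $\alpha$ and $\e$ are arbitrary, letting $\e \downarrow 0$ yields $\Rs_T \ge T/4$.

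The first conceptual point is to identify the correct surrogate for the distribution $\nu_t$ appearing in the main-text argument. Because the adversary is oblivious and, in the full-feedback setting, the feedback function satisfies $\fhi\brb{p, (s,b)} = (s,b)$ independently of $p$, a fixed deterministic sequence of adversary actions $(s_1,b_1), \ldots, (s_{t-1}, b_{t-1})$ determines the feedback $Z_1, \ldots, Z_{t-1}$ as a deterministic sequence. Consequently, the law $\nu_t$ of $X_t = \alpha_t(U_1, \ldots, U_t, Z_1, \ldots, Z_{t-1})$ under $\P$ is simply the pushforward of $\mu_L^{\otimes t}$ under the (measurable) function $(u_1, \ldots, u_t) \mapsto \alpha_t\brb{u_1, \ldots, u_t, (s_1,b_1), \ldots, (s_{t-1}, b_{t-1})}$. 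Crucially, $\nu_t$ is well-defined before $(s_t, b_t)$ is chosen, so the recursive construction of the adversarial sequence is unambiguous.

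Next, I carry out the inductive construction verbatim from the main text, defining $(c_t, d_t, s_t, b_t)$ by cases depending on whether $\nu_{t}\bsb{\bsb{0, c_{t-1} + \e/3^{t-1}}} \le 1/2$ or not (with the obvious base case for $t=1$). Three properties are verified simultaneously along the induction: (i) $\nu_t\bsb{[s_t, b_t]} \le 1/2$, by choice of the branch; (ii) the nested interval $[c_t, c_t + \e/3^{t-1}]$ is always contained in $[s_t, b_t]$, so the limit $\ps := \lim_t c_t \in [0,1]$ belongs to $[s_t, b_t]$ for every $t$; (iii) $b_t - s_t \ge (1-3\e)/2$, by bounding the geometric tail $\sum_{k\ge t} 2\e/3^k$.

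The regret bound is then immediate: using the $\P$-independence of $(X_t, Y_t)$ across $t$ (adversary is deterministic, strategy's randomization is i.i.d.) and property (i),
\begin{align*}
    R_T^\P(\alpha)
&\ge
    \sum_{t=1}^T \E_\P \bsb{\rho(\ps, Y_t) - \rho(X_t, Y_t)}
=
    \sum_{t=1}^T \Brb{(b_t - s_t) - (b_t - s_t) \, \nu_t\bsb{[s_t, b_t]}}
\\
&\ge
    \sum_{t=1}^T (b_t-s_t)\brb{1 - \nu_t\bsb{[s_t, b_t]}}
\ge
    \frac{1-3\e}{2} \cdot \frac{1}{2} \cdot T
=
    \frac{1-3\e}{4} T \;.
\end{align*}
The main technical obstacle is the careful justification in the abstract framework that $\nu_t$ depends only on the past adversary actions and not on any randomness left unresolved at time $t$; this is precisely where the oblivious-adversary plus full-feedback structure is used, and it is the reason the argument does not require any measurability of a joint player/adversary construction.
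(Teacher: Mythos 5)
Your proposal is correct and takes essentially the same route as the paper's own proof: the same inductive construction of $(s_t,b_t)$ driven by the law $\nu_t$ of the posted price under the fixed deterministic history (well-defined precisely because the adversary is oblivious and the full feedback is determined by the past valuations), the same scenario $\P = \brb{\otimes_{t\in\N}\,\delta_{(s_t,b_t)}}\otimes\bmu_L$, and the same per-round bound $(b_t-s_t)\brb{1-\nu_t\bsb{[s_t,b_t]}}\ge \tfrac{1-3\e}{4}$ followed by letting $\e\downarrow 0$. The only cosmetic difference is that you plug $\nu_t\bsb{[s_t,b_t]}\le \tfrac12$ and $b_t-s_t\ge\tfrac{1-3\e}{2}$ directly into the regret sum, whereas the paper first bounds $\E_\P\bsb{\rho(X_t,Y_t)}$ by $\tfrac14+\tfrac{3\e}{4}$; both yield the same conclusion.
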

\begin{proof}
Recalling Appendix~\ref{s:biltrad-setting-appe}, the full-feedback adversarial (adv) bilateral trade setting is a game $\sG := (\cX, \cY, \cZ, \rho, \varphi, \sP)$, where $\cX = [0,1]$, $\cY = [0,1]^2$, $\cZ = [0,1]^2$, $\rho = \gft$, $\fhi\colon \brb{p, (s,b)} \mapsto (s,b)$, and $\sP = \sPadv$.
Let $\sS$ be the set of scenarios of $\sG$.
Fix a strategy $\alpha \in \fA$ and an $\e \in (0, 1/18)$. 
Define $\bar{\alpha}_1 := \alpha_1$, $\nu_1 := (\mu_L)_{\bar{\alpha}_1}$, and
\[
    \begin{cases}
        c_1:=\frac{1}{2}-\frac{3}{2}\varepsilon, \  d_1:=\frac{1}{2}-\frac{1}{2}\varepsilon, \  s_1 := 0,  \ b_1:=d_1,
    &
        \text{ if } \nu_{1}\bsb{\bsb{0,\frac{1}{2}-\frac{1}{2}\varepsilon}}\le\frac{1}{2} \;,
    \\
        c_1:=\frac{1}{2}+\frac{1}{2}\varepsilon, \  d_1:=\frac{1}{2}+\frac{3}{2}\varepsilon, \  s_1 := c_1, \  b_1:=1,
    &
        \text{ otherwise}.
    \end{cases}
\]
If $t \in \N$, suppose we defined $\bar{\alpha}_t, \nu_t, c_t, d_t, s_t, b_t$ and let
\[
\bar{\alpha}_{t+1} : [0,1]^{t+1} \to [0,1], (u_1,\dots,u_{t+1}) \mapsto \alpha_{t+1} \lrb{ u_1, \dots, u_{t+1}, (s_1,b_1), \dots, (s_t,b_t) }, 
\]
$\nu_{t+1} := \brb{\otimes_{s=1}^{t+1}\mu_L}_{\bar{\alpha}_{t+1}}$, and
\[
    \begin{cases}
        c_{t+1}:=c_{t}, \ d_{t+1}:=d_{t}-\frac{2\varepsilon}{3^{t}}, \ s_{t+1}:=0, \ b_{t+1}:=d_{t+1},
    &
        \text{if } \nu_{t+1}\bsb{ \bsb{0,c_t+\frac{\varepsilon}{3^{t}}} }\le\frac{1}{2} \;,
    \\
        c_{t+1}:=c_{t}+\frac{2\varepsilon}{3^{t}}, \ d_{t+1}:=d_{t}, \ s_{t+1} := c_{t+1}, \ b_{t+1}:=1,
    &
        \text{otherwise}.
    \end{cases}
\]
Then $(\bar{\alpha}_t)_{t \in \N}, (\nu_t)_{t \in \N}, (c_t)_{t \in \N}, (d_t)_{t \in \N}, (s_t)_{t \in \N}, (b_t)_{t \in \N}$ are well-defined by induction and satisfy:
\begin{itemize}
    \item For each $t \in \N$, $d_t-c_t = \frac{\varepsilon}{3^{t-1}}$.
    \item For each $t \in \N$, $c_1\le c_2 \le c_3 \le \dots \le c_t \le d_t \le \dots \le d_3 \le d_2 \le d_1$.
    \item $\exists! x^\star \in \bigcap_{t=1}^{\infty}[c_t,d_t]$.
    \item For each $t \in \N$, $\rho\lrb{x^\star, (s_t,b_t)} = b_t - s_t \ge \frac{1-3\e}{2}$.
    \item For each $t\in \N$, $\P\bsb{\alpha_t \brb{U_1,\dots, U_t, (s_1,b_1), \dots, (s_{t-1},b_{t-1})} \in [s_t, b_t] }\le\frac{1}{2}$.
\end{itemize}
Now, define $\P := \lrb{ \otimes_{t \in \N} \delta_{(s_t,b_t)}}  \otimes \bmu_L \in \sS$. Then, for each $t \in \N$,
\begin{align*}
    \E_{\P}[\rho \lrb{X_t, Y_t} ] &= \E_{\P}\Bsb{ \rho\Brb{\alpha_t \brb{U_1,\dots, U_t, (s_1,b_1), \dots, (s_{t-1},b_{t-1})} , (s_t,b_t)} }
    \\
    &\le \lrb{\frac{1}{2}+\frac{3\varepsilon}{2}} \P\bsb{\alpha_t \brb{U_1,\dots, U_t, (s_1,b_1), \dots, (s_{t-1},b_{t-1})} \in [s_t, b_t] } \le  \frac{1}{4}+\frac{3\varepsilon}{4} \;,
\end{align*}
and so, for each $T \in \N$
\begin{align*}
    R^{\P}_T(\alpha) & = \E_{\P}\lsb{\sum_{t=1}^T \rho(x^\star,Y_t) - \sum_{t=1}^T \rho(X_t,Y_t)} = \sum_{t=1}^T \rho(x^\star,(s_t,b_t)) -  \sum_{t=1}^T \E_{\P} \lsb{  \rho \lrb{X_t, Y_t} }
    \\
    &\ge
    \sum_{t=1}^T (b_t-s_t)\brb{1 - \P\bsb{\alpha_t \brb{U_1,\dots, U_t, (s_1,b_1), \dots, (s_{t-1},b_{t-1})} \in [s_t, b_t] } }
    \ge \frac{1-3\e}{4}T \;.
\end{align*}
Since $\e$ was arbitrary, we get, for all $T \in \N$,
$
R_T^{\sS}(\alpha) = \sup_{\P \in \sS} R^{\P}_T(\alpha) \ge \sup_{\e \in (0, 1/18)} \frac{1-3\e}{4}T = \frac{T}{4}
$.
Since $\alpha$ arbitrarity, we get, for each $T \in \N$,
$
\Rs_T = \inf_{\alpha \in \fA} R_T^{\sS}(\alpha) \ge \frac{T}{4}
$.
\end{proof}

\section{DKW Inequalities}

We begin this section by presenting the univariate DKW inequality as proved in \citep{massart1990tight}.

\begin{theorem}
\label{t:dkw-massart}
If $(\Omega,\cF,\P)$ is a probability space and $(X_n)_{n\in \N}$ is a $\P$-i.i.d.\ sequence of random variables, then, for any $\e>0$ and all $m\in \N$, it holds
\[
    \P \lsb{ \sup_{x\in\R} \labs{ \frac 1 m \sum_{k=1}^m \I\{ X_k \le x \} - \P[X_1 \le x] } > \e }
    \le
    2 \operatorname{exp}\brb{-2m\e^2} \;.
\]
\end{theorem}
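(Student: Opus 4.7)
The plan is a three-step reduction. First, I would reduce to the case where the $X_k$ are uniformly distributed on $[0,1]$: letting $F$ denote the CDF of $X_1$ and $F^{-1}$ its right-continuous quantile function, the probability integral transform shows that when $F$ is continuous, $\sup_{x \in \R} \babs{\frac{1}{m}\sum_k \I\{X_k \le x\} - F(x)}$ has the same distribution as $\sup_{u \in [0,1]} \babs{\hat G_m(u) - u}$, where $\hat G_m$ is the empirical CDF of $m$ i.i.d.\ uniform samples on $[0,1]$. For a general $F$ with atoms, a standard perturbation argument (convolve with a small absolutely continuous kernel and take limits) reduces to the continuous case, since atoms can only shrink the supremum and the relevant distributional convergence is preserved.

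Second, I would reduce the two-sided bound to a one-sided one. By the symmetry $U_k \leftrightarrow 1-U_k$, it suffices to establish $\P\bsb{\sup_{u \in [0,1]}(\hat G_m(u) - u) > \e} \le \operatorname{exp}(-2m\e^2)$; combining this with the symmetric lower-tail bound via a union bound then yields the claimed factor of $2$.

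Third, I would tackle the one-sided bound itself, which is where the real work sits. The key observation is that since $\hat G_m$ is piecewise constant with jumps at the order statistics $U_{(1)} \le \cdots \le U_{(m)}$, the event $\{\sup_u (\hat G_m(u) - u) > \e\}$ is equivalent to $\bigcup_{k=1}^m \{k/m - U_{(k)} > \e\}$. Since the joint law of the order statistics is explicitly Dirichlet, one can in principle compute the probability of this union, and the natural route is a combinatorial identity of ballot/cycle-lemma type that relates excursion probabilities of the uniform empirical process to Poisson-like tail quantities. The main obstacle will be avoiding a spurious polynomial prefactor in $m$: a direct Chernoff bound on each event $\{k/m - U_{(k)} > \e\}$ followed by a union bound over $k$ yields only a weaker bound of the form $C m\, e^{-2m\e^2}$, and shaving the $m$ factor down to the sharp constant $2$ is precisely the delicate step carried out by Massart. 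An alternative plan I would keep in reserve is to construct an exponential supermartingale adapted to the filtration generated by the order statistics and apply optional stopping at the first crossing of level $\e$; once the right martingale is identified, the sharp constant is typically automatic, but guessing the correct martingale (which is essentially the Cramér transform of the empirical process evaluated along its running maximum) is itself the hard step.
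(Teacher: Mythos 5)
This statement is not proved in the paper at all: Theorem~\ref{t:dkw-massart} is quoted verbatim as Massart's tight DKW inequality and justified by citation to \citep{massart1990tight}, so there is no internal argument to compare yours against. Your first two steps are standard and correct as reductions: the quantile/probability-integral transform (with the remark that atoms only decrease the supremum, so the uniform case is the worst case) reduces to the uniform empirical process, and the reflection $U\leftrightarrow 1-U$ plus a union bound reduces the two-sided statement to a one-sided tail bound. One caveat on the second step: the sharp one-sided bound $\P\bsb{\sup_u(\hat G_m(u)-u)>\e}\le \operatorname{exp}(-2m\e^2)$ is established by Massart only under the restriction $\operatorname{exp}(-2m\e^2)\le 1/2$; your doubling argument still delivers the unconditional two-sided claim because in the complementary regime the bound $2\operatorname{exp}(-2m\e^2)$ exceeds $1$ and is vacuous, but a complete write-up must make that case split explicit rather than assert the unconditional one-sided bound.

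The genuine gap is your third step, which is the entire content of the theorem. You correctly observe that a Chernoff-plus-union-bound over the order-statistic events $\{k/m-U_{(k)}>\e\}$ only gives $Cm\,e^{-2m\e^2}$, and you then defer the removal of the polynomial prefactor and the identification of the sharp constant to ``the delicate step carried out by Massart,'' offering two candidate strategies (a ballot-type identity for the Dirichlet-distributed order statistics, or an exponential supermartingale with optional stopping) without executing either. Naming the obstacle is not the same as overcoming it: as written, the proposal proves at best a bound with an extra factor of $m$ (or an unspecified constant), not the stated inequality. It is worth noting that for everything this paper actually uses the result for (the $(\mathrm{V})$ and $(\mathrm{VI})$ terms in the proof of Theorem~\ref{thm:upper_real}), any bound of the form $C\operatorname{exp}(-2m\e^2)$, or even $C\operatorname{exp}(-c m\e^2)$, would suffice up to constants, and such a bound does follow from routine sub-Gaussian/chaining arguments; but the statement as written asserts the tight constant $2$, and your outline does not establish it. To close the gap you would either have to carry out Massart's argument (or the original Dvoretzky--Kiefer--Wolfowitz proof with a worse constant, adjusting the statement), or simply cite \citep{massart1990tight} as the paper does.
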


We now present a bivariate DKW inequality which can be proved by applying the VC-type bound of \cite[Theorem~4.9; see also Lemmas~4.4, 4.5, and 4.11 for the explicit constants]{anthony2009neural}.

\begin{theorem}
\label{t:vc}
There exist positive constants $m_0 \le 1200$, $c_1 \le 13448$, $c_2 \ge 1/576$ such that, if $(\Omega,\cF,\P)$ is a probability space, $(X_n,Y_n)_{n\in \N}$ is a $\P$-i.i.d.\ sequence of two-dimensional random vectors, then, for any $\e>0$ and all $m\in \N$ such that $m \ge m_0/\e^2$, it holds
\[
    \P \lsb{ \sup_{x,y \in \R} \labs{ \frac 1 m \sum_{k=1}^m \I\{ X_k \le x, Y_k \le y \} - \P[X_1 \le x, Y_1 \le y] } > \e }
    \le
    c_1 \operatorname{exp}\brb{-c_2m\e^2} \;.
\]
\end{theorem}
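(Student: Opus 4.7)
The plan is to apply the VC-type uniform convergence bound of Anthony and Bartlett (Theorem~4.9 of their book, together with the explicit covering/packing estimates in Lemmas~4.4, 4.5, and 4.11) to the class of indicator functions of ``south-west quadrants'' in $\R^2$. The bivariate DKW inequality is precisely a uniform law of large numbers for this class, so the only real work is (i) computing the VC dimension of the class and (ii) tracking constants through the symmetrization argument.

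The key steps, in order, are the following. First, define the function class
\[
    \cF
:=
    \bcb{ f_{x,y} \colon \R^2 \to \{0,1\}, \ (u,v) \mapsto \I\{u \le x, \, v \le y\} \;:\; (x,y) \in \R^2 } \;,
\]
and observe that the quantity we wish to control is exactly $\sup_{f \in \cF} \babs{ \frac{1}{m}\sum_{k=1}^m f(X_k,Y_k) - \E\bsb{ f(X_1,Y_1) } }$. Second, show that the VC dimension of $\cF$ equals $2$: two generic points can be shattered by quadrants (pick the threshold $(x,y)$ between them, or above/below both); on the other hand, given any three points $p_1,p_2,p_3 \in \R^2$, at least one pair has the property that every quadrant containing both also contains the third (this follows by considering the coordinatewise maxima), so no triple can be shattered. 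Third, invoke Sauer's lemma to bound the shatter function of $\cF$ by $\brb{\tfrac{em}{2}}^2$ for $m\ge 2$, and then plug this into the symmetrization / ghost-sample argument of Anthony--Bartlett's Theorem~4.9 to obtain a bound of the form
\[
    \P \lsb{ \sup_{f \in \cF} \labs{ \hat\E_m f - \E f } > \e }
\le
    c_1 \operatorname{exp}\brb{ -c_2 m \e^2 }
\qquad \text{for all } m \ge m_0/\e^2,
\]
with constants that depend only on the VC dimension. Finally, specialize $d_{VC}=2$ into the explicit expressions provided by Lemmas~4.4, 4.5, and 4.11 of Anthony--Bartlett to verify $m_0 \le 1200$, $c_1 \le 13448$, $c_2 \ge 1/576$.

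The main obstacle is not conceptual but book-keeping: the Anthony--Bartlett proof goes through symmetrization (costing a factor of $2$ and requiring $m \ge 2/\e^2$), a ghost-sample swap (costing another factor of $2$), a union bound over the growth function (contributing a polynomial-in-$m$ factor absorbed into the exponential once $m\ge m_0/\e^2$), and a Hoeffding-style subgaussian bound on Rademacher sums (yielding the $\exp(-c_2 m \e^2)$ rate with a suboptimal $c_2$). At each step one must carefully combine the multiplicative constants and the condition on $m$ to make sure the polynomial prefactor is absorbed, and only then specialize to $d_{VC} = 2$ to extract the numerical values claimed in the statement. No new probabilistic idea is needed beyond what is already in Anthony--Bartlett; the proof is essentially a careful translation of their general result to the concrete VC class $\cF$.
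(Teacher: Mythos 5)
Your proposal is correct and follows essentially the same route as the paper, which likewise proves the bivariate DKW inequality by viewing the empirical process over the class of south-west quadrants and invoking the VC-type uniform convergence bound of \citep{anthony2009neural} (their Theorem~4.9 together with Lemmas~4.4, 4.5, and 4.11 for the explicit constants). The only difference is that you spell out the VC-dimension-$2$ computation and the constant bookkeeping that the paper leaves implicit in the citation.
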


\section*{Acknowledgments.}

This work was partially supported by: the ERC Advanced Grant 788893 AMDROMA ``Algorithmic and Mechanism Design Research in Online Markets'', the MIUR PRIN project ALGADIMAR ``Algorithms, Games, and Digital Markets'', the AI Interdisciplinary Institute ANITI (funded by the French ``Investing for the Future -- PIA3'' program under the Grant agreement n. ANR-19-PI3A-0004), the COST Action CA16228 ``European Network for Game Theory'' (GAMENET), the EU Horizon 2020 ICT-48 research and innovation action under grant agreement 951847, -- project ELISE (European Learning and Intelligent Systems Excellence).

\bibliographystyle{plainnat} 
\bibliography{references.bib}

\end{document}